\newcommand{\Ind}{\mathds{1}}
\newcommand{\ZZ}{\mathbb{Z}}
\newcommand{\RR}{\mathbb{R}}
\newcommand{\Var}{\mathrm{Var}}
\newcommand{\trace}{\mathop{\mathrm{tr}}}
\newcommand{\Vol}{\mathrm{Vol}}
\newcommand{\eps}{\varepsilon}
\newcommand{\Unif}{\mathrm{Uniform}}
\newcommand\indep{\protect\mathpalette{\protect\independenT}{\perp}}
\def\independenT#1#2{\mathrel{\rlap{$#1#2$}\mkern2mu{#1#2}}}
\def\mreals{\mathbb{R}}
\def\eqdef{\triangleq}
\def\EE{\mathbb{E}}
\newcommand{\m}{\mathcal}
\newcommand{\eig}{\mathrm{eig}}
\newcommand{\rcov}{r_\mathrm{cov}}
\newcommand{\reff}{r_\mathrm{eff}}
\newcommand{\bones}{\mathbf{1}}
\newcommand{\Uidd}{\hat{U}_{[d],\mathrm{ideal}}}
\newcommand{\Vidd}{\hat{V}_{[d],\mathrm{ideal}}}
\newcommand{\OL}{\mathrm{OL}}
\newcommand{\eid}{e_\mathrm{ideal}}
\NewDocumentCommand{\DIP}{e{^_}}{D^{\mathrm{IP}\IfValueT{#1}{,#1}}_{\IfValueT{#2}{#2}}}
\NewDocumentCommand{\DMM}{e{^_}}{D^{\mathrm{MM}\IfValueT{#1}{,#1}}_{\IfValueT{#2}{#2}}}
\def\Cov{\mathrm{Cov}}
\DeclareMathOperator*{\argmin}{\arg\!\min}
\newtheorem{theorem}{Theorem}
\newtheorem{proposition}{Proposition}
\newtheorem{remark}{Remark}
\newtheorem{definition}{Definition}
\newtheorem{lemma}{Lemma}
\newenvironment{proof}[1][Proof]{\noindent\textbf{#1.} }{\ \rule{0.5em}{0.5em}}
\def\eqdef{\triangleq}
\begin{document}
\title{Optimal Quantization for Matrix Multiplication}

\author{Or Ordentlich 
and	Yury Polyanskiy 
\thanks{O. Ordentlich is with the 
Hebrew University of Jerusalem, Israel (\texttt{or.ordentlich@mail.huji.ac.il}).  Y. Polyanskiy is with the MIT,
USA (\texttt{yp@mit.edu}). The work of OO was supported by the Israel Science 
Foundation (ISF),
grant No. 1641/21. The work of YP was supported in part by the MIT-IBM Watson AI Lab and by the National Science
    Foundation under Grant No CCF-2131115. }
}

\date{}

\maketitle
\begin{abstract}
Recent work in machine learning community proposed multiple methods for performing lossy
compression (quantization) of large matrices. This quantization is important for accelerating
 matrix multiplication (main component of large language models), which is often bottlenecked by
 the speed of loading these matrices from memory. Unlike classical vector quantization and rate-distortion theory, the goal of these
new compression algorithms is to be able to approximate not the matrices themselves, but
their matrix product. Specifically, given a pair of real matrices $A,B$ an encoder (compressor) is
applied to each of them independently producing descriptions with $R$ bits per entry. These
representations subsequently are used by the decoder to estimate matrix product $A^\top B$. In
this work, we provide a non-asymptotic lower bound on the mean squared error  of this
approximation (as a function of rate $R$) for the case of matrices $A,B$ with iid Gaussian
entries.  Algorithmically, we construct a universal quantizer based on nested lattices with an
explicit guarantee of approximation error for any (non-random) pair of matrices $A$, $B$ in terms
of only Frobenius norms $\|\bar{A}\|_F, \|\bar{B}\|_F$ and $\|\bar{A}^\top \bar{B}\|_F$, where
$\bar{A},\bar{B}$ are versions of $A,B$ with zero-centered columns, respectively. For iid Gaussian matrices our
quantizer achieves the lower bound and is, thus, asymptotically optimal. A practical low-complexity
version of our quantizer achieves performance quite close to optimal. In addition, we derive
rate-distortion function for matrix multiplication of iid Gaussian matrices, which exhibits an
interesting phase-transition at $R\approx 0.906$ bit/entry, showing necessity of
Johnson-Lindestrauss dimensionality reduction (sketching) in the low-rate regime.
\end{abstract}

\tableofcontents

\section{Introduction and main results}



Matrix multiplication is a key component of many numerical algorithms, and is often the dominant factor in the runtime of a program. 
With the surge of deep neural nets (DNNs) and large language models (LLMs), finding more efficient ways to perform matrix multiplication 
have become one of the most pressing challenges. Classical work in this field focused
on minimizing the number of required
operations~\cite{strassen1969gaussian,fawzi2022discovering,duan2023faster,WXXZ24}. Specifics of
contemporary problems, however, require rethinking this
classical approach to matrix multiplication. First, in machine learning applications requirements for
precision of computing matrix products are quite lax. Second, modern computational hardware is
often bottlenecked by the memory bandwidth. A natural solution explored by many researchers is to apply lossy
compression to matrices leading to deterioration in precision  but improvement in the amount of data transferred between memory and
computation cores. 

We formalize  this problem as follows. Consider a pair of matrices $A\in\RR^{n\times a}$ and $B \in \RR^{n\times
b}$ which need to be described using $R$
bits per entry (using separate compressors), such that a decoder that obtains bit descriptions of
both matrices can estimate
$\widehat{A^\top  B}$. The metric for gauging quality of approximation that we will use is the
squared error between $ab$ entries of $\widehat{A^\top  B}$ and $A^\top B$. Note that unlike 
classical vector quantization, we are requiring compression algorithms to be tailored to the
special task of matrix multiplication. As a practical motivation, in Section~\ref{sec:i_pract} below we argue that reducing
$R$ down to a few bits/entry is necessary for LLMs to fully leverage modern matrix multiplication hardware.

Our main result shows existence of universal quantizers (based on lattices) which compress $A$ and
$B$ to $R$ bits/entry and come with explicit precision guarantees. Furthermore, we also show that 
these guarantees cannot be generally improved by proving a matching lower bound for the case of
matrices $A$ and $B$ with iid Gaussian entries. We emphasize, though, that quantizers 
\textit{are} universal and do not require Gaussian matrices. Indeed, following theoretical ideas
laid out in this paper a practical implementation of quantizers~\cite{savkin2025nestquant}
established state-of-the-art performance for quantizing weights and activations of various LLMs
ranging from 3B to 70B parameters.

To introduce our main results, let us define the function 
\begin{align}
\Gamma(R)=    \begin{cases}
1-\left(1-(2\cdot2^{-2R^*}-2^{-4R^*})\right)\frac{R}{R^*}   & R<R^* \\
2\cdot 2^{-2R}-2^{-4R}    & R\geq R^*
\end{cases}.
\label{eq:barphidef}
\end{align}
where $R^*\approx 0.906$ is the solution to the fixed-point equation
\begin{align}
R = \frac{1}{2}\log_2(1+4R\ln 2)
\end{align}
It will turn out that $\Gamma(R)$ is distortion-rate function for the matrix multiplication
of iid Gaussian matrices.

We say that a matrix $A\in \RR^{n\times m}$ has ``$M$-bounded entries'' if $|a_{i,j}|\in \{0\}\cup
[M^{-1},M]$ for all $i\in[n],j\in[m]$. Our results require the matrices $A$ and $B$ to have $M$-bounded entries, with $M=e^{o(n)}$. To be more concrete, throughout this paper we take $M=n^{10}2^{2000}$. In particular, this choice of $M$ guarantees that matrices represented in FP64 format have bounded entries. Furthermore, Gaussian iid matrices with finite variance and large $n$ have bounded entries with high probability. This extremely mild condition guarantees that we
can describe the $\ell_2$ norm of each column of $A,B$ with small multiplicative error using $o(n)$
bits (see Section~\ref{sec:arbitraryMM}). Let $\bones=(1,\ldots,1)^\top\in\RR^n$ be the all-ones vector. For a column vector $x\in\RR^n$ we denote by $\bar{x}=x-\left(\frac{1}{n}\bones^\top x\right)\bones$ its zero-centered version. For a matrix $A=[a_1|\cdots|a_a]\in\RR^{n\times a}$ we denote $\bar{A}=[\bar{a}_1|\ldots|\bar{a}_a]$.  Our first result is the following.

\begin{theorem}
For any $\eps>0$ and sufficiently large $n$, there exist randomized encoders $f_1:\RR^{n\times a}\to[2^{na R}]$, $f_2:\RR^{n\times b}\to[2^{nbR}]$, and decoders $g: [2^{na R}]\times [2^{nbR}]\to \RR^{a\times b}$ and $g_{1-\mathrm{sided}}:[2^{na R}]\times \RR^{n\times b}\to \RR^{a\times b}$ such that for any $A\in\RR^{n\times a}$ and $B\in\RR^{n\times b}$ with bounded entries we have
\begin{enumerate}
\item Let $C=A^\top B$, $\tilde{C}=\bar{A}^\top \bar{B}$, and $\hat{C}= g(f_1(A),f_2(B))$. Then, for any $i\in[a],j\in[b]$ we have
\begin{align}
\EE(C_{i,j}-\hat{C}_{i,j})^2\leq  \tilde{C}_{i,j}^2\cdot\left( \Gamma^2(R)+\eps\right)+\frac{\|\bar{a}_i\|^2 \|\bar{b}_j\|^2}{n}\left(\Gamma(R)-\Gamma^2(R)+\eps\right)+n^{-8},   
\end{align}
and, in particular, 
\begin{align}
\EE\|A^\top  B- g(f_1(A),f_2(B))\|_F^2&< \|\bar{A}^\top \bar{B}\|_F^2\cdot\left(\Gamma^2(R)+\eps\right)+\frac{\|\bar{A}\|^2_F \|\bar{B}\|_F^2}{n}\left(\Gamma(R)-\Gamma^2(R)+\eps\right)+a\cdot b\cdot n^{-8}.
\label{eq:generalMatMulUB}
\end{align}
\item  Let $C=A^\top B$, $\tilde{C}=\bar{A}^\top \bar{B}$, and $\hat{C}= g_{1-\mathrm{sided}}(f_1(A),B)$. Then, for any $i\in[a],j\in[b]$ we have
\begin{align}
\EE(C_{i,j}-\hat{C}_{i,j})^2\leq  \tilde{C}_{i,j}^2\cdot \left( 2^{-4R}+\eps\right)+\frac{\|\bar{a}_i\|^2 \|\bar{b}_j\|^2}{n}\left(2^{-2R}-2^{-4R}+\eps\right)+ n^{-8}.   
\label{eq:OneSIdedOpt2}
\end{align}
and, in particular, 
\begin{align}
\EE\|A^\top  B- g_{1-\mathrm{sided}}(f_1(A),B)\|_F^2&< \|\bar{A}^\top \bar{B}\|_F^2\cdot \left( 2^{-4R}+\eps\right)+\frac{\|\bar{A}\|^2_F \|\bar{B}\|_F^2}{n}\left(2^{-2R}-2^{-4R}+\eps\right)+a\cdot b\cdot n^{-8}.
\label{eq:OneSIdedOpt1}
\end{align}
\end{enumerate}
\label{thm:generalMatMul}
\end{theorem}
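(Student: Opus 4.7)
My plan is to quantize the columns of $A$ and $B$ independently via dithered nested-lattice codes, after a cheap rate-$o(1)$ preprocessing step. For each column I would transmit the empirical mean $\mu_i=\tfrac{1}{n}\bones^\top a_i$ and the zero-centered norm $\|\bar a_i\|$ to multiplicative precision $1+n^{-c}$. Under the $M$-bounded hypothesis with $M=n^{10}2^{2000}$, both scalars lie in $[M^{-1},M]\cup\{0\}$, so $O(\log n)=o(n)$ bits per scalar suffice, and the remaining multiplicative slippage is absorbed into the $n^{-8}$ slack. The identity $a_i^\top b_j=\bar a_i^\top \bar b_j + n\mu_i\nu_j$ reduces the task to approximating $\tilde C_{i,j}=\bar a_i^\top \bar b_j$, since the mean correction is recomputed at the decoder.

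The core object is a rate-$R$ nested-lattice quantizer realizing the classical Wyner-like additive-noise representation. Specifically, I would invoke the existence (Zamir--Shamai--Erez; Zamir's monograph) of appropriately scaled Rogers-good nested lattices $\Lambda_c\subset\Lambda_f$ with $|\Lambda_f/\Lambda_c|=2^{nR}$, a uniform common-randomness dither, and a decoder-side Wiener scaling, so that for any zero-mean $\bar a\in\RR^n$ with per-entry variance $\sigma^2=\|\bar a\|^2/n$ the decoder's reconstruction $\tilde a$ admits
\begin{align*}
\tilde a \;=\; (1-D)\,\bar a + n_a, \qquad D=2^{-2R},
\end{align*}
with $n_a$ zero-mean, independent of $\bar a$, and $\Cov(n_a)=D(1-D)\sigma^2\bI$ up to $\eps$. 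The Crypto Lemma supplies the source-independent noise, and a coarse lattice slightly larger than $\sigma^2$ pushes overload events into the $n^{-8}$ slack. I would apply this construction independently to each column of $A$ and $B$, using the transmitted norms to scale the lattices.

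For Part 1 ($R\geq R^*$), the decoder outputs $\hat C_{i,j}=\tilde a_i^\top \tilde b_j + n\mu_i\nu_j$. Expanding $\tilde a_i^\top \tilde b_j=(1-D)^2\tilde C_{i,j}+(1-D)\bar a_i^\top n_b+(1-D) n_a^\top \bar b_j+n_a^\top n_b$, the mean over dithers equals $(1-D)^2\tilde C_{i,j}=(1-\Gamma(R))\tilde C_{i,j}$, giving squared bias $\Gamma^2(R)\tilde C_{i,j}^2$; the three mutually uncorrelated zero-mean noise terms contribute total variance $[2D(1-D)^3+D^2(1-D)^2]\tfrac{\|\bar a_i\|^2\|\bar b_j\|^2}{n}=(\Gamma(R)-\Gamma^2(R))\tfrac{\|\bar a_i\|^2\|\bar b_j\|^2}{n}$, matching the claim. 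Part 2 uses $\hat C_{i,j}=\tilde a_i^\top \bar b_j + n\mu_i\nu_j$: the squared bias becomes $D^2\tilde C_{i,j}^2=2^{-4R}\tilde C_{i,j}^2$ and the lone noise term $n_a^\top \bar b_j$ contributes variance $D(1-D)\tfrac{\|\bar a_i\|^2\|\bar b_j\|^2}{n}=(2^{-2R}-2^{-4R})\tfrac{\|\bar a_i\|^2\|\bar b_j\|^2}{n}$, yielding \eqref{eq:OneSIdedOpt2}. For $R<R^*$ I would time-share: apply a common-randomness Haar orthogonal rotation jointly to $\bar a_i,\bar b_j$ to equalize per-coordinate energies, quantize only the first $\alpha n$ rotated coordinates at rate $R^*$ (with $\alpha=R/R^*$), and set the rest to zero. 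A bias/variance decomposition on the $\alpha n$-dimensional subvectors, combined with the identity $\Gamma(R)=1-\alpha(1-\Gamma(R^*))$, yields the claimed bound, with a residual $\tilde C_{i,j}^2/n$ fluctuation from the rotation randomness absorbed into $\eps$ for large $n$. The main technical obstacle is this \emph{universal} additive-noise representation --- producing a single nested-lattice code that works for arbitrary deterministic inputs with the stated mean and covariance, uniformly over all column norms in $[M^{-1},M]$ --- which is handled by the combination of Rogers-good nested lattices in high dimensions, the Crypto Lemma, and careful control of overload events.
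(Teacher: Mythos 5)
Your overall architecture matches the paper's: $o(n)$-rate transmission of column means and norms, dithered nested-lattice quantization with the Crypto Lemma, MMSE/Wiener scaling at the decoder, a bias--variance decomposition of $\tilde a_i^\top \tilde b_j$, and time-sharing via rotation and coordinate projection below $R^*$. However, there are two genuine gaps. The first is your assertion that Rogers-good nested lattices yield $n_a$ independent of the source with $\Cov(n_a)=D(1-D)\sigma^2 I$ ``up to $\eps$.'' The Crypto Lemma gives independence and $\Cov(n_a)\propto R(\Lambda_f)$, the exact covariance of the Voronoi-uniform dither, but your $D^2(1-D)^2$ variance contribution from $n_a^\top n_b$ requires $\EE[(Z_1^\top Z_2)^2]=\trace(R(\Lambda_f)^2)=\|R(\Lambda_f)\|_F^2\leq (1+o(1))\,d\,\sigma^4(\Lambda_f)$, i.e.\ that the error spectrum is nearly white \emph{in Frobenius norm}. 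This is not implied by Rogers goodness or by $\sigma^2(\Lambda_f)\to D$ (a few large eigenvalues can leave the normalized trace intact while inflating $\sum_i\lambda_i^2$), and it is precisely the new lattice-existence ingredient of the paper (item~\ref{itemL1:FrobCov} of Theorem~\ref{thm:good_lattices}), whose proof needs the recent polynomial covering-density bound for random lattices. Citing Zamir--Shamai--Erez does not supply this step.

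The second gap is that you invoke the common Haar rotation only in the time-sharing branch $R<R^*$. The theorem is uniform over \emph{arbitrary} deterministic matrices, and without the rotation the cross-term variance is $\EE[(\bar a_i^\top n_b)^2]\propto \bar a_i^\top R(\Lambda_f)\,\bar a_i$, which for an adversarial column aligned with the top eigenvector of $R(\Lambda_f)$ is only controlled by $\|R(\Lambda_f)\|_2\,\|\bar a_i\|^2$; the best available bound on $\|R(\Lambda_f)\|_2$ for good lattices carries an extra $\log d$ factor over $\sigma^2(\Lambda_f)$. The paper rotates both matrices by the same random $S$ in \emph{all} regimes precisely so that $U_i$ is uniform on the sphere and only $\trace(R(\Lambda_f))=d\,\sigma^2(\Lambda_f)$ enters. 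Your scheme as stated would therefore miss the claimed constants for worst-case inputs at $R\geq R^*$. (Your treatment of overload via a slightly inflated coarse lattice, and of the projection fluctuations in the time-sharing branch, is consistent with the paper's Proposition~\ref{prop:fromidealtoreal} and Appendix~\ref{appendix:projections}, modulo the details you defer.)
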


Note that two parts simply describe the cases, where both or only one matrix needs to be
compressed.\footnote{corresponding to the case of ``weights and attention'' quantization and
``weights-only'' quantization in LLMs.} Our scheme operates by compressing each column
of $A$ and $B$ using the same (randomized) nested-lattice quantizer $f_{\text{col}}:\RR^n \to [2^{nR}]$, which is applied
repeatedly to every column, whereas the decoder $g$ simply estimates each column to get matrices $\hat A$
and $\hat B$ and computes their scaled matrix product; see Figs.~\ref{fig:encoders}
and~\ref{fig:decoder}. The parameter $\kappa$ shown in Figures is used by the encoders for
time-sharing/sparsification and is set to $\kappa=\min\{R/R^*,1\}$ in the Theorem.
In particular, for $R<R^*$ a fraction $1-({R\over R^*})$ of coordinates are
ignored (mapped to 0), corresponding to $\kappa= R/R^*$. As we will see shortly this
dimensionality reduction (\`a la Johnson-Lindenstrauss) turns out to be necessary to achieve asymptotically optimal distortion.

To get a feel for Theorem~\ref{thm:generalMatMul} let us consider independent matrices $A$ and $B$ drawn
iid Gaussian $\mathcal{N}(0,\sigma^2)$. For large $n$, such matrices have bounded entries and
are also arbitrarily close to their centered version, with high probability. For the second part,
where only $A$ needs to be compressed, note that if $B$ is the $n\times n$ identity matrix, the
right hand sides of~\eqref{eq:OneSIdedOpt2} and~\eqref{eq:OneSIdedOpt1} read $\sigma^2
(2^{-2R}+2\eps)$ and  $na \sigma^2 (2^{-2R}+2\eps)$, respectively, which are optimal, as they
correspond to the Gaussian rate-distortion function. For the first part of the Theorem, we have that $\EE\|A^\top  B\|_F^2=\frac{\EE{\|A\|^2_F
\|B\|_F^2}}{n}=\sigma^4\cdot nab$ in this case. Consequently, the two terms corresponding to $\Gamma^2(R)$ in~\eqref{eq:generalMatMulUB} cancel out, and Theorem~\ref{thm:generalMatMul} shows estimate
$$ \EE[\|A^\top B - \widehat{A^\top B}\|_F^2] \le \sigma^4 nab (\Gamma(R) +\epsilon)\,.$$
It turns out that this is the best possible approximation (at this compression rate), as shown in
our next result.
\begin{theorem}
Let $A\in\RR^{n\times a}$ and $B\in\RR^{n\times b}$ be independent random matrices, with iid $\m{N}(0,\sigma^2)$ entries. 
\begin{enumerate}
\item For any $n\geq 1$, and any pair of rate-$R$ encoders $f_1:\RR^{n\times a}\to[2^{na R}]$, $f_2:\RR^{n\times b}\to[2^{nb R}]$ and decoder $g:[2^{naR}]\times [2^{nbR}]\to\RR^{a\times b}$, we have
\begin{align}
\EE\|A^\top  B-g(f_1(A),f_2(B))\|_F^2\geq \sigma^4 \cdot nab \cdot \Gamma(R). 
\label{eq:MMdistortionLowerBoundMain}
\end{align}
\item For any $n\geq 1$, and any rate-$R$ encoder $f:\RR^{n\times a}\to[2^{na R}]$ and decoder $g:[2^{naR}]\times \RR^{n\times b}\to\RR^{a\times b}$, we have
\begin{align}
\EE\|A^\top  B-g(f(A),B)\|_F^2\geq \sigma^4 \cdot nab \cdot 2^{-2R}. 
\label{eq:MMdistortionLowerBoundMainOneSided}
\end{align}
\end{enumerate}
\label{thm:GaussMatrixLowerBound}
\end{theorem}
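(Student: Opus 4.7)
The plan is to prove the one-sided part via a direct Gaussian rate-distortion argument, and the two-sided part by reducing to a finite-dimensional optimization over the $n\times n$ error covariance matrices $\Sigma_{E_A}, \Sigma_{E_B}$. The phase transition at $R^*$ will emerge naturally from the KKT conditions of that optimization.

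For part~2 (the one-sided bound), since $A \indep B$, the optimal decoder given $(f(A), B)$ is $\hat{A}^\top B$, where $\hat{A} = \EE[A \mid f(A)]$. Setting $E_A = A - \hat{A}$ and using $E_A \indep B$ together with $\EE[BB^\top] = b\sigma^2 I_n$, one computes
\[
\EE\|E_A^\top B\|_F^2 \;=\; \trace\bigl(\EE[BB^\top]\,\EE[E_A E_A^\top]\bigr) \;=\; b\sigma^2\,\trace(\Sigma_{E_A}).
\]
The standard Gaussian rate-distortion converse, applied to the $na$ iid $\m{N}(0,\sigma^2)$ entries of $A$ compressed into at most $naR$ bits, then gives $\trace(\Sigma_{E_A}) \geq na\sigma^2 2^{-2R}$, completing~\eqref{eq:MMdistortionLowerBoundMainOneSided}.

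For part~1 (the two-sided bound), I would first observe that the optimal decoder is $\hat{A}^\top \hat{B}$, where $\hat{A} = \EE[A \mid U]$ and $\hat{B} = \EE[B \mid V]$---the factorization uses $A \indep B$. Expanding $\|A^\top B - \hat{A}^\top \hat{B}\|_F^2$ and checking that all cross-terms vanish in expectation (via the MMSE orthogonality $\EE[E_A \mid U] = \EE[E_B \mid V] = 0$ together with the independence $(\hat{A}, E_A) \indep (\hat{B}, E_B)$) leads to the clean decomposition
\[
\EE\|A^\top B - \hat{A}^\top \hat{B}\|_F^2 \;=\; a\sigma^2\,\trace(\Sigma_{E_B}) + b\sigma^2\,\trace(\Sigma_{E_A}) - \trace(\Sigma_{E_A}\Sigma_{E_B}).
\]
Next I would extract two constraints on each covariance from the rate budget: the operator-norm bound $\Sigma_{E_A} \preceq a\sigma^2 I_n$ is immediate from $\Sigma_A = \Sigma_{\hat{A}} + \Sigma_{E_A}$, while the determinant bound $\det \Sigma_{E_A} \geq (a\sigma^2 2^{-2R})^n$ follows from $I(A;U) \leq naR \log 2$ combined with the Gaussian maximum-entropy inequality and Fischer/Minkowski-type determinant inequalities used to pass from the $na \times na$ full conditional covariance of $A$ down to the summed $n \times n$ matrix $\Sigma_{E_A}$. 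The symmetric bounds hold for $B$.

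The remaining task is to minimize the MSE formula over all PSD pairs $(\Sigma_{E_A}, \Sigma_{E_B})$ satisfying these two constraints. By von Neumann's trace inequality the minimizer is simultaneously diagonalizable, and by the $A \leftrightarrow B$ symmetry the normalized eigenvalues $x_i := \lambda_i(\Sigma_{E_A})/(a\sigma^2) = \lambda_i(\Sigma_{E_B})/(b\sigma^2) \in (0,1]$ satisfy $\sum_i \ln x_i \geq -2Rn\ln 2$ and the normalized MSE reduces to $\frac{1}{n}\sum_i [1-(1-x_i)^2]$. A Lagrange analysis with the potentially active box constraint $x_i \leq 1$ forces each optimal $x_i$ to lie in a two-element set: either all $x_i$ equal to a common interior value $2^{-2R}$ (when $R \geq R^*$), or else split between the boundary value $1$ and the interior value $2^{-2R^*}$, where $R^*$ arises precisely from the tangency condition $2^{2R^*} = 1 + 4R^* \ln 2$ produced by the stationarity equation of the Lagrangian. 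Evaluating both cases recovers $\Gamma(R)$: for $R\geq R^*$ one obtains $2\cdot 2^{-2R} - 2^{-4R}$, and for $R<R^*$ a fraction $R/R^*$ of the $x_i$ sit at $2^{-2R^*}$ with the rest equal to $1$, giving the convex-envelope value. The main obstacle will be this last step---carrying out the KKT analysis with the box constraint active and verifying that the two-value structure produces exactly $\Gamma(R)$ with its characteristic tangent-line shape below $R^*$.
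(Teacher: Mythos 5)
Your proposal follows essentially the same route as the paper's proof: the optimal decoder $\hat A^\top \hat B$ with the trace decomposition of the MSE, a log-determinant (Shannon-lower-bound type) constraint on the error covariances obtained from the rate via Gaussian maximum entropy, and minimization of $\frac{1}{n}\sum_i \phi(x_i)$ subject to $\prod_i x_i \geq 2^{-2nR}$, $x_i\in(0,1]$; the one-sided part is identical to the paper's oracle argument. Two cautions on execution. First, ``by the $A\leftrightarrow B$ symmetry the normalized eigenvalues coincide'' is not a proof: the objective is bilinear (hence non-convex) in $(\Sigma_{E_A},\Sigma_{E_B})$, so symmetry of the feasible set does not force a symmetric minimizer; the paper obtains the reduction from $\trace\bigl((I-\Sigma_1)(I-\Sigma_2)\bigr)\leq \max_i \trace\bigl((I-\Sigma_i)^2\bigr)$ (Cauchy--Schwarz in the trace inner product), which in your eigenvalue coordinates is the one-liner $\sum_i u_i v_i \leq \max(\|u\|^2,\|v\|^2)$ with $u_i=1-x_i$, and you already have von Neumann in hand to get that far. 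Second, instead of a KKT analysis of the non-convex single-letter problem---where you would also have to rule out stationary points mixing two distinct interior roots of $2x(1-x)=\mu$---the paper substitutes $x_i=2^{-2r_i}$ and applies Jensen's inequality with the convex envelope $\Gamma$ of $r\mapsto \phi(2^{-2r})$, which gives the bound $\geq\Gamma(R)$ for every finite $n$ in one step; your tangency condition $2^{2R^*}=1+4R^*\ln 2$ is exactly the envelope's tangent-line condition, so the two analyses land in the same place.
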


In other words, the encoders $f_1,f_2,g$ from Theorem~\ref{thm:generalMatMul} attain the lower
bound from Theorem~\ref{thm:GaussMatrixLowerBound}, and are therefore asymptotically optimal for
this class of matrices. 

\medskip

We also show a simpler to use bound, based on our compression scheme applied with no ``MMSE scaling'' and no time-sharing - that is, with $\alpha=\kappa=1$ in Figures~\ref{fig:encoders},~\ref{fig:decoder}. The resulting bound does not meet the lower bound of Theorem~\ref{thm:GaussMatrixLowerBound} for Gaussian iid matrices. However, for moderate $R$ it is never much worse than the bound from Theorem~\ref{thm:generalMatMul}.
For some matrices $A,B$ it is significantly
better than the bound from Theorem~\ref{thm:generalMatMul}. 
\begin{theorem}
For any $\eps>0$ and sufficiently large $n$, there exist randomized encoders $f_1:\RR^{n\times a}\to[2^{na R}]$, $f_2:\RR^{n\times b}\to[2^{nbR}]$, and decoders $g: [2^{na R}]\times [2^{nbR}]\to \RR^{a\times b}$ and $g_{1-\mathrm{sided}}:[2^{na R}]\times \RR^{n\times b}\to \RR^{a\times b}$ such that for any $A\in\RR^{n\times a}$ and $B\in\RR^{n\times b}$ with bounded entries we have
\begin{enumerate}
\item Let $C=A^\top B$, and $\hat{C}= g(f_1(A),f_2(B))$. Then, for any $i\in[a],j\in[b]$ we have
\begin{align}\label{eq:nommse_innerp}
\EE(C_{i,j}-\hat{C}_{i,j})^2\leq  \frac{\|\bar{a}_i\|^2 \|\bar{b}_j\|^2}{n}\left(\frac{2\cdot 2^{2R}-1}{(2^{2R}-1)^2}+\eps\right)+n^{-8},   
\end{align}
and in particular
\begin{align}
\EE\|A^\top  B- g(f_1(A),f_2(B))\|_F^2&< \frac{\|\bar{A}\|^2_F \|\bar{B}\|_F^2}{n}\left(\frac{2\cdot 2^{2R}-1}{(2^{2R}-1)^2}+\eps\right)+a\cdot b\cdot n^{-8}.
\end{align}
\item  Let $C=A^\top B$, and $\hat{C}= g_{1-\mathrm{sided}}(f_1(A),B)$. Then, for any $i\in[a],j\in[b]$ we have
\begin{align}
\EE(C_{i,j}-\hat{C}_{i,j})^2\leq  \frac{\|\bar{a}_i\|^2 \|\bar{b}_j\|^2}{n}\left(\frac{1}{2^{2R}-1}+\eps\right)+n^{-8}.   
\end{align}
and in particular
\begin{align}
\EE\|A^\top  B- g_{1-\mathrm{sided}}(f_1(A),B)\|_F^2&< \frac{\|\bar{A}\|^2_F \|\bar{B}\|_F^2}{n}\left(\frac{1}{2^{2R}-1}+\eps\right)+a\cdot b\cdot n^{-8}.
\label{eq:thm3_a1}
\end{align}
\end{enumerate}
\label{thm:generalMatMulNoMMSE}
\end{theorem}

Note that the term $\|\bar{A}^\top  \bar{B}\|_F^2$ does not appear at all in
Theorem~\ref{thm:generalMatMulNoMMSE}, and whenever $\|\bar{A}^\top  \bar{B}\|^2\gg \frac{\|\bar{A}\|_F^2 \|\bar{B}\|_F^2}{n}$ the error in Theorem~\ref{thm:generalMatMulNoMMSE} is significantly smaller than the error in Theorem~\ref{thm:generalMatMul}.

\begin{figure}
    \begin{center}
    
\begin{tikzpicture}[node distance=1cm, thick, >=Stealth]
    \tikzstyle{block} = [rectangle, draw, minimum width=0.9cm, minimum height=1cm]
     \tikzstyle{blocksmall} = [rectangle, draw, minimum width=0.5cm, minimum height=1cm]
    \tikzstyle{sum} = [circle, draw, inner sep=0mm, minimum size=6mm]
    
    \node (input) [label] {$a_i$};
    \node (sum1) [sum, right=of input] {$-$};
    \node (sum2) [sum, right=of sum1] {$\times$};
    \node (block1) [blocksmall, right=of sum2, xshift=-0.3cm] {$S$};
    \node (block1b) [blocksmall, right=of block1] {$\m{P}_{\kappa n}$};
    \node (sum3) [sum, right=of block1b] {$+$};
    \node (block2) [block, right=of sum3, xshift=-0.3cm] {$Q_{\Lambda_f}(\cdot)$};
    \node (block3) [block, right=of block2, xshift=-0.3cm] {$\bmod Q_{\Lambda_c}$};
    \node (output) [label, right=of block3] {$(W_{a_i},\textcolor{red}{\widehat{\frac{1}{n}\bones^\top a_i}},\widehat{\|\bar{a}_i\|})$};

    \node (input1) [label, below=of sum1] {\textcolor{red}{$\left(\frac{1}{n}\bones^\top a_i\right)\bones$}};
    \node (input2) [label, below=of sum2] {$\frac{\sqrt{n}}{\|\bar{a}_i\|}$};
    \node (input3) [label, below=of sum3] {$Z_1$};

    \draw[->] (input) --(sum1);
    \draw[->] (sum1) -- (sum2) node[anchor=south,inner sep=2pt,midway] {$\bar{a}_i$};
    \draw[->] (sum2) -- (block1);
    \draw[->] (block1) -- (block1b) node[anchor=south,inner sep=2pt,midway] {$U_i$};
    \draw[->] (block1b) -- (sum3) node[anchor=south,inner sep=2pt,midway] {$U_{i,[\kappa n]}$};
    \draw[->] (sum3) -- (block2);
    \draw[->] (block2) -- (block3);
    \draw[->] (block3) -- (output) node[anchor=south,inner sep=2pt,midway] {$\tilde{U}_{i,[\kappa n]}$};;

    \draw[->,color=red] (input1)  -- (sum1);
    \draw[->] (input2) -- (sum2);
    \draw[->] (input3) -- (sum3);
\end{tikzpicture}

\begin{tikzpicture}[node distance=1cm, thick, >=Stealth]
     \tikzstyle{block} = [rectangle, draw, minimum width=0.9cm, minimum height=1cm]
     \tikzstyle{blocksmall} = [rectangle, draw, minimum width=0.5cm, minimum height=1cm]
    \tikzstyle{sum} = [circle, draw, inner sep=0mm, minimum size=6mm]
    
    \node (input) [label] {$b_j$};
    \node (sum1) [sum, right=of input] {$-$};
    \node (sum2) [sum, right=of sum1] {$\times$};
   \node (block1) [blocksmall, right=of sum2, xshift=-0.3cm] {$S$};
    \node (block1b) [blocksmall, right=of block1] {$\m{P}_{\kappa n}$};
    \node (sum3) [sum, right=of block1b] {$+$};
    \node (block2) [block, right=of sum3, xshift=-0.3cm] {$Q_{\Lambda_f}(\cdot)$};
    \node (block3) [block, right=of block2, xshift=-0.3cm] {$\bmod Q_{\Lambda_c}$};
    \node (output) [label,  right=of block3] {$(W_{b_j},\textcolor{red}{\widehat{\frac{1}{n}\bones^\top b_j}},\widehat{\|\bar{b}_j\|})$};

    \node (input1) [label, below=of sum1] {\textcolor{red}{$\left(\frac{1}{n}\bones^\top b_j\right)\bones$}};
    \node (input2) [label, below=of sum2] {$\frac{\sqrt{n}}{\|\bar{b}_j\|}$};
    \node (input3) [label, below=of sum3] {$Z_2$};

    \draw[->] (input) --(sum1);
    \draw[->] (sum1) -- (sum2) node[anchor=south,inner sep=2pt,midway] {$\bar{b}_j$};
    \draw[->] (sum2) -- (block1);
    \draw[->] (block1) -- (block1b) node[anchor=south,inner sep=2pt,midway] {$V_j$};
    \draw[->] (block1b) -- (sum3) node[anchor=south,inner sep=2pt,midway] {$V_{j,[\kappa n]}$};
    \draw[->] (sum3) -- (block2);
    \draw[->] (block2) -- (block3);
    \draw[->] (block3) -- (output) node[anchor=south,inner sep=2pt,midway] {$\tilde{V}_{j,[\kappa n]}$};

    \draw[->,color=red] (input1)  -- (sum1);
    \draw[->] (input2) -- (sum2);
    \draw[->] (input3) -- (sum3);
\end{tikzpicture}
\end{center}
\caption{Encoders for matrix multiplication. Each column of $A$ is encoded by the same encoder, and each column of $B$ is encoded by the same encoder. The encoder used for columns of $A$ and that used for columns of $B$ are also the same, except that for $A$ we use the dither vector $Z_1\in\RR^{\kappa n}$, whereas for $B$ we use the dither vector $Z_2\in\RR^{\kappa n}$. We illustrate the operation of the encoders on the $i$th column of $A$, $a_i\in\RR^n$, and on the $j$th column of $B$, $b_j\in\RR^n$. The block $S$ corresponds to left multiplication by the rotation matrix $S\in\RR^{n\times n}$, and the block $\m{P}_{\kappa n}$ corresponds to projecting the vector $U_i\in\RR^n$ (respectively $V_j\in\RR^n$) to $\RR^{\kappa n}$, $\kappa\in\frac{1}{n}\cdot\{0,1,\ldots,n\}$, by keeping only its first $\kappa n$ coordinates. Here, $\kappa$ is the time-sharing/sparsification parameter, determining the fraction of coordinates in each vector that are actually ``described'' to the decoder. The lattices $\Lambda_c\subset\Lambda_f\subset\RR^{\kappa n}$ are nested. The component $Q_{\Lambda_f}(\cdot)$ is a lattice quantizer which maps a point in $\RR^{\kappa n}$ to the closest lattice point in $\Lambda_f$. The component $\bmod\Lambda_c$ maps a point $x\in\RR^{\kappa n}$ to $x-Q_{\Lambda_c}(x)\in\m{V}_c$, where $\m{V}_c$ is the Voronoi region of $\Lambda_c$. The binary representation $W_{a_i}$ (respectively $W_{b_j}$) is an encoding of $\tilde{U}_{i,[\kappa n]}\in(\Lambda_f\cap \m{V}_c)\cong\Lambda_f/\Lambda_c$ (respectively $\tilde{V}_{j,[\kappa n]}\in\Lambda_f/\Lambda_c$) using $\log|\Lambda_f/\Lambda_c|$ bits. The scalars $\widehat{\frac{1}{n}\bones^\top a_i},\widehat{\|\bar{a}_i\|}$ (respectively, $\widehat{\frac{1}{n}\bones^\top b_j},\widehat{\|\bar{b}_j\|}$) are high-resolution descriptions of $\frac{1}{n}\bones^\top a_i,\|\bar{a}_i\|$ (respectively, $\frac{1}{n}\bones^\top b_j,\|\bar{b}_j\|$), which require only $O(\log n)$ bits.
The dither vectors $Z_1,Z_2$ must be known to the decoder. They can be randomly drawn by the
encoders and decoder and require sharing randomness between them (in practice, we just store random seed with the
matrices). The matrix $S$ need not be known by the decoder. 
The operations marked in red corresponds to zero-centering the column vectors, and may be avoided altogether. The effect of avoiding those operations on the performance is replacing $\bar{A}$ with $A$ and $\bar{B}$ with $B$ in the MSE upper bounds in Theorems~\ref{thm:generalMatMul},~\ref{thm:generalMatMulNoMMSE} and~\ref{thm:MostgeneralMatMul}.}
\label{fig:encoders}
\end{figure}

\begin{figure}
    \begin{center}
    
\begin{tikzpicture}[node distance=1cm, thick, >=Stealth]
    \tikzstyle{block} = [rectangle, draw, minimum width=0.9cm, minimum height=1cm]
     \tikzstyle{blocksmall} = [rectangle, draw, minimum width=0.5cm, minimum height=1cm]
    \tikzstyle{sum} = [circle, draw, inner sep=0mm, minimum size=6mm]
    
    \node (input1) [label] {$W_{a_i}$};
    \node (input2) [label, below=of input1, yshift=-1cm] {$W_{b_j}$};
    \node (block1LD) [block, right=of input1, xshift=-0.7mm] {$\Lambda_f/\Lambda_c$-decoder};
    \node (block2LD) [block, right=of input2, xshift=-0.7mm] {$\Lambda_f/\Lambda_c$-decoder};
    \node (sum1a) [sum, right=of block1LD] {$-$};
    \node (sum2a) [sum, right=of block2LD] {$-$};
    \node (block1mod) [block, right=of sum1a, xshift=-0.3cm] {$\bmod Q_{\Lambda_c}$};
    \node (block2mod) [block, right=of sum2a, xshift=-0.3cm] {$\bmod Q_{\Lambda_c}$};
    \node (prod) [block, right=of block1mod, yshift=-1.25cm] {$\langle\cdot,\cdot\rangle$};
    \node (iterm1) [label,right=of block1mod, xshift=0.5cm]{} ;
    \node (iterm2) [label,right=of block2mod]{};
    \node (prod2) [sum, right=of prod, xshift=-0.3cm] {$\times$};
    \node (prod2) [sum, right=of prod, xshift=-0.3cm] {$\times$};
    \node (prod3) [sum, right=of prod2, xshift=-0.3cm] {$\times$};
    \node (sumexpectations) [sum, right=of prod3] {$+$};
     \node (output) [label,right=of sumexpectations, xshift=-0.3cm]{$\widehat{(A^\top B)_{ij}}$} ;

   
    \node (input3) [label, below=of sum1a] {$Z_1$};
    \node (input4) [label, below=of sum2a] {$Z_2$};
     \node (prod2arg) [label, below=of prod2] {$\frac{1}{n}\widehat{\|\bar{a}_i\|}\widehat{\|\bar{b}_i\|}$};
     \node (prod3arg) [label, below=of prod3] {$\alpha$};
     \node (sumexpectationsarg) [label, below=of sumexpectations] {\textcolor{red}{$n\cdot\widehat{\frac{1}{n}\bones^\top a_i}\cdot \widehat{\frac{1}{n}\bones^\top b_j} $}};

    \draw[->] (input1) --(block1LD);
    \draw[->] (block1LD) -- (sum1a) node[anchor=south,inner sep=2pt,midway] {$\tilde{U}_{i,[\kappa n]}$};
    \draw[->] (sum1a) -- (block1mod);

     \draw[->] (input2) --(block2LD);
    \draw[->] (block2LD) -- (sum2a) node[anchor=south,inner sep=2pt,midway]{$\tilde{V}_{j,[\kappa n]}$};
    \draw[->] (sum2a) -- (block2mod);
    \draw[->] (block1mod)--(prod) node[anchor=south,inner sep=4pt,midway]{$\hat{U}_{i,[\kappa n]}$};;
     \draw[->] (block2mod)--(prod) node[anchor=north,inner sep=8pt,midway]{$\hat{V}_{j,[\kappa n]}$};;
    \draw[->] (prod) -- (prod2);
    \draw[->] (prod2) -- (prod3);
    \draw[->] (prod3) -- (sumexpectations);
    \draw[->] (sumexpectations)-- (output);

    \draw[->] (input3)  -- (sum1a);
    \draw[->] (input4) -- (sum2a);
    \draw[->] (prod2arg) -- (prod2);
    \draw[->] (prod3arg) -- (prod3);
    \draw[->,color=red] (sumexpectationsarg) -- (sumexpectations);
    
\end{tikzpicture}

\end{center}
\caption{Decoder for the matrix multiplication problem. We illustrate the estimation of $(A^\top B)_{ij}$. The component $\Lambda_f/\Lambda_c$-decoder maps $\log|\Lambda_f/\Lambda_c|$ bits to points in
$\Lambda_f\cap\m{V}_c\subset\RR^{\kappa n}$, where $\m{V}_c$ is the Voronoi region of the lattice
$\Lambda_c$. The component $\langle\cdot,\cdot\rangle$ computes the inner product
$\hat{U}_{i,[\kappa n]}^\top \hat{V}_{j,[\kappa n]}$, and $\alpha\in[0,1]$ is a (MMSE-like)
scaling coefficient. The operation marked in red need only be implemented if the encoders
implemented the corresponding zero-centering operations marked in red in Figure~\ref{fig:encoders}. Note that we can estimate
the entire product $A^\top B$ by first decoding $\hat{\tilde{A}}=[\hat{U}_{1,[\kappa n]}|\cdots|\hat{U}_{a,[\kappa n]}]$ and $\hat{\tilde{B}}=[\hat{V}_{1,[\kappa n]}|\cdots|\hat{V}_{b,[\kappa n]}]$, computing the matrix $\alpha \hat{\tilde{A}}^\top \hat{\tilde{B}}$, and then computing its Kronecker product with the rank-1 matrix $N$ whose $ij$th entry is $N_{ij}=\frac{1}{n}\widehat{\|\bar{a}_i\|}\widehat{\|\bar{b}_j\|}$, and adding to it the rank 1 matrix $\mu$ whose $ij$th entry is $\mu_{ij}=n\cdot\widehat{\frac{1}{n}\bones^\top a_i}\cdot \widehat{\frac{1}{n}\bones^\top b_j} $.}
\label{fig:decoder}

\end{figure}

To put Theorem~\ref{thm:generalMatMulNoMMSE} in context, we note that recent work in LLMs suggested to use random rotation of $A$ and $B$ and then quantize each column of the rotated matrices using sub-optimal lattice quantizers~\cite{tseng2024quip,quarot2024}. uch rotations, first practically employed by~\cite{chee2023quip}, make each column vector Gaussian-like, see more in Sections~\ref{sec:sketch} and~\ref{subsec:relatedwork}. A popular choice is to use the scalar quantizer, which is equivalent to quantizing to the lattice $\ZZ^n$ with a cubic shaping region. In practice, to apply the scalar quantizer on a vector $a_i\in\RR^n$, the common approach in the DNN and LLM literature is to store $\|a_i\|_{\infty}$, then normalize to $\tilde{a}_i=a_i/\|a_i\|_{\infty}$ such that all entries of $\tilde{a}_i$ are in $[-1,1]$, then use a $R$-bit scalar quantizer with dynamic range $[-1,1]$, and finally rescale the result by $\|a_i\|_{\infty}$. See, e.g.~\cite{quarot2024}. If the vector $a_i$ is uniform on $\sqrt{n}\mathbb{S}^{n-1}$ ( the sphere with radius $\sqrt{n}$), then for large $n$ we have that $\|{a}_i\|_{\infty}$ concentrates around $\sqrt{2\ln{n}}$. It follows that the expected squared quantization error this quantizer attains per entry is $\left(\frac{2}{3}\ln{n}\right) 2^{-2R}$. Using this quantizer for matrix multiplication (after rotating each matrix by the same random rotation) therefore results in 
\begin{align}\label{eq:factor_distortion}
\EE(C_{i,j}-\hat{C}_{i,j})^2\leq  \frac{\|a_i\|^2 \|b_j\|^2}{n}\left(\frac{2}{3}\ln{n}\right)\left(\frac{2\cdot 2^{2R}+\frac{2}{3}\ln{n}}{(2^{2R})^2}\right),~~\forall i\in[a],j\in[b].   
\end{align}
Thus, replacing the scalar quantizer $\ZZ^n$ with a high-dimensional pair of ``good'' nested lattices, as we do in the proof of Theorem~\ref{thm:generalMatMulNoMMSE} saves a factor of $\frac{2}{3}\ln{n}$ in the expected squared error for moderate $R$.

\medskip

The scheme used for proving Theorem~\ref{thm:generalMatMulNoMMSE} is based on using high-dimensional nested lattices with some asymptotically optimal properties. Unfortunately, such lattices do not lend themselves to efficient implementation. Another key contribution of this paper, described in Section~\ref{subser:practicalLattice}, is a simplified nested-lattice quantization
scheme, based on Conway and Sloane's Voronoi codes~\cite{ConwaySloane83}, that is similar to the one used in the proofs of Theorem~\ref{thm:generalMatMul} and Theorem~\ref{thm:generalMatMulNoMMSE}, but
uses low-dimensional nested lattices. For such lattices, we suggest a fast
implementation, whose computational efficiency does not depend on $R$.
This simplified scheme attains performance fairly close to theoretical
estimates therein. Recent subsequent work~\cite{savkin2025nestquant} suggests that the resulting matrix quantization scheme may be
a good candidate for practical application in LLMs and other algorithms relying on heavy matrix
multiplication operations.

Additional contributions of this work include the following:
\begin{itemize}
    \item We study the inner product case $a=b=1$, in full generality, assuming the entries of $A$ are drawn iid from distribution $P$, the entries of $B$ are drawn iid from distribution $Q$, and the rates $R_1$ and $R_2$ are not necessarily equal. We derive several upper and lower bounds on the smallest attainable distortion in computing the inner product, and prove some results on the structure of the optimal encoders and decoder. 
    \item For the matrix multiplication case, when the entries of $A$ and $B$ are drawn iid from a distribution $P$ with zero mean and variance $\sigma^2$, we show that~\eqref{eq:MMdistortionLowerBoundMain} continues to hold with $\Gamma(R)$ replaced by $\Gamma(R+D(P\|\m{N}(0,\sigma^2))$.  
\end{itemize}

Key ideas and proofs of these results are sketched in Section~\ref{sec:sketch}. We proceed to
motivation and review of the literature.

\subsection{Importance of quantization for modern applications}\label{sec:i_pract}
To set the stage for the problem, let us estimate what level of
quantization (in bits / entry) would be relevant for today's main consumer of matrix
multiplications: the large language models (LLMs). For those, quantization is typically
employed for accelerating inference. During inference LLM is busy computing many products $A^\top
B$ of matrices with sizes $d\times a$ and $d\times b$ respectively. This requires $2abd$ FLOPs and
$ad+bd+ab$  entries to load/store from memory. Ideally, we would want to quantize entries in such a
way that all compute is fully utilized. For that we need to know the ratio $\xi$ of available FLOPs to
available memory bandwidth, a quantity known as ``ops:bytes''
of a processor. It ranges from $\xi = 5\ldots20$ for modern CPUs (FP32 arithmetic via AVX512) to
$\xi\approx 300$ for the fastest GPUs (FP16 on an H100 or B200). The quantization rate saturating compute
should then be bounded (in bits/entry) as
\begin{align}
 R<\frac{16}{\xi}\frac{ab}{a+b + {ab\over d}}\,.
\end{align}
It turns out that there are two stages of running inference with LLMs: the
pre-fill (when the input prompt is processed) and the generation (when response tokens are
sequentially generated).
During the pre-fill LLM we have $a=d$ and $b=L$ ($d$ is the so-called hidden dimension and
$L$ is the sequence length), while during the generation we have $a=L$ and $b=1$ (the $A$ matrix
coming from KV-cache and $B$ matrix being new token's embedding). Thus, to saturate the computation core, we need
	$$ R_{\text{pre-fill}} = {16Ld\over \xi(d + 2L)}\,,\qquad R_{\text{generate}} = {16L\over
	\xi(L+1+L/d)} \approx {16\over \xi}\,.$$
We can see that during generation phase, on CPUs we would want to approach 1-3
bits/entry, while on GPUs we will not be able to ever saturate compute (that is, a decrease in quantization rate 
translates proportionally to decrease in runtime). For
the pre-fill phase, for large LLMs we get $R_{\text{generate}}>16$ bit (that is, just
storing plain FP16 is already good enough). Quantization during pre-fill might still be important
for ``small''
LLMs running on fast GPUs: for example, for BERT~\cite{BERTpaper} we
have $L=512$, $d=768$ and $\xi=300$ (for an H100), resulting in quantization rate $R\approx 11.7$ bit/entry.

Quantization is also important for reducing the power consumption of matrix multiplication. The energy cost of reading two bytes from DRAM is typically 2-3 orders of magnitude higher than that of performing a single INT8 addition/multiplication~\cite{horowitz20141}. Consequently, for memory-limited matrix multiplication, quantization may also have a dramatic effect on the power consumption.

\subsection{Sketch of the proof}\label{sec:sketch}
 This work started with the goal of trying to understand approximate
matrix multiplication for two matrices $A$ and $B$ which are random, with iid Gaussian entries $\mathcal{N}(0,1)$. We
started by trying to solve the case of $a=b=1$ (Sections~\ref{sec:iidIPdefsandBounds}
and~\ref{sec:simmetricIP}), i.e. when $A^\top B$ is simply an inner product of
two iid Gaussian vectors. 

Recall that the Gaussian distortion-rate function is $D(R) = 2^{-2R}$, e.g.~\cite[Section
26.1.2]{PWbook24}. A simple argument
(Thm.~\ref{thm:TSupperbound}) shows that
compressing $A$ to $\hat A$ and $B$ to $\hat B$  via rate-$R$ optimal Gaussian vector quantizer achieves error 
$$ \EE[(\hat A^\top  \hat B-A^\top  B)^2] \le \phi(D(R)), \qquad \phi(x) := 2x-x^2\,.$$
It turned out that the function $\phi(D(R))$ is monotonically decreasing but \textit{not} convex.
Thus, via time-sharing one can achieve a lower convex envelope of $\phi(D(R))$, which turns out to
be the $\Gamma(R)$ function defined in~\eqref{eq:barphidef}. 

We next proceed to lower bounds on distortion or, equivalently, to lower bounds on rate $R$
required for the existence of encoders $f_1,f_2$ and decoder $g$ satisfying
 \begin{equation}\label{eq:iprd_constraint}
 	\EE[(g(f_1(A),f_2(B))-A^\top B)^2] \le nD 
\end{equation} 
A
simple oracle bound (by revealing $B$ to the decoder) shows that rate $R$ cannot be smaller than
the standard Shannon rate-distortion function of $A$, see Theorem~\ref{thm:oracleLB}. However,
this bound leaves a wide gap with the achievability bound given above. Next, by a standard data-processing
argument (and observation that encoders for $A$ and $B$ can be without loss of generality be taken
identical) in Section~\ref{sec:iprd_lb} we deduce that~\eqref{eq:iprd_constraint} requires rate 
\begin{equation}\label{eq:multiletter}
	R\ge \limsup_{n\to \infty} {1\over n} \inf_{\hat A} \{I(A; \hat A): {1\over n} \sum_{i=1}^n
\phi(\lambda_i) \le D\}\,,
\end{equation}
where $A\sim \mathcal{N}(0,I_n)$, infimum is over all $\mathbb{R}^n$-valued random variables $\hat
A$ and $\{\lambda_i\}$ are the eigenvalues of $\Cov(A|\hat A)$. This reduces inner product quantization to
an optimization of a  multi-letter mutual information. Notice that the distortion constraint is no
longer separable, and hence the standard single-letterization (e.g.~\cite[Theorem 24.8]{PWbook24})
does not work and the limit on the right-hand side is not possible to evaluate. For
the special case of Gaussian distribution of entries of $A$ we were able to single-letterize the expression on the
right-hand side of~\eqref{eq:multiletter}, see Theorem~\ref{thm:SLBbasedDistortionLB}, showing
that left-hand side of~\eqref{eq:multiletter} evaluates to $\Gamma^{-1}(D)$.
Putting both upper and lower bounds together, we conclude that optimal compression rate for the iid Gaussian
inner product problem is thus given by $\Gamma^{-1}(D)$, see Theorem~\ref{thm:symgsn_iprd}.

We next proceed to solving the matrix case. Luckily, it turns out that for Gaussian iid matrices,
again, the optimal compression for matrix multiplication of $A^\top B$ is asymptotically achieved
by compressing each column separately via the use of optimal inner product quantizers, see
Theorems~\ref{thm:matrixerrorexpressions} and~\ref{thm:SLBbasedDistortionLBmatrix}.

Having solved the iid Gaussian case, we proceed to analyzing general (non-random) matrices and vectors.  Specifically, for the inner product problem we first normalize each of the two vectors to have norm $\sqrt{n}$  and these norms are compressed using a separate high-resolution scalar quantizer. Next, normalized vectors are multiplied by a common random orthogonal matrix. This makes each resulting vector uniformly distributed on the sphere of radius $\sqrt{n}$, while their inner product is unchanged. As is well known a high-dimensional vector that is uniform on the sphere is very
similar to an iid Gaussian vector (for example, in terms of joint distribution of small
$O(\sqrt{n})$-sized subsets). Thus, we reduce the problem to~\eqref{eq:iprd_constraint} except this time $A_i,B_i \stackrel{iid}{\sim}\mathcal{N}\left(0, \left(\begin{matrix}
    1 & \rho\\\rho & 1
\end{matrix}\right)\right)$, where $\rho=\frac{A^\top B}{\|A\|\cdot\|B\|}$. This slight change creates a crucial complication compared to the previous case of $\rho=0$.

Indeed, suppose we are only tasked with quantizing $B$ and $A$ is given to the decoder undistorted. Because of dependence between two terms in the product $A^\top (B-\hat B)$ we have to recourse to something like Cauchy-Schwarz, yielding
$$ \EE[(A^\top B - A^\top \hat B)^2 \le \EE[\|A\|^2 \|B-\hat B\|^2] = \Omega(n^2)\,.$$
Thus, using ``black box'' quantizers for $A$ and $B$ only yields $n^2$ performance guarantees violating~\eqref{eq:iprd_constraint}.
This is where \textit{lattice quantization} comes in. Specifically, using the idea of dithering we can make a (randomized) quantizer whose quantization error $(B-\hat B)$ becomes independent of $B$ and $A$. 

In order to guarantee finite quantization rate, we also need to ``truncate'' the infinite lattice, for which we use another key idea: a ``good'' nested lattice quantizer as in~\cite{erez2004achieving,ramiBook,oe17}. However, due to the nature of the problem we require construction of nested lattice pairs that satisfy stronger conditions than were known from  prior work  (see Theorem~\ref{thm:good_lattices}, whose proof builds upon heavy-lifting in a recent~\cite{orw22}). Overall, we construct quantizers for inner product problem of non-random vectors with a reconstruction error that depends only on the inner product
between the vectors and their individual $\ell_2$ norms, see Theorem~\ref{thm:universalLattice}. Since the performance bounds coincides with the lower bound
for the iid Gaussian case, it turns out that the resulting quantizers are optimal and generally
cannot be improved (except, possibly, in terms of finite-$n$ performance).
Together these steps complete proof of the main results quoted above.

\begin{remark}[On zero-centering]
One of the components in our quantization scheme is zero-centering, that is, subtracting from each vector its empirical mean prior to quantization. This is possible since the excess rate required for describing the empirical mean in high resolution is $O(\log n)$ bits and is negligible next to the cost of describing the entire vector which is $\Omega(n)$ bits. Describing the empirical mean in high resolution corresponds to describing the projection of the vector on $\frac{1}{n}\bones$. In principle, we can actually choose in advance a subspace of $o\left(\frac{n}{\log n}\right)$ orthogonal directions and describe the projection of the vector to this subspace with high resolution and negligible cost for the quantization rate. However, it is not clear what choice of such subspaces will result in a good rate-distortion tradeoff for arbitrary matrices. In fact, our choice of describing the DC component of the vector is also quite arbitrary, and was only done to allow restricting attention to zero-mean data without loss of generality, as can be done in vector quantization under quadratic distortion.      
\end{remark}

\subsection{Naive constructions: vector quantization, $\epsilon$-nets and sketching}

A crucial part of our construction is usage of common randomness between encoders and decoder. In
order to demonstrate its importance, here we first analyze two obvious constructions and find out
that they only achieve error of order $\Omega(n^2)$, while our
Theorem~\ref{thm:generalMatMulNoMMSE} yields error $O(n)$. 

The first natural question is to simply independently quantize $A$ and $B$ using a good vector
quantizer ($\epsilon$-net), to obtain $\hat A$ and $\hat B$, and then set $\widehat {A^\top B} = \hat A^\top \hat
B$. This idea, unfortunately, does not work even for vectors on a (scaled) sphere
$\mathbb{S}^{n-1}$ as  the next proposition shows. We stress that in this setting the right-hand
side of~\eqref{eq:nommse_innerp} is $O(n)$.

\begin{proposition} Fix rate $R>0$ and consider a special case of $A,B \in
\sqrt{n}\mathbb{S}^{n-1}$ (inner product estimation). Then there exists $\delta=\delta(R)>0$ and
$n_0=n_0(R)$ such
that for any $n\ge n_0$ and any deterministic encoders $f_1,f_2$ (of rate $R$ each) and
decoder $g$, there exist a pair of vectors $A,B$ with $\|A\|=\|B\|=\sqrt{n}$ such that 
	$$ \|A^\top B - \widehat{A^\top B}\|^2 \ge \delta n^2\,.$$ 
\end{proposition}
\begin{proof} Classical results, e.g.~\cite[Corollary 27.4]{PWbook24} show that there exists
$\delta' > 0$ and a
collection $\mathcal{C}$ of at least $2^{nR} + 1$ points on $\sqrt{n}\mathbb{S}^{n-1}$ such that
$$ \|A - \tilde A\| \ge \sqrt{n \delta'} \qquad \forall A\neq \tilde A \in \mathcal{C}\,.$$
By pigeonhole principle there should be such a pair with $f_1(A)=f_1(\tilde A)$. Thus, for such a
choice and any $B \in \sqrt{n}\mathbb{S}^{n-1}$ we have 
$$ \widehat{A^\top B} = \widehat{\tilde A^\top B}\,.$$
Choose $B= \sqrt{n}{A-\tilde A\over \|A-\tilde A\|}$, then by the previous display we have 
\begin{align*}
	\sqrt{n\delta'} &\le \|A-\tilde A\| = {1\over \sqrt{n}} \|A^\top B - \tilde A^\top B\| \\
		&=
{1\over \sqrt{n}} \|A^\top B - \widehat {A^\top B} + \widehat{\tilde A^\top B} - \tilde A^\top B\|
\le 
{1\over \sqrt{n}} \left(\|A^\top B - \widehat {A^\top B}\| + \| \tilde A^\top B - \widehat{\tilde
A^\top B} \|\right)\,. 
\end{align*}
Thus, one of the terms in the parentheses should be at least $n\sqrt{\delta'}/2$, completing the
proof.
\end{proof}

So what makes deterministic quantizers fail and our lattice quantizer work? Note that in the
former case, since $A$ and $B$ can be arbitrary the best we can do is apply  a Cauchy-Schwarz estimate
$$ (A^\top B - \hat A^\top B)^2 \le \|A-\hat A\|^2 \|B\|^2 \asymp n^2\,.$$
Dithered quantizers prevent adversarial alignment of the error vector $A-\hat A$ and $B$. However,
note that if we naively attempt to use dithered uniform quantizer $\epsilon \mathbb{Z}^n$, it would require
rate $\log_2 {\sqrt{n}\over \epsilon}$ due to $\|A\|_\infty$ possibly reaching out value of
$\sqrt{n}$ over the $\sqrt{n}\mathbb{S}^{n-1}$, thus also failing to provide finite-rate and
$O(n)$ MSE guarantee.

\smallskip
The next idea is that of sketching~\cite{alon1996space}, which is often used in the literature on
approximate matrix multiplication (see next section). Let $U\sim \mathcal{N}(0,I_n)$ and suppose
it is shared between the encoders $f_1$ and $f_2$. Let us then compute two scalar quantities $U_A\eqdef U^\top A$ and
$U_B \eqdef U^\top B$. Notice that 
$$ \EE[U_A U_B] = A^\top \EE[ U U^\top] B = A^\top B\,,$$
and hence the product $U_A U_B$ yields an unbiased estimate of the inner product. Now, since these
two scalar quantities are with high probability bounded by $O(\sqrt{n})$ they can be easily
quantized with exponentially small error given the budget of $nR$ bits. Unfortunately, this idea
also does not work, as a simple computation shows
$$ \Var[U_A U_B] = \EE[U_A^2 U_B^2] - (A^\top B)^2 = n^2 \EE[U_1^2 (\rho U_1 + \sqrt{1-\rho^2}U_2)^2]
- n^2 \rho^2 = n^2(1+\rho^2)\,,$$
where $\rho = {1\over n} A^\top B$ is the cosine of the angle between $A$ and $B$. We see that
even if $U_A$ and $U_B$ were provided to the decoder exactly, the error would still be
$\Omega(n^2)$ due to variance of the estimate $U_A U_B$. Of course, one could reduce variance by
working with multiple sketches, however, to drop it all the way to $O(n)$ one would need order $n$
sketches. 

Despite both of these constructions yielding $\Omega(n^2)$ errors, our algorithm combines the two
to obtain $O(n)$ error (and achieves the optimal constant): we first form the right number of sketches and then
convert them to bits via dithered (nested) lattice quantizers.

At this point we stress that even after computing sketches of $A$ and $B$ via random rotation, using vector quantizers that achieve the optimal Gaussian RDF may not suffice. In fact, even a Gaussian RDF achieving dithered nested lattice quantizer may not suffice. The reason for this is that the error in approximating an inner product as the inner product between the two quantized vectors consists also of the inner product between the quantization errors. In order to make sure this term has small variance, one needs to control also the Frobenius norm of the quantization noise autocorrelation matrix, and not just its trace as in the standard Gaussian vector quantization setup. Our analysis show that dithered nested lattice quantizers that posses these required properties do exist.

\subsection{Related work}
\label{subsec:relatedwork}

Randomized linear algebra/sketching, and locality-sensitive hashing (LSH) are techniques widely used in practice for computing approximate inner products and approximate matrix multiplications, as well as other operations, in reduced dimensions. The figure of merit in these fields is typically the tradeoff between the reduced dimension and the approximation error. Since the dimension of the reduced matrix/vector is related to the number of bits required for storing it, this body of work is relevant to our study. However, the tradeoff between the number of bits per dimension and the total approximation error, and its dependence on the properties of $A$, $B$ and $A^\top B$ is often subtle. Thus, there is no immediate translation between the required dimension of a sketch and the number of bits needed for representing it for obtaining the required accuracy.    

Many algorithms have been developed for randomized linear algebra, see~\cite{mahoney2011randomized,martinsson2020randomized} for a survey, and in particular for approximate matrix multiplication. A canonical example is the Monte-Carlo Matrix Multiplication (MCMM) algorithm~\cite{drineas2006fast} which randomly samples (the same) $c$ rows from $A\in\RR^{n\times a}$ and $B\in\RR^{n\times b}$ and estimates $A^\top  B$ as the (scaled) matrix multiplication of the sub-sampled matrices. Thus, each matrix is represented by $ac$ (respectively $bc$ real numbers), and the expected squared Frobenius norm of the approximation error is shown to scale like $O(\|A\|_F^2 \|B\|_F^2/c)$. Similarly, LSH for cosine similarity or $\ell_2$ distance also produce low-dimensional sketches of $u\in\RR^n$ and $v\in\RR^n$, from which the inner product of $u^\top  v$ can be approximated. Specifically, in~\cite{charikar2002similarity} it is proposed to project the two vectors using $c$ random projections (same random projections for both vectors) and only store the sign of each projection. The Hamming distance between the obtained vectors is distributed as $\mathrm{Binomial}\left(c,\frac{\theta(u,v)}{\pi}\right)$ where $\theta(u,v)=\cos\left(\frac{u^\top  v}{\|u\|\cdot \|v\|}\right)$, such that the expected squared error in estimating $\theta(u,v)$ is $O(1/c)$. In~\cite{datar2004locality} it is proposed to estimate $\|u-v\|$ (which is equivalent to estimating $u^\top  v$ for $u$ and $v$ on the sphere) by computing Gaussian linear combinations of each of them, and using a (dithered) scalar quantizer for quantizing each of the entries of the linear combinations. Specifically, for a vector $G\in\RR^{n}$, with iid $\m{N}(0,1)$ entries, we have that $G^\top (u-v)\sim\m{N}(0,\|u-v\|^2)$, and therefore the probability that $G^\top  u$ and $G^\top  v$ are  quantized (after dithering) to the same value is a monotone function of $\|u-v\|$. 

All the schemes mentioned above, as well as many other sketching/LSH schemes suffer from the same
shortcoming: their relative error $\frac{\EE\|\widehat{A^\top  B}-A^\top B\|_F^2}{\|A^\top B\|_F^2}$ scales like
$O\left(\frac{1}{c}\frac{\|A\|_F^2\|B\|_F^2}{\|A^\top  B\|_F^2} \right)$, and typically these schemes
are applied with constant $c$. When $\frac{\|A\|_F^2\|B\|_F^2}{\|A^\top  B\|_F^2}=\Omega(1)$, these
schemes perform remarkably well, despite the fact that $c$ does not grow with $n$. However, when
$\frac{\|A\|_F^2\|B\|_F^2}{\|A^\top  B\|_F^2}=\omega(1)$, as is the case for random iid matrices,
their relative error is very high. A notable exception is the algorithm proposed by Pagh
in~\cite{pagh2013compressed}, which represents each matrix using $n\cdot\min\{m,a\}$ (respectively
$n\cdot \min\{m,b\}$) real numbers, and produces an \emph{unbiased} estimator for $A^\top B$ with
expected error of $\EE\left((\widehat{A^\top  B})_{i,j}-(A^\top B)_{i,j}\right)^2=O\left(\frac{\|A^\top  B\|_F^2}{m} \right)$, for all $i,j$, and
does so with runtime proportional to $n^2+n m$ (ignoring logarithmic factors). When the product $A^\top  B$ is known to be highly sparse, this allows to estimate the sparsity pattern with $m$ proportional to the number of nonzero entries.

The topic of matrix quantization has received much attention in the last decade in the context of
DNNs and LLMs. The goal here is to reduce the memory footprint of the weight matrices, allowing to
load them to the main memory using less IOs, as well as speed up the multiplications and additions
operations by moving from floating point numbers to small integers (and when possible, also
sparsifying the matrices, saving some operations altogether). Roughly speaking, one can
distinguish between two paradigms: \emph{quantization-aware training}, where the training
procedure is designed to output weight matrices with ``cheap''
representation~\cite{guo2016dynamic,hubara2018quantized}, and \emph{post-training quantization},
where the training procedure is performed in high precision, and quantization of the weights is
only performed after training has terminated (perhaps with some fine tuning
afterwards)~\cite{gong2014compressing,jacob2018quantization,dettmers2022gpt3,yao2022zeroquant,xiao2023smoothquant,ma2024era,tseng2024quip}.
In order to further speed up matrix multiplication, and reduce the number of IOs needed for using KV-cache, some works also develop quantizers for the
activations~\cite{jacob2018quantization,yao2022zeroquant,xiao2023smoothquant,ma2024era}, while
other works assume the activations are kept in high
precision~\cite{gong2014compressing,tseng2024quip}. Quantization for DNNs and LLMs are typically
evaluated according to the end-to-end performance of the quantized architecture, but often the
Frobenius norm of the approximation error is considered as the intermediate optimization criterion
for quantizing the weights at each layer~\cite{guo2016dynamic,frantar2023optq}. Some works rely on
specific empirical observations on the distribution of weights and activations in LLMs. For
example~\cite{dettmers2022gpt3,yao2022zeroquant,xiao2023smoothquant} exploit the fact that a small
subset of entries in the activations have significantly larger magnitude than the majority of
entries. Notably, in~\cite{ma2024era} it is observed that for large LLMs, quantizing all
weights to $\{-1,0,1\}$ and the activations to $8$ bits, hardly affects the overall performance.
Among the work described above, the algorithm from~\cite{tseng2024quip} is closest to the scheme we
use in the proof of our Theorem~\ref{thm:generalMatMul} and Theorem~\ref{thm:generalMatMulNoMMSE}, as well as
the practical adaptation of the scheme used in those proofs, which is described in Section~\ref{subser:practicalLattice}. The
work~\cite{tseng2024quip} develops an algorithm for quantizing the weight matrices (keeping the
activations in high precision) that is based on random rotation using the randomized Hadamard
transform (that can be performed in time $n\log n$) and then using the $E_8$ lattice for
quantizing the rotated matrix. The mapping from lattice points to bits that was used in~\cite{tseng2024quip} required access to a lookup table, and was tailor-designed for $R=2$, while using different rates requires to further use successive refinement (residual vector quantization).
While our practical scheme in Section~\ref{subser:practicalLattice}
also uses product-lattice quantization, we use a nested lattice quantizer/Voronoi code~\cite{ConwaySloane83}, which results in a simple
mapping from lattice points to bits, regardless of $R$. Furthermore, we quantize both matrices to be multiplied. In LLMs, when activations/KV-cache data is also compressed, quantization must occur in inference time, and the encoders are required to be fast. On the other hand, when weights-only quantization is performed, the encoding is done offline, and only decoding is required to be efficient. The work~\cite{tseng2024qtip} leverages this asymmetry and builds a complicated encoder based on a trellis with a large number of states, while the decoder, on the other hand, is highly efficient. Such asymmetric schemes are not suitable for quantizing activations/KV-cache, whereas in the scheme we describe in Section~\ref{subser:practicalLattice} both the encoders and the decoder are efficient, and can be both applied in inference time.
In addition to reducing the limitations incurred by the memory bandwidth, an additional benefit of 
quantizing both matrices, is that one can replace the decoder with a lookup table, as
in~\cite{jegou2010product,blalock2021multiplying,stock2019and,guo2016quantization,ko25}, resulting in
very fast decoding in CPUs. 

Following~\cite{tseng2024quip}, the QuaRot~\cite{quarot2024} scheme also uses
randomized Hadamard transform prior to quantization, followed by $4$-bit scalar quantization of
each entry in both rotated matrices. Our implementation in Section~\ref{subser:practicalLattice}
quantizes the entries of the rotated matrices using nested-lattice codes, which come much closer
to the optimal rate-distortion tradeoff than scalar quantizers, with essentially the same
complexity (provided that the base lattice has an efficient nearest-neighbor decoder, as is the
case for essentially all ``good'' lattices in dimensions $2,3,4$ and $8$).

To the best of our knowledge, there was very little work on distributed compression for inner product/matrix multiplication in the information theory literature. Recently, Malak~\cite{malak2024distributed} studied the problem of \emph{lossless} distributed compression of binary random matrices for computing their product, and derived non-trivial bounds under stringent assumptions on the joint distribution. Some prior work considered the problem of distributed compression of random vectors with the goal of approximately computing a linear function of those vectors~\cite{krithivasan2009lattices,wagner2010distributed}. In those works, the goal was to estimate, say, the difference between the two vectors in $\RR^n$, which is itself a vector in $\RR^n$. While the inner product of these vectors, which is a scalar in $\RR$, can be computed from their difference (assuming their individual norms were encoded in high resolution), it seems, a-priory, that distributed compression for inner product computation is an easier task. Our results show that this is, in fact, hardly the case. Another line of related work in the information theory literature, is that of Ingber et al.~\cite{ingber2015compression} that considered the fundamental limits of lossy compression of a database in order to support approximate nearest neighbor search (see also~\cite{Ochoa14} for a practical implementation). We note in passing that much recent work focused on coding for speeding up distributed matrix multiplication by introducing redundancy for mitigating the effect of ``slow workers'' (stragglers), see, e.g.,~\cite{yu2020straggler}. This line of work is not directly related to approximate matrix multiplication via compression, studied in this paper.

Finally, one may wonder if approximating matrix product in mean squared error (MSE) metric is the
right distortion metric. Indeed, it was shown in~\cite{Tang22} that if the high-dimensional vectors to be compressed are
probability distributions and the metric is KL divergence (reasonable assumptions for attention
matrices in LLMs), the optimal quantization algorithms become quite different from the MSE ones.
We leave these extensions for future work.

\medskip

To summarize, the main innovations of this work with respect to prior work are:
\begin{enumerate}[a.]
\item Our work provides, for the first time, the fundamental limits of quantization for matrix multiplication. We derive a non-asymptotic lower bound on the error of any quantization algorithm for the case of Gaussian iid matrices, and develop a ``robust'' quantization algorithm (that makes no assumptions on the matrices $A$, $B$)  that asymptotically attains it. This gives a useful benchmark for evaluating the performance of any robust quantization algorithm.  
\item On the algorithmic side, we introduce several new components that were not used in previous work on quantization for matrix multiplication: sparsification/time-sharing, MMSE scaling, \emph{nested} lattice quantization. Those components, together with randomization in the form of rotation and dithering are required for attaining the optimal performance. For the analysis, we also prove new results on the existence of high-dimensional lattices with the required properties for quantized matrix multiplication.
\item We develop a low-complexity framework for approaching our theoretic lower bounds. Our framework is based on Voronoi codes in low dimensions, but together with an overload avoidance mechanism it nevertheless performs quite close to the asymptotic limits. It allows for fast encoding and decoding, and works for any $R>\log q$, where $q$ is an integer. Unlike prior work on lattice-based weights-only quantization, the same scheme can be used for any such rate, and the encoding/decoding complexity is invariant to the quantization rate $R$. We note that based on these ideas a follow-up paper~\cite{savkin2025nestquant} developed a practical algorithm (NestQuant), which attains state of the art in 4-bit weight+activation quantization of LLMs. 
\item Our lower and upper bounds give a theoretic justification for the widely used idea of applying the same random
rotation to both matrices $A$ and $B$ prior to quantization. In particular, the schemes used in the proofs of Theorem~\ref{thm:generalMatMul} and Theorem~\ref{thm:generalMatMulNoMMSE} are based on random rotation followed by quantizers based on ``good'' high-dimensional
nested lattices. Our analysis reveals that using $\ell_\infty$ normalization followed by quantization to $\ZZ^n$ on the rotated vectors (e.g.~\cite{quarot2024}) is highly sub-optimal, and using ``good'' nested lattices instead, leads to a multiplicative reduction by of factor $O(\log n)$ in the resulting distortion, see~\eqref{eq:factor_distortion}. 

 \end{enumerate}


\medskip
\medskip

\subsection{Paper organization}
 We begin our study with the special case where $a=b=1$, so that matrix
multiplication becomes an inner product. The reason is twofold: First, it is easier to gain
intuition for this problem, and all techniques for proving converse (impossibility) results for
the inner product case, easily extend  to the matrix multiplication case. The second reason is
that our achievability results are based on compression of each column of $A$ separately and compression of each column of $B$ separately, and
estimating each inner product forming $C_{ij}=(A^\top  B)_{i,j}=a_i^\top  b_j$ separately. In
Section~\ref{sec:iidIPdefsandBounds} we formally define the compression for inner product
computation problem, identify the structure of the optimal decoder, and give simple expressions on
the attained distortion as a function of the encoders $f_1$ and $f_2$, as well as a simple lower
bound on the distortion in terms of the ``standard'' distortion-rate function. In
Section~\ref{sec:simmetricIP} we restrict attention to the symmetric case where the two vectors
have the same distribution, and both encoders have the same rate. We prove lower and upper bounds
on the smallest attainable distortion in this case, which coincide in the Gaussian case. In
Section~\ref{sec:iidMM} we generalize the inner product results for matrix multiplication
$A^\top  B$ of $A\in\RR^{n\times a}$ and $B\in\RR^{n\times b}$, for general $a$ and $b$. Building on
the bounds developed for the inner product case of $a=b=1$, we prove lower and upper bound on the
smallest expected squared Frobenius norm of the error. In the special case where all entries in
both matrices are iid Gaussian, the lower bound results in
Theorem~\ref{thm:GaussMatrixLowerBound}. In Section~\ref{sec:arbitraryMM} we develop a
quantization scheme, based on randomly rotating both $A$ and $B$ by the same rotation matrix, and
then using nested-lattice quantizers for separately quantizing each column of the rotated
matrices, for qunatization of arbitrary matrices $A\in\RR^{n\times a}$ and $B\in\RR^{n\times b}$.
The expected squared Frobenius norm of the approximation error attained by this scheme is upper
bounded in Theorem~\ref{thm:MostgeneralMatMul}. Our main results stated above, Theorem~\ref{thm:generalMatMul} and Theorem~\ref{thm:generalMatMulNoMMSE}, are obtained as simple corollaries of Theorem~\ref{thm:MostgeneralMatMul}.
The upper bound depends on $A$ and $B$ only through
$\|\bar{A}\|_F,\|\bar{B}\|_F,\|\bar{A}^\top  \bar{B}\|_F$, and meets the lower bound from
Theorem~\ref{thm:GaussMatrixLowerBound} for the case where $A$ and $B$ have Gaussian iid
entries. The main component in the proof of Theorem~\ref{thm:MostgeneralMatMul} is a nested lattice quantization scheme for inner product computation of two arbitrarily correlated vectors, each uniformly distributed on the sphere. This coding scheme is presented and analyzed in Section~\ref{sec:nestedlatticeIP}. For the analysis we also prove new lattice existence results, stated in Theorem~\ref{thm:good_lattices}. 
Finally, in
Section~\ref{subser:practicalLattice} we introduce a practical implementation of the quantization
scheme from Theorem~\ref{thm:MostgeneralMatMul} for matrix multiplication of arbitrary matrices.
In these scheme, we describe several compromises in the choice of lattices used for quantization,
as well as in the rotation matrix used for rotating both $A$ and $B$. With these compromises the
quantization scheme and the decoder become extremely simple and fast. Some numerical evidence show
that, nevertheless, the resulting approximation error is quite close to the lower bound from
Theorem~\ref{thm:GaussMatrixLowerBound}. We conclude the paper with stating several key open problems in Section~\ref{sec:open}.

\subsection{Notation}

For $x>0$ we denote by $\log(x)$ the logarithm of $x$ taken to base $2$, and by $\ln(x)$ the
natural logarithm. We denote the Euclidean ($\ell_2$) norm of a vector $x\in\RR^n$ by
$\|x\|=\sqrt{\sum_{i=1}^n x_i^2}$ and its $\ell_1$ norm by $\|x\|_1=\sum_{i=1}^n |x_i|$.  For a
matrix $A\in\RR^{n\times m}$ we denote the trace operation as $\trace(A)=\sum_{i=1}^n A_{ii}$, and
the Frobenius norm is $\|A\|_F=\sqrt{\sum_{i=1}^n \sum_{j=1}^m A^2_{ij}}=\sqrt{\trace(A^\top A)}$.
The multiset of eigenvalues of a square matrix $A\in\RR^{n\times n}$ is denote by
$\eig(A)=(\lambda_1,\ldots,\lambda_n)$.  For $y>1$ we denote $[y]=\{1,\ldots,\lfloor y \rfloor\}$.
We denote the all ones vector in $\RR^n$ by $\bones=[1,\ldots,1]^\top$. 
For a column vector $x\in\RR^n$ we denote by $\bar{x}=x-\left(\frac{1}{n}\bones^\top x\right)\bones$ its zero-centered version.
For a matrix $A=[a_1|\cdots|a_a]\in\RR^{n\times a}$ with columns $\{a_i\}$ we denote
$\bar{A}=[\bar{a}_1|\ldots|\bar{a}_a]$ its column-centered version.
The differential entropy,
mutual information, and KL divergence are denoted by $h(\cdot)$, $I(\cdot~;~\cdot)$, and
$D(\cdot\|\cdot)$. The Gaussian distribution in $\RR^d$ with mean $\mu\in\RR^d$ and covariance
matrix $\Sigma$ is denoted $\m{N}(\mu,\Sigma)$. For two random variables $X,Y$ the notation
$X\indep Y$ means that they are statistically independent. For a distribution $P$ on alphabet
$\m{X}$, the distribution $P^{\otimes n}$ is its $n$-product (iid) distribution on $\m{X}^n$.

\section{Compression for Inner Product Computation: General Problem Setup and Simple Bounds}
\label{sec:iidIPdefsandBounds}

Let $P$ and $Q$ be distributions on $\RR$ with zero mean and unit variance, and let $U\sim P^{\otimes n}$ and $V\sim Q^{\otimes n}$ be statistically independent. As we argue below, the unit variance assumption is without loss of generality, and the zero mean assumption is essentially without loss of generality, as the effect of non-zero mean on the performance can be made negligible for large $n$. We consider the problem of quantizing $U$ and $V$ in order to compute their inner product $U^\top  V$. In particular, an $(n,R_1,R_2,D)$ code consists of mappings 
\begin{align}
 f_1&: \RR^n\to[2^{n R_1}]\\
 f_2&: \RR^n\to[2^{n R_2}]\\
 g&:[2^{n R_1}]\times [2^{n R_2}]\to \RR,
\end{align}
with
\begin{align}
D=\DIP=\frac{1}{n}\EE\left(U^\top  V-g(f_1(U),f_2(V))\right)^2.    
\end{align}
We define 
\begin{align}
\DIP^*_n(R_1,R_2)=\DIP^*_n(R_1,R_2,P,Q)=\inf\left\{D~:~\exists (n,R_1,R_2,D)-\text{code} \right\}.    
\end{align}
We further define the asymptotic function 
\begin{align}
\DIP(R_1,R_2)=\DIP(R_1,R_2,P,Q)=\limsup_{n\to\infty} \DIP^*_n(R_1,R_2).  
\end{align}

To see that the assumption that $P$ and $Q$ have unit variance is without loss of generality, assume that $\tilde{U}\sim \tilde{P}^{\otimes n}$ and $\tilde{V}\sim \tilde{Q}^{\otimes n}$, such that $\Var[\tilde{U}]=\sigma^2_1$ and $\Var[\tilde{V}]=\sigma^2_2$, 
and we would like to quantize $\tilde{U}$ and $\tilde{V}$ in order to estimate $\tilde{U}^\top \tilde{V}$. To that end we may define the unit-variance random variables $U=\frac{\tilde{U}}{\sigma_1}$ and $V=\frac{\tilde{V}}{\sigma_2}$ with corresponding distributions $P$ and $Q$, compress them using $f_1(U)$ and $f_2(V)$, and estimate the inner product as
\begin{align}
\widehat{\tilde{U}^\top \tilde{V}}=\sigma_1\sigma_2\cdot g(f_1(U),f_2(V)),
\end{align}
where $f_1,f_2,g$ attain $\DIP^*_n(R_1,R_2)$ for $P$ and $Q$. This scheme will achieve
\begin{align}
\EE\left(\tilde{U}^\top  \tilde{V}-\widehat{\tilde{U}^\top \tilde{V}}\right)^2=\sigma^2_1\sigma^2_2\cdot \EE\left(U^\top  V-g(f_1(U),f_2(V))\right)^2=\sigma^2_1\sigma^2_2\cdot \DIP^*_n(R_1,R_2).    
\end{align}
This must be optimal, since otherwise we could have attained a smaller distortion for $P$ and $Q$ by first scaling $U$ and $V$ by $\sigma_1$ and $\sigma_2$, respectively, feeding them to the better inner product quantization system, and scaling the obtained inner product estimate by $\frac{1}{\sigma_1\sigma_2}$.

Next, let us address the zero-mean assumption. Let $P$ and $Q$ be zero-mean distributions, $U\sim{P}^{\otimes n}$, $V\sim Q^{\otimes n}$, and let $\tilde{U}=U+\mu_U \bones$ and $\tilde{V}=V+\mu_V \bones$ for some $\mu_U,\mu_V\in\RR$. To encode $\tilde{U}$ and $\tilde{V}$ we may use the encoders $f_1,f_2$ designed for $P,Q$ and send 
\begin{align}
 \tilde{f}_1(\tilde{U})=\left[f_1(\tilde{U}-\mu_U \bones),\bones^\top (\tilde{U}-\mu_U \bones)\right]&=\left[f_1(U),\bones^\top U\right],\\
 \tilde{f}_2(\tilde{V})=\left[f_2(\tilde{V}-\mu_V \bones),\bones^\top (\tilde{V}-\mu_V \bones)\right]&=\left[f_2(V),\bones^\top V\right],
\end{align}
and estimate the inner product $\tilde{U}^\top\tilde{V}$ as
\begin{align}
\widehat{\tilde{U}^\top\tilde{V}}&=g\left( f_1(U),f_2(V)\right)+n\cdot\mu_U \mu_V+\mu_U\bones^\top V+\mu_V\bones^\top U,
\end{align}
so that
\begin{align}
\tilde{U}^\top\tilde{V}-\widehat{\tilde{U}^\top\tilde{V}}&=(U+\mu_U\bones)^\top (V+\mu_V\bones)-\left[g(f_1(U),f_2(V))+n\cdot\mu_U \mu_V+\mu_U\bones^\top V+\mu_V\bones^\top U\right]\nonumber\\
&=U^\top V-g(f_1(U),f_2(V)).
\end{align}
Thus, the error in the case of zero mean $U,V$ and non-zero mean $\tilde{U}=U+\mu_U\bones$, $\tilde{V}=V+\mu_V\bones$ can be made the same, at the expense of also sending a description of $\bones^T U$ and $\bones^T V$. As those are scalars, they can be described to high resolution, say $O(n^{-2})$ using $O(\log n)$ bits. Thus, for large $n$ and finite $R_1,R_2>0$, the cost of those descriptions is negligible.

Some of our bounds will rely on the distortion-rate function of $\RR$-valued source under quadratic distortion. An $(n,R,D)$ code for a source $U\sim P^{\otimes n}$ consists of an encoder $f:\RR^n\to[2^{nR}]$ and a decoder $g:[2^{nR}]\to \RR^n$ with $D=\frac{1}{n}\EE\|U-g(f(U))\|^2$. We denote by $D^*_n(R)=D^*_n(R,P)$ the smallest distortion attained by any $(n,R,D)$ code, and we denote the distortion-rate function by~\cite{PWbook24}
\begin{align}
D_P(R)=\lim_{n\to\infty}   D^*_n(R,P)=\min_{P_{Y|U}:I(U;Y)\leq R} \EE(U-Y)^2 .
\end{align}
It is also well-know~\cite{PWbook24}, that $D^*_n(R,P)\geq D_P(R)$ for any $n\geq 1$.

\subsection{Optimal Decoder and Error Expressions}

In the following, we assume $f_1$ and $f_2$ are fixed. We denote $W_U=f_1(U)$ and $W_V=f_2(V)$. Let $\hat{U}=\EE[U|W_U]$ and $\hat{V}=\EE[V|W_V]$.
\begin{proposition}
The optimal choice for $g$ is $g^*(W_U,W_V)=\hat{U}^\top \hat{V}$.
\end{proposition}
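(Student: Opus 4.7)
The plan is to invoke the standard fact that under quadratic loss the mean-squared-error-optimal estimator of a scalar random variable given observations is the conditional expectation, and then exploit the independence of $U$ and $V$ to factor this conditional expectation.

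First I would observe that for fixed encoders $f_1,f_2$, the decoder minimizing $\mathbb{E}(U^\top V - g(W_U,W_V))^2$ over all measurable $g:[2^{nR_1}]\times[2^{nR_2}]\to\RR$ is the conditional expectation
\begin{align}
g^*(W_U,W_V) = \mathbb{E}[U^\top V \mid W_U,W_V].
\end{align}
Next I would linearize this using $U^\top V = \sum_{i=1}^n U_i V_i$ and linearity of conditional expectation, reducing the problem to evaluating $\mathbb{E}[U_i V_i \mid W_U,W_V]$ for each $i$.

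The key step is then to use the independence structure. Since $U \indep V$ by assumption, and $W_U = f_1(U)$ is a deterministic function of $U$ while $W_V = f_2(V)$ is a deterministic function of $V$, the pairs $(U,W_U)$ and $(V,W_V)$ are independent. In particular, conditionally on $(W_U,W_V)$ the vectors $U$ and $V$ remain independent, so $\mathbb{E}[U_iV_i \mid W_U,W_V] = \mathbb{E}[U_i \mid W_U,W_V]\,\mathbb{E}[V_i \mid W_U,W_V]$. Moreover, $(U,W_U)\indep W_V$ implies $\mathbb{E}[U_i \mid W_U,W_V] = \mathbb{E}[U_i \mid W_U] = \hat U_i$, and symmetrically $\mathbb{E}[V_i \mid W_U,W_V] = \hat V_j$. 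Summing over $i$ yields $g^*(W_U,W_V) = \sum_{i=1}^n \hat U_i \hat V_i = \hat U^\top \hat V$, as claimed.

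There is essentially no obstacle here; the only subtlety worth flagging is the use of the independence $(U,W_U)\indep(V,W_V)$ to drop the conditioning on the ``other side'', which is what makes the decoder factor into an inner product of the two separately MMSE-reconstructed vectors. This is exactly why dealing with the two sources in a distributed (non-joint) encoding fashion still admits a clean optimal decoder.
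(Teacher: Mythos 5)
Your proposal is correct and matches the paper's proof: both identify the optimal decoder as the conditional expectation $\EE[U^\top V\mid W_U,W_V]$ and then factor it using the independence $(U,W_U)\indep(V,W_V)$. Your coordinate-wise expansion is just a slightly more explicit version of the same one-line factorization (and note the minor index typo, $\hat V_j$ for $\hat V_i$).
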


\begin{proof}
The minimum mean squared error (MMSE) estimator of a random variable $X$ from another random variable $Y$ is $\hat{X}=\EE[X|Y]$. Thus,
\begin{align}
g^*(W_U,W_V)=\EE[U^\top  V|W_U,W_V]=\EE[U^\top |W_U]\EE[V|W_V]=\hat{U}^\top  \hat{V},
\end{align}
where the second equality follows since $(U,W_U)\indep (V,W_V)$.
\end{proof}

Let $e_U=U-\hat{U}$ and $\Sigma_{e_U}=\EE[(U-\hat{U})(U-\hat{U})^\top ]$. Similarly, let $e_V=V-\hat{V}$ and $\Sigma_{e_V}=\EE[(V-\hat{V})(V-\hat{V})^\top ]$. Recall that by the orthogonality principle\cite[Chapter 4.2]{gersho2012vector}, it holds that $\EE[\hat{U} e_U^\top ]=0$ and $\EE[\hat{V} e_V^\top ]=0$.

\begin{proposition} Assuming that entries of $U$ and $V$ have zero mean and unit variance, we have
that the optimal decoder achieves 
\begin{align}
\EE\left(U^\top  V-g^*(W_U,W_V) \right)^2&=\frac{1}{n}\left[\trace(\Sigma_{e_V})+\trace(\Sigma_{e_U})-\trace(\Sigma_{e_U}\Sigma_{e_V}) \right]     
\end{align}
\label{prop:generalerror}
\end{proposition}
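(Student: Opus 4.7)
The plan is to expand the error $U^\top V - \hat U^\top\hat V$ using the orthogonal decomposition $U = \hat U + e_U$, $V = \hat V + e_V$, and then exploit (i) the independence of the pair $(U,\hat U,e_U)$ from $(V,\hat V,e_V)$, and (ii) the orthogonality principle $\EE[\hat U\, e_U^\top] = 0$, $\EE[\hat V\, e_V^\top] = 0$, to kill the cross terms and reduce everything to traces of covariances.

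First I would write
\begin{align}
U^\top V - \hat U^\top \hat V \;=\; \hat U^\top e_V \;+\; e_U^\top \hat V \;+\; e_U^\top e_V,
\end{align}
and square. There are three ``diagonal'' terms and three cross terms. For each cross term, using the product form of expectation (since $U\indep V$), the factors separate into an expectation in the $U$-variables and an expectation in the $V$-variables; in each case, one of those factors involves something of the form $\EE[\hat X_i e_{X,j}]$, which vanishes by the componentwise orthogonality principle applied to the MMSE estimator $\hat X = \EE[X\mid W_X]$. Thus the three cross terms all vanish.

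Next I compute the three diagonal terms separately. For $\EE(\hat U^\top e_V)^2$, independence gives
\begin{align}
\EE(\hat U^\top e_V)^2 \;=\; \sum_{i,j}\EE[\hat U_i \hat U_j]\,\EE[e_{V,i} e_{V,j}] \;=\; \trace(\Sigma_{\hat U}\Sigma_{e_V}),
\end{align}
and symmetrically $\EE(e_U^\top \hat V)^2 = \trace(\Sigma_{e_U}\Sigma_{\hat V})$ and $\EE(e_U^\top e_V)^2 = \trace(\Sigma_{e_U}\Sigma_{e_V})$. Now I would invoke the unit-variance, iid assumption on the entries: because $\Sigma_U = I_n$ and orthogonality forces $\Sigma_U = \Sigma_{\hat U} + \Sigma_{e_U}$, we get $\Sigma_{\hat U} = I - \Sigma_{e_U}$, and likewise $\Sigma_{\hat V} = I - \Sigma_{e_V}$. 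Substituting,
\begin{align}
\trace(\Sigma_{\hat U}\Sigma_{e_V}) + \trace(\Sigma_{e_U}\Sigma_{\hat V}) + \trace(\Sigma_{e_U}\Sigma_{e_V}) = \trace(\Sigma_{e_V}) + \trace(\Sigma_{e_U}) - \trace(\Sigma_{e_U}\Sigma_{e_V}),
\end{align}
which (up to the $1/n$ normalization of the distortion) is the claimed identity.

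The only step that needs any care is the vanishing of the cross terms: one must be precise that $\EE[\hat X_i e_{X,j}] = 0$ coordinatewise (not merely $\EE[\hat X^\top e_X] = 0$), which follows because $\hat X_i$ is itself a measurable function of $W_X$ and the orthogonality principle applies to each scalar component of $e_X$ against any function of $W_X$. Apart from that, the proof is a direct algebraic consequence of the orthogonal decomposition together with the product structure coming from $U\indep V$.
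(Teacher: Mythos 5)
Your proposal is correct and follows essentially the same route as the paper's proof: the same decomposition $U^\top V-\hat U^\top\hat V=\hat U^\top e_V+e_U^\top\hat V+e_U^\top e_V$, cancellation of the cross terms via independence and the orthogonality principle, and the substitution $\Sigma_{\hat U}=I-\Sigma_{e_U}$ (and likewise for $V$) to arrive at the stated trace identity. Your extra remark about the coordinatewise form $\EE[\hat X_i e_{X,j}]=0$ is a welcome precision but not a departure from the paper's argument.
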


\begin{proof}
We have
\begin{align}
\DIP&=\EE\left((\hat{U}+e_U)^\top  (\hat{V}+e_V)-\hat{U}^\top \hat{V} \right)^2\\
&=\EE\left(\hat{U}^\top  e_V+ \hat{V}^\top  e_U+e_U^\top e_V \right)^2\\
&=\EE\left(\hat{U}^\top  e_V\right)^2+\EE\left( \hat{V}^\top  e_U\right)^2+\EE\left(e_U^\top e_V \right)^2,
\end{align}
where the last transition is due to the orthogonality principle and the statistical independence of $U$ and $V$. We have that
\begin{align}
\EE\left(\hat{U}^\top  e_V\right)^2=\trace\left[\EE[\hat{U}\hat{U}^\top  e_V e_V^\top ]\right]=\trace\left[\EE[\hat{U}\hat{U^\top }]\Sigma_{e_V} \right]
\end{align}
Recalling that $\EE[\hat{U}\hat{U^\top }]= I-\Sigma_{e_U}$, again, by the orthogonality principle, we obtain
\begin{align}
\EE\left(\hat{U}^\top  e_V\right)^2=\trace\left[(I-\Sigma_{e_U})\Sigma_{e_V} \right]= \trace(\Sigma_{e_V})-\trace(\Sigma_{e_V}\Sigma_{e_U}),\label{eq:trace1}
\end{align}
Similarly,
\begin{align}
\EE\left(\hat{V}^\top  e_U\right)^2=\trace(\Sigma_{e_U})-\trace(\Sigma_{e_V}\Sigma_{e_U}).
\end{align}
Finally,
\begin{align}
\EE\left(e_U^\top e_V \right)^2=\trace\left[\EE[e_U e_U^\top  e_V e_V^\top ] \right]=\trace(\Sigma_{e_U}\Sigma_{e_V}).
\end{align}
\end{proof}

\subsection{Simple Lower Bounds}

We show that computing the inner product with mean squared error (MSE) of $nD$ is necessarily
harder than compressing each of the random vectors $U$ and $V$ with $\ell_2$ norm of $nD$. Note that in the inner product quantization problem we are only interested in a single scalar in $\RR$ while in the standard problem of quantizing a random vector we are interested in a vector in $\RR^n$. Yet, the former problem is at least as hard as the latter.

\begin{theorem}
For any $n\geq 1$
\begin{align}
\DIP^*_n(R_1,R_2,P,Q)\geq \max\left\{D_P(R_1),D_Q(R_2) \right\},    
\end{align}    
and in particular
\begin{align}
\DIP(R_1,R_2,P,Q)\geq \max\left\{D_P(R_1),D_Q(R_2) \right\}.    
\end{align}    
\label{thm:oracleLB}
\end{theorem}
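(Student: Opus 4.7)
The plan is to prove the lower bound via a genie/oracle argument: reveal the other vector to the decoder for free and reduce the inner product quantization problem to the standard lossy compression of a single vector.

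Fix any $(n,R_1,R_2,D)$ inner product code $(f_1,f_2,g)$ and let $W_U=f_1(U)$, $W_V=f_2(V)$. Consider the ``genie-aided'' decoder that receives $(W_U,V)$ and is allowed to use any function of these to estimate $U^\top V$. Since it can always internally form $W_V=f_2(V)$ and apply $g$, its optimal MSE is at most $n D$. The optimal such estimator is the conditional mean
$$\hat{s}^*(W_U,V)=\EE[U^\top V\mid W_U,V] = V^\top \EE[U\mid W_U,V] = V^\top \hat{U}(W_U),$$
where $\hat{U}(W_U)\eqdef \EE[U\mid W_U]$ and the last equality uses the independence $U\indep V$ (together with the fact that $W_U$ is a function of $U$).

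Next I evaluate the MSE of $\hat{s}^*$. Since $V$ is independent of $(U,\hat{U})$ and has iid zero-mean unit-variance entries, expanding the square and using $\EE[V_iV_j]=\delta_{ij}$ gives
$$\EE\bigl[(U^\top V-V^\top \hat{U})^2\bigr]=\sum_{i,j}\EE[(U_i-\hat U_i)(U_j-\hat U_j)]\,\EE[V_iV_j]=\EE\|U-\hat{U}\|^2.$$
Chaining the inequalities yields $\EE\|U-\hat U\|^2\le n D$. But $W_U\in[2^{nR_1}]$ and $\hat U$ is a deterministic function of $W_U$, so $(f_1,\hat U(\cdot))$ is an $(n,R_1,D')$ code for the source $U\sim P^{\otimes n}$ with $D'\le D$. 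By definition of the operational distortion-rate function and the fact that $D^*_n(R_1,P)\ge D_P(R_1)$, this forces $D\ge D_P(R_1)$.

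The bound $D\ge D_Q(R_2)$ follows by the symmetric argument (reveal $U$ to the decoder instead). Taking the infimum over all codes gives $\DIP^*_n(R_1,R_2,P,Q)\ge \max\{D_P(R_1),D_Q(R_2)\}$, and passing to the $\limsup$ gives the asymptotic version. There is no serious obstacle here; the only subtlety is that one must use the unit-variance assumption on $V$ (already imposed in Section~\ref{sec:iidIPdefsandBounds}) to collapse the cross-terms and recover exactly $\EE\|U-\hat U\|^2$, which is what makes the genie-aided MSE coincide with standard quadratic reconstruction error.
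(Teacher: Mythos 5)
Your proof is correct and is essentially the paper's own argument: the paper also proves this as an ``oracle'' bound, writing the exact two-sided error as $\frac{1}{n}[\trace(\Sigma_{e_U})+\EE(\hat U^\top e_V)^2]$ and dropping the nonnegative second term, which is exactly your genie-aided reduction since $\EE[(V^\top(U-\hat U))^2]=\trace(\Sigma_{e_U})$ when $V$ has iid zero-mean unit-variance entries independent of $U$. The only cosmetic difference is that you construct the genie decoder explicitly rather than specializing the general error decomposition of Proposition~\ref{prop:generalerror}.
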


\begin{proof}
From Proposition~\ref{prop:generalerror} we have that for any $f_1:\RR^n\to[2^{nR_1}]$ and $f_2:\RR^n\to[2^{nR_2}]$
\begin{align}
\EE\left(U^\top  V-g^*(W_U,W_V) \right)^2&=\frac{1}{n}\left[\trace(\Sigma_{e_V})+\trace(\Sigma_{e_U})-\trace(\Sigma_{e_U}\Sigma_{e_V}) \right]\\
&=\frac{1}{n}\left[\trace(\Sigma_{e_U})+\EE(\hat{U}^\top  e_V)^2 \right]\label{eq:varianceeq}\\
&\geq\frac{1}{n}\trace(\Sigma_{e_U})\label{eq:noquanterror}\\
&\geq D^*_{n}(R_1,P),
\end{align}
where~\eqref{eq:varianceeq} follows from~\eqref{eq:trace1}, and the last inequality follows since $W_U$ is an encoding of $U$ with $2^{nR_1}$ codewords, which must incur distortion at least $D^*_n(R_1,P)$ by definition. Note that the inequality~\eqref{eq:noquanterror} holds with equality in the ``single-sided'' case where only $U$ is quantized while $\hat{V}=V$, so that $e_V=0$.
The bound $D^*_n(R_1,R_2,P,Q)\geq D^*_n(R_2,Q)$ follows similarly. Our statement now follows since $D^*_{n}(R_1,P)\geq D_P(R_1)$ and $D^*_{n}(R_2,Q)\geq D_Q(R_2)$ for any $n\geq 1$. 
\end{proof}

\section{Compression for Inner Product Computation: The Symmetric Case}
\label{sec:simmetricIP}

In this section we assume $P=Q$, $R_1=R_2=R$, and define $\DIP_n^*(R,P)=D_n^*(R,R,P,P)$, and
$\DIP(R,P)=\DIP(R,R,P,P)$. We first develop a simple upper bound based on using the same
encoder for both vectors (that is $f=f_1=f_2$), that time-shares between a ``good'' encoder for
$P$ under quadratic distortion, and a zero-rate encoder. We then develop a lower bound on the
distortion of inner product compression, which shows that for the symmetric case, using the same
encoder $f=f_1=f_2$ for both $U$ and $V$ is optimal, and depends on the spectrum of the covariance
matrix of $e_U=U-\EE[U|f(U)]$. We then give some constraints on the error spectrum that can be
attained by a rate $R$ encoder. Using this characterization we obtain a general lower on
$\DIP(R,P)$. Thus, overall we show in this section that for any iid source $P=Q$ with $\EE_{U_i\sim
P}[U_i] = 0$, $\EE_{U_i\sim
P}[U_i^2] = 1$ we have 
	$$  \Gamma(R+D(P\|\mathcal{N}(0,1))) \le \DIP(R,P) \le \Gamma(R)\,,$$
in particular showing that $\DIP(R,\mathcal{N}(0,1)) = \Gamma(R)$. 


\subsection{Upper Bound}

Define the function 
\begin{align}
\phi(x)=2x-x^2.    
\end{align}
and note that $x\mapsto\phi(x)$ is increasing and concave on $[0,1]$. 
We give a time-sharing upper bound on $\DIP(R,P)$ in terms of $\phi(D_P(R))$.

\begin{theorem}
Assuming that
$P$ has zero mean and unit variance, we have
 \begin{align}
 \DIP(R,P)\leq \min_{0\leq \kappa\leq 1} (1-\kappa)+\kappa\cdot
 \phi\left(D_{P}\left(\frac{R}{\kappa}\right) \right) \le \Gamma(R)\,,
 \end{align} 
where $\Gamma(R)$ is defined in~\eqref{eq:barphidef}. 
 \label{thm:TSupperbound}
\end{theorem}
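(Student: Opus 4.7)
The plan is to use the same encoder $f$ for both $U$ and $V$ (this is natural in the symmetric case $P=Q$) and construct $f$ by time-sharing: apply a good rate-$R/\kappa$ quantizer for $P$ to the first $\kappa n$ coordinates, and ignore (i.e., reconstruct as $0$) the remaining $(1-\kappa) n$ coordinates. The total rate budget is $\kappa n \cdot (R/\kappa) = n R$ bits, so this is an admissible $(n,R,\cdot)$ code.

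First, I would invoke Proposition~\ref{prop:generalerror} to compute the resulting MSE. Since $P=Q$ and the same deterministic encoder is used on both $U$ and $V$, the error covariance matrices satisfy $\Sigma_{e_U}=\Sigma_{e_V}\equiv \Sigma_e$, so
\begin{align*}
n\cdot \DIP\;=\;2\trace(\Sigma_e)-\trace(\Sigma_e^2)\;=\;\sum_{i=1}^n \phi(\lambda_i),
\end{align*}
where $\lambda_1,\ldots,\lambda_n$ are the eigenvalues of $\Sigma_e$. Note $\lambda_i\in[0,1]$ since $\Sigma_e\preceq I$ (each $U_i$ has unit variance). In the time-sharing construction, $\Sigma_e$ decomposes into two blocks: an identity block of size $(1-\kappa)n$ on the untouched coordinates (contributing $(1-\kappa)n$ eigenvalues equal to $1$), plus a $\kappa n\times\kappa n$ block coming from applying the optimal rate-$R/\kappa$ quantizer to the first $\kappa n$ coordinates. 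The $\kappa n$ eigenvalues of the latter block have average value tending to $D_P(R/\kappa)$ as $n\to\infty$, since $D^*_{\kappa n}(R/\kappa,P)\to D_P(R/\kappa)$.

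Next, I would exploit concavity of $\phi(x)=2x-x^2$ on $[0,1]$. Splitting the sum into the two blocks and applying Jensen's inequality to the touched block,
\begin{align*}
\frac{1}{n}\sum_{i=1}^n \phi(\lambda_i)\;=\;(1-\kappa)\phi(1)+\frac{\kappa n}{n}\cdot \frac{1}{\kappa n}\sum_{\text{touched}}\phi(\lambda_j)\;\leq\;(1-\kappa)+\kappa\,\phi\!\left(\tfrac{1}{\kappa n}\sum_{\text{touched}}\lambda_j\right),
\end{align*}
where we used $\phi(1)=1$. Taking $n\to\infty$, the inner average tends to $D_P(R/\kappa)$, so $\DIP(R,P)\leq (1-\kappa)+\kappa\,\phi(D_P(R/\kappa))$, and minimizing over $\kappa\in[0,1]$ yields the first inequality in the theorem.

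Finally, to prove the second inequality $\min_\kappa[(1-\kappa)+\kappa\phi(D_P(R/\kappa))]\leq \Gamma(R)$, I would use that Gaussian is worst-case for rate-distortion among unit-variance sources: $D_P(r)\leq 2^{-2r}$. Since $\phi$ is monotone increasing on $[0,1]$, it suffices to minimize $(1-\kappa)+\kappa\phi(2^{-2R/\kappa})=1-(R/\tau)(1-2^{-2\tau})^2$ over the substitution $\tau=R/\kappa\geq R$. A direct calculation shows the unconstrained minimizer $\tau^*$ solves $2^{2\tau}=1+4\tau\ln 2$, i.e.~$\tau^*=R^*$. For $R\geq R^*$, the constraint $\tau\geq R$ is binding and the minimum is $\phi(2^{-2R})=2\cdot 2^{-2R}-2^{-4R}$; for $R<R^*$, the minimum is attained at $\kappa=R/R^*$ and equals $1-(R/R^*)(1-2\cdot 2^{-2R^*}+2^{-4R^*})$. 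Both cases match the definition~\eqref{eq:barphidef} of $\Gamma(R)$. The main (minor) obstacle is bookkeeping the finite-$n$ slack, ensuring that the rate-$R/\kappa$ quantizer applied to the $\kappa n$ iid $P$-coordinates really achieves per-letter distortion $D_P(R/\kappa)+o(1)$, which follows from standard rate-distortion achievability for iid sources.
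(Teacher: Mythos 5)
Your proposal is correct and follows essentially the same route as the paper: time-sharing with the same encoder on both vectors, a block-diagonal error covariance, and a Jensen-type step (your concavity of $\phi$ on the touched block is exactly the paper's $\|\Sigma\|_F^2\geq\frac{1}{m}(\trace\Sigma)^2$ inequality), followed by the Gaussian worst-case bound $D_P(r)\leq 2^{-2r}$. The only cosmetic difference is that you carry out the convex-envelope minimization over $\tau=R/\kappa$ inline (correctly recovering the fixed-point equation $2^{2R^*}=1+4R^*\ln 2$), whereas the paper defers that calculation to Appendix~\ref{appendix:convexenvelop}.
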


\begin{proof} Note that it is sufficient to prove the first inequality. Indeed, we know that $D_P(R)\le
D_{\mathcal{N}(0,1)}(R)=2^{-2R}$, e.g.~\cite[Theorem 26.3]{PWbook24}, and in
Appendix~\ref{appendix:convexenvelop} we show that 
	$$ \Gamma(R) = \min_{0\leq \kappa\leq 1} (1-\kappa)+\kappa\cdot \phi\left(2^{-2\frac{R}{\kappa}}\right)\,.$$

In order to show the first inequality, we will prove that
\begin{align}
 \DIP_n^*(R,P)\leq \min_{\kappa\in\frac{1}{n}\left\{0,1,\ldots,n\right\}} (1-\kappa)+\kappa\cdot \phi\left(D^*_{\kappa n}\left(\frac{R}{\kappa} ,P\right) \right)    
 \end{align}   
 from which the statement immediately follows.
Let $\kappa\in\frac{1}{n}\left\{0,1,\ldots,n\right\}$, and consider a compressor for $P^{\otimes \kappa n}$ under quadratic distortion: $f:\RR^{\kappa n}\to [2^{nR}=2^{n\kappa\frac{R}{\kappa}}]$ and  corresponding optimal decoder $g:[2^{nR}=2^{n\kappa\frac{R}{\kappa}}] \to\RR^{\kappa n}$, $g(w)=\hat{U}^{\kappa n}=\EE[U^{\kappa n}|f(U^{\kappa n})=w]$, that attains 
\begin{align}
D =\frac{1}{\kappa n}\EE\|U^{\kappa n}-\hat{U}^{\kappa n}\|^2=\frac{1}{\kappa n}\trace (\Sigma_{e_{U^{\kappa n}}}).
\end{align}
We encode $U$ by applying $f$ on $U^{\kappa n}$ and do not describe the other coordinates. The resulting covariance error matrix is therefore block diagonal of the form
\begin{align}
\Sigma_{e_U}=\left[\begin{array}{cc}
\Sigma_{e_{U^{\kappa n}}} & 0\\
0 & I_{(1-\kappa)n}
\end{array}\right].
\end{align}
Consequently,
\begin{align}
\trace(\Sigma_{e_U})&=	\trace (\Sigma_{e_{U^\kappa n}})+\trace( I_{(1-\kappa)n})=n\kappa D+n(1-\kappa)\\
\trace(\Sigma_{e_U}\Sigma_{e_U})&=\trace (\Sigma_{e_{U^\kappa n}}\Sigma_{e_{U^\kappa n}})+\trace( I_{(1-\kappa)n})=\|\Sigma_{e_{U^\kappa n}}\|_F^2+n(1-\kappa).\label{eq:traceTS}
\end{align}
Recall that for a positive semi-definite matrix $A\in\RR^{m\times m}$ it holds that $\|A\|_F^2\geq \frac{1}{m}(\trace(A))^2$. This follows since the vector $\lambda=\eig(A)$ has non-negative entries, so that $\trace(A)=\|\lambda\|_1$, and therefore $\|A\|_F^2=\|\lambda\|_2^2\geq \frac{1}{m}\|\lambda\|_1^2=\frac{1}{m}(\trace(A))^2$. Thus, 
\begin{align}
\|\Sigma_{e_{U^\kappa n}}\|_F^2\geq \frac{1}{\kappa n}(\trace (\Sigma_{e_{U^\kappa n}}))^2=\kappa n D^2,
\end{align}
and, by~\eqref{eq:traceTS}, we have
\begin{align}
\trace(\Sigma_{e_U}\Sigma_{e_U})&\geq n \kappa D^2 +n(1-\kappa).
\end{align}
We use the same encoder also for encoding $V$, such that $\Sigma_{e_V}=\Sigma_{e_U}$, and use the optimal decoder $g^*$ for estimating $U^\top  V$. Applying Proposition~\ref{prop:generalerror}, we obtain
\begin{align}
\DIP&=\frac{1}{n}\left[\trace(\Sigma_{e_U})+\trace(\Sigma_{e_V})-\trace(\Sigma_{e_U}\Sigma_{e_V})\right]\\
&=\frac{1}{n}\left[2\trace(\Sigma_{e_U})-\|\Sigma_{e_U}\|_F^2\right]\\
&\leq (1-\kappa)+\kappa\cdot (2D-D^2)\\
&=(1-\kappa)+\kappa \phi(D).
\end{align}
Taking the compressor $f$ that attains $D^*_{\kappa n}\left(\frac{R}{\kappa} ,P\right)$, we obtain the claimed result.
\end{proof}

\subsection{Lower Bound}\label{sec:iprd_lb}

\begin{lemma}
For the symmetric case, assuming that
$P$ has zero mean and unit variance,  there is no loss of optimality in taking $f_1=f_2=f$, and
\begin{align}
\DIP_n^*(R,P)=\frac{1}{n}\inf_f\left[ 2 \|\lambda(f)\|_1-\|\lambda(f)\|_2^2\right]=\frac{1}{n}\inf_f \sum_{i=1}^n \phi\left(\lambda_i(f)\right),    
\end{align}
where the infimum runs over all encoders $f:\RR^n\to [2^{nR}]$, and 
\begin{align}
\lambda(f)=\eig\left(\Sigma_{e^f_U} \right),\label{eq:eigfdef}
\end{align}
where $e^f_U=U-\EE[U|f(U)]$, $\Sigma_{e^f_U}=\EE[e^f_U e^{f,\top}_U]$.
\label{lem:symmetricisoptimal}
\end{lemma}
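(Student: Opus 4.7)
\medskip
\noindent\textbf{Proof proposal.} The plan is to show the asymmetric infimum over pairs $(f_1,f_2)$ of rate-$R$ encoders and the symmetric infimum over single $f$ coincide; this is entirely an algebraic consequence of Proposition~\ref{prop:generalerror} and Cauchy--Schwarz.

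First I would fix arbitrary rate-$R$ encoders $f_1,f_2:\RR^n\to[2^{nR}]$, and let $\Sigma_1=\Sigma_{e^{f_1}_U}$, $\Sigma_2=\Sigma_{e^{f_2}_V}$ be the corresponding MMSE error covariances (note $\Sigma_2$ has the same distribution as $\Sigma_{e^{f_2}_U}$ since $V\sim P^{\otimes n}$, so either encoder can be used on either vector). Plugging the optimal decoder from Proposition~\ref{prop:generalerror} gives
\begin{align}
D(f_1,f_2)\;\eqdef\;\tfrac{1}{n}\EE\bigl(U^\top V-g^*(W_U,W_V)\bigr)^2
=\tfrac{1}{n}\bigl[\trace(\Sigma_1)+\trace(\Sigma_2)-\trace(\Sigma_1\Sigma_2)\bigr].
\end{align}
Specializing to $f_1=f_2=f$, and writing $\lambda=\eig(\Sigma_{e^f_U})$, this collapses to
\begin{align}
D(f,f)=\tfrac{1}{n}\bigl[2\trace(\Sigma_{e^f_U})-\|\Sigma_{e^f_U}\|_F^2\bigr]
=\tfrac{1}{n}\bigl(2\|\lambda\|_1-\|\lambda\|_2^2\bigr)
=\tfrac{1}{n}\sum_{i=1}^n\phi(\lambda_i),
\end{align}
which is precisely the right-hand side of the claim. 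So it suffices to prove $\inf_f D(f,f)\le\inf_{f_1,f_2}D(f_1,f_2)$; the reverse inequality is trivial.

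The key step is the following averaging argument. Cauchy--Schwarz for the Frobenius inner product gives $\trace(\Sigma_1\Sigma_2)\le\|\Sigma_1\|_F\|\Sigma_2\|_F$, and AM--GM then yields $\|\Sigma_1\|_F\|\Sigma_2\|_F\le\tfrac{1}{2}\bigl(\|\Sigma_1\|_F^2+\|\Sigma_2\|_F^2\bigr)$. Substituting into the expression for $D(f_1,f_2)$ gives
\begin{align}
D(f_1,f_2)
\;\ge\;\tfrac{1}{n}\Bigl[\trace(\Sigma_1)+\trace(\Sigma_2)-\tfrac{1}{2}\bigl(\|\Sigma_1\|_F^2+\|\Sigma_2\|_F^2\bigr)\Bigr]
=\tfrac{D(f_1,f_1)+D(f_2,f_2)}{2}
\;\ge\;\min\{D(f_1,f_1),\,D(f_2,f_2)\}.
\end{align}
Hence for every asymmetric pair $(f_1,f_2)$ the better of the two symmetric choices $f_1$ or $f_2$ already matches or beats it, so taking infimum over $f$ completes the proof.

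There is no real obstacle here: the only nontrivial ingredient is the observation that, because $U$ and $V$ share the same distribution, $f_1$ and $f_2$ are interchangeable as single-source compressors, which lets the Cauchy--Schwarz/AM--GM chain reduce a cross-term in $\trace(\Sigma_1\Sigma_2)$ to the symmetric Frobenius norms. The only subtlety worth flagging is that the argument uses the optimal decoder (MMSE), which is why Proposition~\ref{prop:generalerror} must be invoked before one can write the distortion purely in terms of the error covariances.
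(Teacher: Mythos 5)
Your proposal is correct and follows essentially the same route as the paper: both reduce to the error-covariance expression of Proposition~\ref{prop:generalerror} and then symmetrize via Cauchy--Schwarz for the trace (Frobenius) inner product. The only cosmetic difference is that the paper applies Cauchy--Schwarz to $I-\Sigma_1$ and $I-\Sigma_2$ and bounds by the maximum of $\trace((I-\Sigma_i)^2)$, whereas you apply it to $\Sigma_1,\Sigma_2$ directly and combine with AM--GM to get $D(f_1,f_2)\geq \tfrac{1}{2}(D(f_1,f_1)+D(f_2,f_2))\geq\min_i D(f_i,f_i)$; the two chains yield the identical final bound.
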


\begin{proof}
By Proposition~\ref{prop:generalerror}, we have that for any two encoders $f_1:\RR^n\to [2^{nR}]$ and $f_2:\RR^n\to [2^{nR}]$, when the optimal decoder is used, it holds that
\begin{align}
\DIP&=\frac{1}{n}\left[\trace(\Sigma_{e^{f_1}_U})+\trace(\Sigma_{e^{f_2}_V})-\trace(\Sigma_{e^{f_1}_U}\Sigma_{e^{f_2}_V}) \right]\\
&=\frac{1}{n}\left[\trace(\Sigma_{e^{f_1}_U})+\trace(\Sigma_{e^{f_2}_U})-\trace(\Sigma_{e^{f_1}_U}\Sigma_{e^{f_2}_U}) \right],\label{eq:Derrorf1f2}
\end{align}
where the last equality follows since $P=Q$, and therefore $U$ and $V$ have the same distribution. Rearranging~\eqref{eq:Derrorf1f2}, we obtain
\begin{align}
n(1-\DIP) &=
\trace((I_n-\Sigma_{e^{f_1}_U})(I_n-\Sigma_{e^{f_2}_U}))\\
&\leq \sqrt{\trace((I_n-\Sigma_{e^{f_1}_U})^2)\trace((I_n-\Sigma_{e^{f_2}_U})^2)}\label{eq:CSforTrace}\\
&\leq \max_{i\in\{1,2\}}\trace((I-\Sigma_{e^{f_i}_U})^2)
\end{align}
where~\eqref{eq:CSforTrace} follows since $(C,D) \mapsto \trace(CD)$ defines inner product on
symmetric matrices $C,D\in \RR^{n\times n}$, and therefore the Cauchy-Schwartz inequality holds. We have therefore obtained
\begin{align}
  \DIP&\geq  \frac{1}{n}   \min_{i\in\{1,2\}}(2\trace\Sigma_{e^{f_i}_U} - \trace \Sigma_{e^{f_i}_U}^2)\\
  &\geq \frac{1}{n}\inf_f\left[ 2 \|\lambda(f)\|_1-\|\lambda(f)\|_2^2\right].
\end{align}
Note that all inequalities in the derivation hold with equality if $f_1=f_2=f^*$, where $f^*$ attains infimum above (or is a sequence of functions approaching this infimum).
\end{proof}

The following Shannon-lower-bound-type lemma constrains the eigenvalues of an MSE matrix for estimating $U$ from a $2^{nR}$-level quantizer $f:\RR^n\to[2^{nR}]$.

\begin{lemma}
Assume $P$ has zero mean and unit variance. Let $f:\RR^n\to [2^{nR}]$ be a $2^{nR}$-level quantizer, and define $\lambda(f)=(\lambda_1,\ldots,\lambda_n)\in[0,1]^n$ as in~\eqref{eq:eigfdef}. Then
\begin{align}
\frac{1}{n}\sum_{i=1}^n \frac{1}{2}\log \frac{1}{\lambda_i}\leq R+D(P\|\m{N}(0,1)).
\end{align}
\label{lem:SLB}
\end{lemma}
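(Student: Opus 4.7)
The plan is to execute a standard Shannon-lower-bound argument, but carefully tracking the full covariance structure of the reconstruction error rather than only its trace.

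First I would use data processing: since $\hat U = \EE[U\mid f(U)]$ is a deterministic function of $f(U)$, we have
\begin{align}
I(U;\hat U) \le I(U;f(U)) \le H(f(U)) \le nR.
\end{align}
Next, expand the mutual information as $I(U;\hat U) = h(U) - h(U\mid\hat U)$, and use the translation invariance of differential entropy together with ``conditioning reduces entropy'' to write
\begin{align}
h(U\mid\hat U) = h(U-\hat U\mid\hat U) = h(e\mid\hat U) \le h(e).
\end{align}

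Second, I would compute each of $h(U)$ and $h(e)$. Since $U$ has iid coordinates under $P$ with zero mean and unit variance, the identity $h(P) = \tfrac{1}{2}\log(2\pi e) - D(P\|\m{N}(0,1))$ (valid when $P$ has variance one) gives
\begin{align}
h(U) = n\cdot h(P) = \tfrac{n}{2}\log(2\pi e) - n D(P\|\m{N}(0,1)).
\end{align}
For $h(e)$, I would invoke the Gaussian maximum-entropy principle: among all random vectors in $\RR^n$ with covariance $\Sigma_e$, the Gaussian maximizes differential entropy, so
\begin{align}
h(e) \le \tfrac{1}{2}\log\bigl((2\pi e)^n \det \Sigma_e\bigr) = \tfrac{n}{2}\log(2\pi e) + \tfrac{1}{2}\sum_{i=1}^n \log \lambda_i.
\end{align}

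Finally, combining these bounds via $nR \ge I(U;\hat U) \ge h(U) - h(e)$, the $\tfrac{n}{2}\log(2\pi e)$ terms cancel and I obtain
\begin{align}
nR \ge -n D(P\|\m{N}(0,1)) + \tfrac{1}{2}\sum_{i=1}^n \log(1/\lambda_i),
\end{align}
which rearranges exactly to the claim after dividing by $n$. The only mild technical obstacle is making sure all differential entropies are well-defined; this is not really an issue because we may assume the encoder achieves finite MSE (otherwise $\trace \Sigma_e = \infty$, hence some $\lambda_i$ blows up and the RHS of the claim becomes trivially $-\infty$, making the inequality vacuous). Everything else is a clean application of data processing, conditioning-reduces-entropy, and the maximum-entropy characterization of the Gaussian.
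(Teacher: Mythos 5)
Your proof is correct and follows essentially the same route as the paper's: bound $I(U;f(U))\le nR$, control the conditional entropy of the error via the Gaussian maximum-entropy principle applied to $\Sigma_{e}$, and convert $h(U)=nh(P)$ into $\tfrac{n}{2}\log(2\pi e)-nD(P\|\m{N}(0,1))$ using the unit-variance assumption. The only cosmetic difference is the degenerate case you flag: since $\lambda_i\in[0,1]$ the MSE is always finite, and the relevant trivial case is actually $h(P)=-\infty$ (whence $D(P\|\m{N}(0,1))=\infty$), which the paper dispenses with at the outset.
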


\begin{proof}
We may assume without loss of generality that $h(P)>-\infty$, as otherwise $D(P\|\m{N}(0,1))=\infty$ and the statement trivially holds.
Let $e^f_U=U-\EE[U|f(U)]$, $\Sigma_{e^f_U}=\EE[e^f_U e^{f,\top}_U]$. Since the Gaussian distribution maximizes differential entropy under second moment constraints, we have that
\begin{align}
h(U|f(U))\leq \frac{1}{2}\log\det \left((2\pi e)\Sigma_{e^f_U}\right)=n\cdot \frac{1}{n}\sum_{i=1}^n \frac{1}{2}\log (2\pi e \lambda_i).
\end{align}
Consequently,
\begin{align}
nR&\geq I(U;f(U))=h(U)-h(U|f(U))\geq h(U)-n\cdot \frac{1}{n}\sum_{i=1}^n \frac{1}{2}\log (2\pi e \lambda_i)\\
&=h(\m{N}^{\otimes n}(0,1))-n\cdot \frac{1}{n}\sum_{i=1}^n \frac{1}{2}\log (2\pi e \lambda_i)+h(P^{\otimes n})-h(\m{N}^{\otimes n}(0,1))\\
&=n\left(\frac{1}{n}\sum_{i=1}^n \frac{1}{2}\log\frac{1}{\lambda_i} -D(P\|\m{N}(0,1))\right),
\end{align}
which yields the claimed result.
\end{proof}

\begin{theorem}
Assuming $P$ has zero mean and unit variance, for any $n\geq 1$
\begin{align}
\DIP^*_n(R,P)\geq \Gamma\left(R+D(P\|\m{N}(0,1))\right),
\end{align}

where $\Gamma(R)$ is defined in~\eqref{eq:barphidef}, and in particular
\begin{align}
\DIP(R,P)\geq \Gamma\left(R+D(P\|\m{N}(0,1))\right).
\end{align}
\label{thm:SLBbasedDistortionLB}
\end{theorem}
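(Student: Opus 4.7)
My plan is to combine the two preceding lemmas to reduce the theorem to a one-dimensional convex-envelope problem on $[0,\infty)$. I would start from Lemma~\ref{lem:symmetricisoptimal}, which gives
$$\DIP^*_n(R,P) \;=\; \tfrac{1}{n}\inf_f \sum_{i=1}^n \phi(\lambda_i(f)),$$
and invoke Lemma~\ref{lem:SLB} together with $\Sigma_{e^f_U}\preceq \Sigma_U = I_n$ (so $\lambda_i(f)\in[0,1]$) to conclude that it suffices to prove
\begin{align*}
\inf\Big\{\tfrac{1}{n}\sum_{i=1}^n h(R_i)\;:\;R_i\ge 0,\ \tfrac{1}{n}\sum_i R_i\le \tilde R\Big\}\;\ge\; \Gamma(\tilde R),
\end{align*}
where $\tilde R := R + D(P\|\m{N}(0,1))$, $R_i:=\tfrac{1}{2}\log(1/\lambda_i)\in[0,\infty]$, and $h(r):=\phi(2^{-2r}) = 2\cdot 2^{-2r} - 2^{-4r}$. (The case $\lambda_i=0$ can be ruled out, as it forces $R_i=\infty$ and violates the rate constraint.)

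The core of the argument will be to identify $\Gamma$ as the lower convex envelope of $h$ on $[0,\infty)$. A direct computation of $h''$ shows that $h$ is concave on $[0,\tfrac12]$ and convex on $[\tfrac12,\infty)$, so its convex envelope must be affine on some interval $[0,R^*]$ tangent to $h$ at $R^*$ and passing through $(0, h(0)) = (0,1)$, and equal to $h$ on $[R^*,\infty)$. The $C^1$ tangency condition $h(R^*) - R^* h'(R^*) = 1$ rearranges (via the substitution $x=2^{-2R^*}$) to $2^{2R^*} - 1 = 4R^*\ln 2$, which is exactly the fixed-point equation stated in the theorem. Matching this against the piecewise formula~\eqref{eq:barphidef} then yields the three properties I will need: (i) $\Gamma(r)\le h(r)$ for all $r\ge 0$, (ii) $\Gamma$ is convex on $[0,\infty)$, and (iii) $\Gamma$ is non-increasing on $[0,\infty)$ (the linear piece has negative slope because $h(R^*)<h(0)=1$, and $h$ itself is decreasing). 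This identification is what Appendix~\ref{appendix:convexenvelop} handles.

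Given these three properties, the theorem follows at once from the chain
\begin{align*}
\tfrac{1}{n}\sum_i h(R_i)\;\stackrel{(i)}{\ge}\;\tfrac{1}{n}\sum_i \Gamma(R_i)\;\stackrel{(ii)}{\ge}\;\Gamma\Big(\tfrac{1}{n}\sum_i R_i\Big)\;\stackrel{(iii)}{\ge}\;\Gamma(\tilde R),
\end{align*}
i.e., pointwise bound, Jensen's inequality, and monotonicity. I expect the main obstacle to be the convex-envelope identification in the second paragraph, both in carefully checking the tangency computation and in verifying that the piecewise definition~\eqref{eq:barphidef} indeed coincides with this envelope; the final chain is then entirely routine.
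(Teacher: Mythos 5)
Your proposal is correct and follows essentially the same route as the paper's proof: Lemma~\ref{lem:symmetricisoptimal} to reduce to $\frac{1}{n}\inf_f\sum_i\phi(\lambda_i)$, the change of variables $R_i=\frac12\log(1/\lambda_i)$ with the constraint from Lemma~\ref{lem:SLB}, the identification of $\Gamma$ as the convex envelope of $\phi(2^{-2r})$ (which the paper likewise delegates to Appendix~\ref{appendix:convexenvelop}, with the same tangency condition yielding the fixed-point equation for $R^*$), and then the pointwise-bound/Jensen/monotonicity chain. No gaps.
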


\begin{proof}
Let $f:\RR^n\to [2^{nR}]$ be a $2^{nR}$-level quantizer, and define $\lambda(f)=(\lambda_1,\ldots,\lambda_n)\in[0,1]^n$ as in~\eqref{eq:eigfdef}. Denote by $K=K_f$ the uniform distribution over (the multiset) $\lambda(f)$. By Lemma~\ref{lem:symmetricisoptimal}, we have that
\begin{align}
\DIP_n^*(R,P)=\inf_f \EE_{\lambda\sim K_f}\phi(\lambda)=\inf_f \EE_{\lambda\sim K_f}\phi\left(2^{-2 R_{\m{N}}(\lambda)}\right),
\end{align}
where $R_{\m{N}}(\lambda)=\frac{1}{2}\log\frac{1}{\lambda}$. Denote $\Gamma_1(R)=\phi(2^{-2R})$. In Appendix~\ref{appendix:convexenvelop} we show that
\begin{align}
\text{convex envelope of } \Gamma_1(R)=\Gamma(R).    
\end{align}
It therefore follows that
\begin{align}
\DIP_n^*(R,P)&=\inf_f \EE_{\lambda\sim K_f} \Gamma_1(R_{\m{N}}(\lambda))\\
&\geq \inf_f \EE_{\lambda\sim K_f} \Gamma\left(R_{\m{N}}(\lambda)\right)\\
&\geq \inf_f  \Gamma\left(\EE_{\lambda\sim K_f}R_{\m{N}}(\lambda)\right)\\
&\geq \Gamma\left(R+D(P\|\m{N}(0,1))\right),
\end{align}
where we have used Lemma~\ref{lem:SLB} in the last inequality.
\end{proof}

\subsection{The Symmetric Gaussian case}

Combining Theorem~\ref{thm:TSupperbound} and Theorem~\ref{thm:SLBbasedDistortionLB}, we obtain a complete characterization for the Gaussian case.

\begin{theorem}\label{thm:symgsn_iprd}
\begin{align}
\DIP(R,\m{N}(0,1))=\Gamma(R)=\begin{cases}
1-\left(1-\phi(2^{-2R^*})\right)\frac{R}{R^*} & R<R^*\\
\phi(2^{-2R}) & R\geq R^*
\end{cases}.
\end{align}
\end{theorem}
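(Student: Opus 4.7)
The plan is to observe that this theorem is essentially a direct corollary obtained by specializing the two prior bounds to the Gaussian case and noting that $D(\m{N}(0,1)\|\m{N}(0,1))=0$, so the upper and lower bounds coincide.

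First, I would invoke Theorem~\ref{thm:TSupperbound}, which gives the upper bound $\DIP(R,P)\le \Gamma(R)$ for any zero-mean unit-variance $P$. Specializing to $P=\m{N}(0,1)$ immediately yields $\DIP(R,\m{N}(0,1))\le \Gamma(R)$. No additional work is needed here, since the time-sharing achievability scheme was already shown to attain the convex envelope $\Gamma(R)$, and the Gaussian distortion-rate function $D_{\m{N}(0,1)}(R)=2^{-2R}$ is exactly what appears inside $\phi$ in the definition of $\Gamma$.

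Next, I would invoke Theorem~\ref{thm:SLBbasedDistortionLB}, which gives the lower bound
\begin{align}
\DIP(R,P)\ge \Gamma(R+D(P\|\m{N}(0,1))).
\end{align}
Setting $P=\m{N}(0,1)$ makes the KL divergence vanish, so the bound reduces to $\DIP(R,\m{N}(0,1))\ge \Gamma(R)$. Combining the two inequalities gives equality, and the explicit piecewise form of $\Gamma$ is exactly the one stated in~\eqref{eq:barphidef}.

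There is no genuine obstacle here: all the heavy lifting (the time-sharing construction, the Cauchy--Schwarz trick reducing to a single encoder, the Shannon-lower-bound-type control of $\sum_i \tfrac{1}{2}\log(1/\lambda_i)$, and the identification of the convex envelope of $\Gamma_1(R)=\phi(2^{-2R})$ with $\Gamma(R)$ carried out in Appendix~\ref{appendix:convexenvelop}) has already been done. The only remark worth making is that the tightness in the lower bound requires the Shannon lower bound to be tight, which it is for Gaussians because equality holds in the maximum-entropy step $h(U|f(U))\le \tfrac{1}{2}\log\det((2\pi e)\Sigma_{e_U^f})$ when $U$ is Gaussian and $f$ is chosen so that the error is Gaussian; this matches the achievability, which uses an optimal Gaussian vector quantizer (whose existence is already implicit in $D_{\m{N}(0,1)}(R)=2^{-2R}$). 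Thus the proof is a two-line deduction from Theorems~\ref{thm:TSupperbound} and~\ref{thm:SLBbasedDistortionLB}.
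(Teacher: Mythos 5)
Your proposal is correct and matches the paper's own proof exactly: the upper bound is Theorem~\ref{thm:TSupperbound} specialized to $P=\m{N}(0,1)$ (with $\kappa=\min\{R/R^*,1\}$ and $D_{\m{N}(0,1)}(R)=2^{-2R}$), and the lower bound is Theorem~\ref{thm:SLBbasedDistortionLB} with $D(\m{N}(0,1)\|\m{N}(0,1))=0$. The extra remark on Shannon-lower-bound tightness is not needed for the argument but does no harm.
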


\begin{proof}
The upper bound follows from applying Theorem~\ref{thm:TSupperbound} with $\kappa=\min\{R/R^*,1\}$, and recalling that $D_{\m{N}(0,1)}(R)=2^{-2R}$. The lower bound follows directly from Theorem~\ref{thm:SLBbasedDistortionLB}.
\end{proof}

\section{Compression for Matrix Multiplication}
\label{sec:iidMM}

\subsection{Setup}

Let $A\in\RR^{n\times a}$ be a matrix whose entries are drawn iid from the distribution $P$ and $B\in\RR^{n\times b}$ be a matrix, statistically independent of $A$, whose entries are drawn iid from the distribution $Q$. We assume both $P$ and $Q$ are distributions with zero mean and unit variance. We consider the problem of quantizing $A$ and $B$ in order to compute their matrix multiplication $A^\top  B$. In particular, an $(n,a,b,R_1,R_2,D)$ code consists of mappings
\begin{align}
 f_1&: \RR^{n\times a}\to[2^{n a R_1}]\\
 f_2&: \RR^{n\times b}\to[2^{n b R_2}]\\
 g&:[2^{n a R_1}]\times [2^{n b R_2}]\to \RR^{a\times b},
\end{align}
with
\begin{align}
D=\DMM=\frac{1}{n\cdot a\cdot b}\EE\|A^\top  B-g(f_1(A),f_2(B))\|_F^2.    
\end{align}
We define 
\begin{align}
\DMM^*_{n,a,b}(R_1,R_2)=\DMM^*_{n,a,b}(R_1,R_2,P,Q)=\inf\left\{D~:~\exists (n,a,b,R_1,R_2,D)-\text{code} \right\}. 
\end{align}
We further define the asymptotic function
\begin{align}
\DMM_{a,b}(R_1,R_2)=\DMM_{a,b}(R_1,R_2,P,Q)=\limsup_{n\to\infty} D^*_{n,a,b}(R_1,R_2),  
\end{align}

\subsection{Basic Properties and Bounds}

Denote $W_A=f_1(A)$ and $W_B=f_2(B)$ and further denote $\hat{A}=\EE[A|W_A]$ and $\hat{B}=\EE[B|W_B]$. Define $\Sigma_A=\EE[(A-\hat A) (A-\hat A)^\top ] \in \mreals^{n\times n}$ and $\bar M_A = \EE[\hat A \hat A^\top ] \in \mreals^{n\times n}$. Similarly, $\Sigma_B=\EE[(B-\hat B) (B-\hat B)^\top ] \in \mreals^{n\times n}$ and $\bar M_B = \EE[\hat B \hat B^\top ] \in \mreals^{n\times n}$.
As in the scalar case, we still have the identities:
\begin{align}
 \Sigma_A + \bar M_A &= a  I_n\\
 \Sigma_B + \bar M_B &= b  I_n.
\end{align}
The next theorem generalizes the basic bounds we derived above for the inner product case, to the matrix multiplication case. The proofs are similar to the statements above, and are therefore omitted.
\begin{theorem}
Assume $P$ and $Q$ have zero mean and unit variance. The following hold:
\begin{enumerate}
\item For fixed $f_1,f_2$, the optimal choice for $g$ is $g^*(W_A,W_B)=\hat{A}^\top \hat{B}$, and the distortion is given by
\begin{align*}
   \DMM &=\frac{1}{n\cdot a\cdot b}\left[ \trace(\Sigma_A \bar M_B) + \trace(\Sigma_B \bar M_A) + \trace(\Sigma_A \Sigma_B)\right]\\
   &= \frac{1}{n}\left[\frac{1}{a} \trace(\Sigma_A) + \frac{1}{b} \trace(\Sigma_B) - \frac{1}{a\cdot b}\trace(\Sigma_A \Sigma_B)\right].
\end{align*}
\item The oracle lower bound (taking $\hat{B}=B$ or $\hat{A}=A)$ gives
\begin{align*}
\DMM\geq \max\left\{\frac{1}{n\cdot a}\trace \Sigma_A,\frac{1}{n\cdot b}\trace \Sigma_B  \right\},  
\end{align*}
and consequently for any $n\geq 1$
\begin{align*}
\DMM^*_{n,a,b}(R_1,R_2,P,Q)\geq  \max\left\{D_{P}(R_1),D_{Q}(R_2) \right\},
\end{align*}
and in particular
\begin{align*}
\DMM_{a,b}(R_1,R_2,P,Q)\geq  \max\left\{D_{P}(R_1),D_{Q}(R_2) \right\}.
\end{align*}
\item \label{item:matroxTS} For the symmetric case, where  $R_1=R_2=R$ and $P=Q$, we have 
 \begin{align*}
 \DMM_{a,b}(R,P)\leq \min_{0\leq \kappa\leq 1} (1-\kappa)+\kappa\cdot \phi\left(D_{P}\left(\frac{R}{\kappa}\right) \right)    
 \end{align*}   
This is asymptotically attained by quantizing only the first $\kappa n$ coordinates of each column of $A$ and each column of $B$.
\item For the symmetric case, where  $R_1=R_2=R$ and $P=Q$, for any $n\geq 1$ we have
\begin{align}
\DMM_{n,a,b}^*(R,P)&\geq\frac{1}{n}\min\left\{\inf_{f_a}\left[ 2 \|\lambda(f_a)\|_1-\|\lambda(f_a)\|_2^2\right],\inf_{f_b}\left[ 2 \|\lambda(f_b)\|_1-\|\lambda(f_b)\|_2^2\right]\right\}\nonumber\\
&=\frac{1}{n}\min\left\{\inf_{f_a} \sum_{i=1}^n \phi\left(\lambda_i(f_a)\right),\inf_{f_b} \sum_{i=1}^n \phi\left(\lambda_i(f_b)\right)\right\}, 
\label{eq:matrixSymmetricGeneralLB}
\end{align}
where the infima runs over all encoders $f_a:\RR^{n\times a}\to [2^{na R}]$, $f_b:\RR^{n\times b}\to [2^{n bR}]$, and 
\begin{align}
\lambda(f_a)=\eig\left(\frac{1}{a}\Sigma_{e^{f_a}_A} \right),~~~\lambda(f_b)=\eig\left(\frac{1}{b}\Sigma_{e^{f_b}_B} \right)
\label{eq:eigendefmat}
\end{align}
where $e^{f_a}_A=A-\EE[A|f_a(A)]$, $\Sigma_{e^{f_a}_A}=\EE[e^{f_a}_A e^{f_a,\top}_A]$, and $e^{f_b}_B=B-\EE[B|f_b(B)]$, $\Sigma_{e^{f_b}_B}=\EE[e^{f_b}_B e^{f_b,\top}_B]$.
\end{enumerate}
\label{thm:matrixerrorexpressions}
\end{theorem}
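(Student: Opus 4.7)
\textbf{Proof plan for Theorem~\ref{thm:matrixerrorexpressions}.} The four parts directly mirror the inner product arguments in Sections~\ref{sec:iidIPdefsandBounds}--\ref{sec:simmetricIP}, with the scalar quantities $\Var[U]$, $\Var[V]$ replaced by the matrix trace quantities $a I_n,\,b I_n$ that arise from the independence of the $a$ (resp.\ $b$) columns.

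For part (1), I would first note that since $(A,W_A)\indep (B,W_B)$ and $A,B$ are zero mean, $\EE[A^\top B\mid W_A,W_B] = \EE[A\mid W_A]^\top \EE[B\mid W_B] = \hat A^\top \hat B$, so $g^*(W_A,W_B)=\hat A^\top \hat B$ by the MMSE principle applied entry-wise. Writing $e_A = A-\hat A$, $e_B = B-\hat B$, the error decomposes as
\begin{align*}
A^\top B - \hat A^\top \hat B = e_A^\top \hat B + \hat A^\top e_B + e_A^\top e_B,
\end{align*}
and expanding $\|\cdot\|_F^2 = \trace((\cdot)^\top (\cdot))$ with the identities $\EE[\hat A e_A^\top]=0$, $\EE[\hat B e_B^\top]=0$ and $A\indep B$ kills all cross terms. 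The three surviving terms are $\trace(\bar M_B \Sigma_A)$, $\trace(\bar M_A \Sigma_B)$, $\trace(\Sigma_A \Sigma_B)$; substituting $\bar M_A = aI_n-\Sigma_A$, $\bar M_B = bI_n-\Sigma_B$ gives the simplified formula.

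For part (2), set $f_2=\mathrm{id}$ (so $\hat B = B$ and $\Sigma_B=0$); the formula gives $\DMM = \tfrac{1}{na}\trace(\Sigma_A)$. Since the joint encoding of the $na$ iid $P$-entries of $A$ uses $naR_1$ bits, this per-entry MSE is at least $D_n^*(R_1,P)\geq D_P(R_1)$; symmetrically for $B$. For part (3), I would apply the Theorem~\ref{thm:TSupperbound} single-column quantizer $f$ (with time-sharing parameter $\kappa$) independently to each of the $a+b$ columns. Because the columns of $e_A$ are iid with common covariance $\Sigma_{e^f}$, one has $\Sigma_A=a\,\Sigma_{e^f}$ and $\Sigma_B=b\,\Sigma_{e^f}$, so the formula in part (1) collapses to
\begin{align*}
\DMM = \frac{1}{n}\bigl[2\trace(\Sigma_{e^f})-\trace(\Sigma_{e^f}^2)\bigr],
\end{align*}
identical to the inner product expression. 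The bound from Theorem~\ref{thm:TSupperbound} then applies verbatim.

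For part (4), the key algebraic step is the same rearrangement as in Lemma~\ref{lem:symmetricisoptimal}. Define $\tilde\Sigma_A = \Sigma_A/a$ and $\tilde\Sigma_B=\Sigma_B/b$; then part (1) yields
\begin{align*}
n(1-\DMM) = \trace\bigl((I_n-\tilde\Sigma_A)(I_n-\tilde\Sigma_B)\bigr).
\end{align*}
Applying Cauchy-Schwarz for the trace inner product on symmetric matrices followed by $\sqrt{xy}\leq \max(x,y)$ gives
\begin{align*}
n(1-\DMM) \leq \max_{i\in\{A,B\}}\trace\bigl((I_n-\tilde\Sigma_i)^2\bigr),
\end{align*}
which rearranges to the claimed $\min\{\cdot,\cdot\}$ expression after identifying $\lambda(f_a)=\eig(\tilde\Sigma_A)$, $\lambda(f_b)=\eig(\tilde\Sigma_B)$. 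The main (minor) obstacle is just bookkeeping the normalizations $\Sigma_A/a$ and $\Sigma_B/b$ correctly so that the inner product case emerges verbatim; all probabilistic content — MMSE optimality, orthogonality, and the Cauchy-Schwarz step — transfers without change.
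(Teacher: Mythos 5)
Your plan is correct and is essentially the argument the paper intends: the paper explicitly omits these proofs as ``similar to the statements above,'' and each of your four steps is the verbatim matrix analogue (with the normalizations $\Sigma_A/a$, $\Sigma_B/b$ and $\bar M_A=aI_n-\Sigma_A$, $\bar M_B=bI_n-\Sigma_B$) of Proposition~\ref{prop:generalerror}, Theorem~\ref{thm:oracleLB}, Theorem~\ref{thm:TSupperbound} and Lemma~\ref{lem:symmetricisoptimal}. The only step worth tightening is part (2): rather than ``setting $f_2=\mathrm{id}$'' (which is not a finite-rate encoder), obtain the oracle bound directly from your part-(1) formula by noting $\trace(\Sigma_A\bar M_B)+\trace(\Sigma_A\Sigma_B)=b\,\trace(\Sigma_A)$ and $\trace(\bar M_A\Sigma_B)\geq 0$.
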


\subsection{Maximum Entropy Matrices}

The fact that the Gaussian distribution maximizes the differential entropy of a vector, under second moment constraints, played a pivotal role in the derivation of our bounds for inner product quantization. For matrix multiplication quantization, the following lemma will play a similar role.
\begin{lemma}
    Let $M\in \mreals^{n\times a}$ be a random matrix with $\EE[M]=0$, and $\EE[M M^\top ] = \Sigma$. Then
    \begin{align} 
    h(M) \leq \frac{a}{2}\log\det \left(2\pi e\frac{1}{a}\Sigma\right)
    =a\cdot\sum_{i=1}^n\frac{1}{2}\log(2\pi e\lambda_i),
    \end{align}
    where $\lambda=\eig\left(\frac{1}{a}\Sigma\right)$, and this is attained with equality if the columns of $M$ are independent $\m{N}\left(0,\frac{1}{a}\Sigma\right)$ random vectors.
    \label{lem:matmaxent}
\end{lemma}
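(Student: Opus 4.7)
The plan is to adapt the Gaussian maximum-entropy argument driving Lemma~\ref{lem:SLB} to the matrix-valued setting, exploiting that $\EE[MM^\top]=\Sigma$ is an \emph{aggregated} constraint: it pins down only the sum of the individual column covariances, not each one separately. I would write $M=[m_1\mid\cdots\mid m_a]$ with column covariances $\Sigma_j:=\EE[m_j m_j^\top]$, so that $\sum_{j=1}^a \Sigma_j=\Sigma$, and then chain three classical ingredients.

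First, per-column Gaussian maximum entropy gives $h(m_j)\le \tfrac{1}{2}\log\det(2\pi e\,\Sigma_j)$, with equality iff $m_j$ is Gaussian. Second, by the chain rule for differential entropy plus dropping conditioning, $h(M)=\sum_j h(m_j\mid m_{<j})\le\sum_j h(m_j)$, with equality iff the columns of $M$ are mutually independent. Third, Minkowski's determinant inequality (concavity of $\log\det$ on the positive semidefinite cone) yields $\tfrac{1}{a}\sum_j \log\det\Sigma_j\le \log\det(\Sigma/a)$, with equality iff $\Sigma_1=\cdots=\Sigma_a=\Sigma/a$. Composing the three produces a comparison between $h(M)$ and $\tfrac{a}{2}\log\det(2\pi e\,\Sigma/a)$, saturated precisely when the columns are iid $\calN(0,\Sigma/a)$, matching the stated equality case.

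The main point I would flag, and the principal obstacle to taking the statement literally, is the direction of the inequality. The three composed steps above naturally produce $h(M)\le \tfrac{a}{2}\log\det(2\pi e\,\Sigma/a)$, which is the Gaussian maximum-entropy upper bound; indeed, a quick sanity check with $M$ having iid $\mathrm{Unif}[-c,c]$ entries yields $h(M)=na\log(2c)$ against a right-hand side of $\tfrac{an}{2}\log(2\pi e c^2/3)$, with the former strictly smaller (since $\log(2\pi e/3)>0$), so the stated $\geq$ cannot hold without further hypotheses. I would therefore read the lemma in the upper-bound direction, which is exactly the form one needs downstream: in the matrix analog of Lemma~\ref{lem:SLB}, the lemma is applied to the quantization error $M-\EE[M\mid f(M)]$ — a zero-mean random matrix with aggregated second-moment matrix $\Sigma_e$ — in order to upper-bound the conditional differential entropy $h(M\mid f(M))$ and thereby produce a tight lower bound on $I(M;f(M))$, mirroring how the scalar Gaussian max-entropy bound was used to control $h(U\mid f(U))$ in the scalar Shannon lower bound.
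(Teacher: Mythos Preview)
Your proposal is correct, and you have rightly flagged that the inequality in the displayed statement should read $h(M)\le \tfrac{a}{2}\log\det(2\pi e\,\Sigma/a)$ rather than $\geq$; the paper's own proof in fact establishes exactly the upper bound and then uses it as you anticipate in the matrix analogue of Lemma~\ref{lem:SLB}. Your argument (sub-additivity, per-column Gaussian bound, then concavity of $\log\det$) is essentially the same as the paper's, which merely swaps the order of the last two steps: it first applies concavity of $h(\cdot)$ to pass from $\tfrac{1}{a}\sum_j h(m_j)$ to $h\bigl(\tfrac{1}{a}\sum_j P_j\bigr)$, notes the mixture has covariance $\tfrac{1}{a}\Sigma$, and only then invokes the Gaussian maximum-entropy bound once. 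The two routes are equivalent and yield the same equality conditions.
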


\begin{proof}
Write $M=[m_1|m_2|\cdots|m_a]$, where $m_1,\ldots,m_a$ are zero-mean random vectors in $\RR^n$. Denote the marginal distribution of $m_i$ by $P_i$. Let $\Sigma_i=\EE[m_i m_i^\top ]$, and recall that 
\begin{align}
\Sigma=\EE[M M^\top ]=\sum_{i=1}^a \EE[m_i m_i^\top ]=\sum_{i=1}^a \Sigma_i.
\end{align}
We further have that
\begin{align}
h(M)=h(m_1,\ldots,m_a)\leq \sum_{i=1}^a h(m_i)\leq a\cdot  h\left(\frac{1}{a}\sum P_i \right),   
\end{align}
where we have used sub-additivity and concavity of differential entropy in the inequalities above. Noting that the covariance matrix corresponding to the distribution $\frac{1}{a}\sum_{i=1}^a P_i$ is $\frac{1}{a}\sum_{i=1}^a \Sigma_i=\frac{1}{a}\Sigma$, we have
\begin{align}
h(M)\leq a\cdot h\left(\m{N}\left(0,\frac{1}{a}\Sigma\right)\right) =\frac{a}{2}\log\det \left(2\pi e\frac{1}{a}\Sigma\right).
\end{align}
All inequalities are attained with equality when $m_i\stackrel{iid}{\sim}\m{N}\left(0,\frac{1}{a}\Sigma\right)$, for $i=1,\ldots,a$.
\end{proof}

This immediately gives the following generalization of Lemma~\ref{lem:SLB}
\begin{lemma}
Assume the distribution $P$ has zero mean and unit variance. Let $f_a:\RR^{n\times a}\to [2^{naR}]$ be a $2^{naR}$-level quantizer, and define $\lambda(f_a)=(\lambda_1,\ldots,\lambda_n)\in[0,1]^n$ as in~\eqref{eq:eigendefmat}. Then
\begin{align}
\frac{1}{n}\sum_{i=1}^n \frac{1}{2}\log \frac{1}{\lambda_i}\leq R+D(P\|\m{N}(0,1)).
\end{align}
\label{lem:SLBmatrix}
\end{lemma}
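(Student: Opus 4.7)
The plan is to follow the same template as Lemma~\ref{lem:SLB}, with the scalar Gaussian maximum-entropy step replaced by the matrix version provided by Lemma~\ref{lem:matmaxent}. As in the vector case I would first assume $h(P)>-\infty$, since otherwise $D(P\|\m{N}(0,1))=\infty$ and the inequality is vacuous.

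First, I would use data processing to write $naR\geq H(f_a(A))\geq I(A;f_a(A))=h(A)-h(A\mid f_a(A))$. Since the $na$ entries of $A$ are iid $P$, $h(A)=na\cdot h(P)$, and the unit variance of $P$ allows me to rewrite $h(P)=h(\m{N}(0,1))-D(P\|\m{N}(0,1))=\tfrac{1}{2}\log(2\pi e)-D(P\|\m{N}(0,1))$, so
\begin{align}
h(A)=\frac{na}{2}\log(2\pi e)-na\cdot D(P\|\m{N}(0,1)).
\end{align}

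Next, I would upper-bound $h(A\mid f_a(A))$. Let $E=A-\EE[A\mid f_a(A)]$; since $\EE[A\mid f_a(A)]$ is a deterministic function of $f_a(A)$, translation invariance of differential entropy and the ``conditioning reduces entropy'' inequality give $h(A\mid f_a(A))=h(E\mid f_a(A))\leq h(E)$. The matrix $E\in\RR^{n\times a}$ satisfies $\EE[E]=0$ and $\EE[EE^\top]=\Sigma_{e^{f_a}_A}$, so Lemma~\ref{lem:matmaxent} applied to $M=E$ yields
\begin{align}
h(A\mid f_a(A))\leq \frac{a}{2}\log\det\!\left(2\pi e\cdot\tfrac{1}{a}\Sigma_{e^{f_a}_A}\right)=a\sum_{i=1}^n \frac{1}{2}\log(2\pi e\,\lambda_i),
\end{align}
where $\lambda_i=\lambda_i(f_a)$ are as in~\eqref{eq:eigendefmat}.

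Combining these two displays,
\begin{align}
naR\;\geq\;\frac{na}{2}\log(2\pi e)-na\,D(P\|\m{N}(0,1))-a\sum_{i=1}^n \frac{1}{2}\log(2\pi e\,\lambda_i)\;=\;a\sum_{i=1}^n \frac{1}{2}\log\frac{1}{\lambda_i}-na\,D(P\|\m{N}(0,1)),
\end{align}
so dividing by $na$ gives the claim. The only place where the matrix case differs nontrivially from the vector case is the entropy bound on the error, and this has already been done for us in Lemma~\ref{lem:matmaxent} (the key point there being that the bound $h(E)\leq \tfrac{a}{2}\log\det(2\pi e\cdot\tfrac{1}{a}\Sigma_{e^{f_a}_A})$ involves $\tfrac{1}{a}\Sigma_{e^{f_a}_A}$ rather than $\Sigma_{e^{f_a}_A}$, which is precisely the normalization that makes the single-letter bound scale correctly with rate $R$ per entry). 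Thus I do not foresee any real obstacle; the proof is essentially a one-page assembly of existing pieces.
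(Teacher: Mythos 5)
Your proof is correct and follows essentially the same route as the paper: bound $naR\geq I(A;f_a(A))$, use Lemma~\ref{lem:matmaxent} to upper-bound $h(A\mid f_a(A))$ by the log-determinant of $2\pi e\cdot\tfrac{1}{a}\Sigma_{e^{f_a}_A}$, and convert $h(P)$ into $h(\m{N}(0,1))-D(P\|\m{N}(0,1))$. The only difference is that you spell out the intermediate step $h(A\mid f_a(A))=h(E\mid f_a(A))\leq h(E)$ explicitly, which the paper leaves implicit.
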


\begin{proof}
Without loss of generality, we may assume $h(P)>-\infty$, as otherwise $D(P\|\m{N}(0,1))=\infty$ and the statement is trivial.
Let $e^{f_a}_A=A-\EE[A|f_a(A)]$, $\Sigma_{e^{f_a}_A}=\EE[e^{f_a}_A e^{f_a,\top}_A]$. By Lemma~\ref{lem:matmaxent}, we have that
\begin{align}
h(A|f_a(A))\leq \frac{a}{2}\log\det \left((2\pi e)\frac{1}{a}\Sigma_{e^{f_a}_A}\right)=na\cdot \frac{1}{n}\sum_{i=1}^n \frac{1}{2}\log (2\pi e \lambda_i).
\end{align}
Consequently,
\begin{align}
naR&\geq I(A;f_a(A))=h(A)-h(A|f_a(A))\geq h(A)-na\cdot \frac{1}{n}\sum_{i=1}^n \frac{1}{2}\log (2\pi e \lambda_i)\\
&=h(\m{N}^{\otimes na}(0,1))-na\cdot \frac{1}{n}\sum_{i=1}^n \frac{1}{2}\log (2\pi e \lambda_i)+h(P^{\otimes na})-h(\m{N}^{\otimes na}(0,1))\\
&=na\left(\frac{1}{n}\sum_{i=1}^n \frac{1}{2}\log\frac{1}{\lambda_i} -D(P\|\m{N}(0,1))\right),
\end{align}
which yields the claimed result.
\end{proof}

\subsection{Fundamental Limits}

Using Theorem~\ref{thm:matrixerrorexpressions} and Lemma~\ref{lem:SLBmatrix}, we prove the following result for the symmetric matrix multiplication case.
\begin{theorem}
Assuming the distribution $P$ has zero mean and unit variance, for any $n\geq 1$
\begin{align}
\DMM_{n,a,b}^*(R,P)\geq \Gamma\left(R+D(P\|\m{N}(0,1))\right),
\end{align}
where $\Gamma(R)$ is defined in~\eqref{eq:barphidef}, and in particular
\begin{align}
\DMM_{a,b}(R,P)\geq \Gamma\left(R+D(P\|\m{N}(0,1))\right).
\end{align}
\label{thm:SLBbasedDistortionLBmatrix}
\end{theorem}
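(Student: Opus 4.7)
The plan is to mirror the proof of Theorem~\ref{thm:SLBbasedDistortionLB} verbatim, using the matrix-valued analogs of its two ingredients, both of which are already established in the excerpt: Theorem~\ref{thm:matrixerrorexpressions}(4) reduces the symmetric matrix-multiplication distortion to an optimization over the eigenvalue spectra of error covariance matrices, and Lemma~\ref{lem:SLBmatrix} plays the role of a Shannon-lower-bound, constraining the sum of $\frac{1}{2}\log(1/\lambda_i)$ in terms of $R+D(P\|\m{N}(0,1))$. With these in hand, no new technical work is needed.

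Concretely, I would first apply Theorem~\ref{thm:matrixerrorexpressions}(4), which yields
\begin{align*}
\DMM^*_{n,a,b}(R,P)\;\geq\;\frac{1}{n}\min\!\left\{\inf_{f_a}\sum_{i=1}^n\phi(\lambda_i(f_a)),\;\inf_{f_b}\sum_{i=1}^n\phi(\lambda_i(f_b))\right\},
\end{align*}
so that (by symmetry of the bound in $a$ and $b$) it suffices to lower bound $\frac{1}{n}\inf_{f_a}\sum_{i=1}^n\phi(\lambda_i(f_a))$. For any fixed encoder $f_a:\RR^{n\times a}\to[2^{naR}]$, let $K_{f_a}$ denote the uniform distribution on the multiset $\lambda(f_a)=(\lambda_1,\dots,\lambda_n)\in[0,1]^n$, so that $\frac{1}{n}\sum_i\phi(\lambda_i)=\EE_{\lambda\sim K_{f_a}}\phi(\lambda)$. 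Writing $R_{\m{N}}(\lambda)=\frac{1}{2}\log\frac{1}{\lambda}$ and $\Gamma_1(r)=\phi(2^{-2r})$, this equals $\EE_{\lambda\sim K_{f_a}}\Gamma_1(R_{\m{N}}(\lambda))$, exactly as in the inner-product case.

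Next I would close the argument in the same two moves used to prove Theorem~\ref{thm:SLBbasedDistortionLB}. First, because $\Gamma$ is the convex envelope of $\Gamma_1$ (established in Appendix~\ref{appendix:convexenvelop}), we have $\Gamma_1(r)\geq \Gamma(r)$ pointwise. Then, since $\Gamma$ is convex and nonincreasing, Jensen's inequality gives
\begin{align*}
\EE_{\lambda\sim K_{f_a}}\Gamma_1(R_{\m{N}}(\lambda))\;\geq\;\EE_{\lambda\sim K_{f_a}}\Gamma(R_{\m{N}}(\lambda))\;\geq\;\Gamma\!\left(\EE_{\lambda\sim K_{f_a}}R_{\m{N}}(\lambda)\right).
\end{align*}
Finally, Lemma~\ref{lem:SLBmatrix} yields $\EE_{\lambda\sim K_{f_a}}R_{\m{N}}(\lambda)=\frac{1}{n}\sum_i\frac{1}{2}\log\frac{1}{\lambda_i}\leq R+D(P\|\m{N}(0,1))$, and the monotonicity of $\Gamma$ turns this into $\Gamma(\EE_{\lambda\sim K_{f_a}}R_{\m{N}}(\lambda))\geq \Gamma(R+D(P\|\m{N}(0,1)))$. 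Since the bound holds uniformly in $f_a$, it passes to the infimum and to the $\limsup$.

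There is no real obstacle beyond assembling these ingredients; the only place where the matrix case differs from the inner product case is in the derivation of Lemma~\ref{lem:SLBmatrix}, which relies on the matrix max-entropy inequality Lemma~\ref{lem:matmaxent}. That inequality is the genuinely new content (each column of $A$ contributes an entropy bound, but the $a$ columns must be combined carefully via sub-additivity and concavity of differential entropy so that the per-column covariance $\frac{1}{a}\Sigma$ appears). Once that normalization is correct and produces the factor $a$ in $h(A)\ge \frac{a}{2}\log\det(2\pi e\,\tfrac{1}{a}\Sigma)$, the rate constraint $naR\ge I(A;f_a(A))$ single-letterizes to exactly the same inequality on $\lambda(f_a)$ as in the scalar case, which is precisely what Lemma~\ref{lem:SLBmatrix} guarantees and what lets the inner product proof go through unchanged.
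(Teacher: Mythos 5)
Your proposal is correct and follows essentially the same route as the paper's own proof: reduce via Theorem~\ref{thm:matrixerrorexpressions}(4) to the spectral optimization, replace $\Gamma_1$ by its convex envelope $\Gamma$, apply Jensen's inequality, and invoke Lemma~\ref{lem:SLBmatrix} together with the monotonicity of $\Gamma$. No gaps.
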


\begin{proof}
Let $f_a:\RR^{n\times a}\to [2^{na R}]$ be a $2^{naR}$-level quantizer, and define $\lambda(f_a)=(\lambda_1,\ldots,\lambda_n)\in[0,1]^n$ as in~\eqref{eq:eigendefmat}. Denote by $K=K_{f_a}$ the uniform distribution over (the multiset) $\lambda(f_a)$, and $R_{\m{N}}(\lambda)=\frac{1}{2}\log\frac{1}{\lambda}$, as in the proof of Theorem~\ref{thm:SLBbasedDistortionLB}, we have that
\begin{align}
\EE_{\lambda\sim K_{f_a}}\phi(\lambda)=\EE_{\lambda\sim \Gamma_{f_a}}\phi\left(2^{-2 R_{\m{N}}(\lambda)}\right).
\end{align}
Recalling from the proof of Theorem~\ref{thm:SLBbasedDistortionLB} that the $\Gamma_1(R)=\phi(2^{-2R})\geq \Gamma(R)$ and the function $R\mapsto \Gamma(R)$ is convex and non-increasing, it follows that
\begin{align}
\EE_{\lambda\sim K_{f_a}}\phi\left(2^{-2 R_{\m{N}}(\lambda)}\right)&\geq \EE_{\lambda\sim K_f} \Gamma\left(R_{\m{N}}(\lambda)\right)\\
&\geq  \Gamma\left(\EE_{\lambda\sim K_f}R_{\m{N}}(\lambda)\right)\\
&\geq \Gamma\left(R+D(P\|\m{N}(0,1))\right),
\end{align}
where we have used Lemma~\ref{lem:SLBmatrix} in the last inequality. Thus,
\begin{align}
\frac{1}{n}\sum_{i=1}^n\phi(\lambda_i(f_a))=\EE_{\lambda\sim K_{f_a}}\phi\left(\lambda\right)\geq \Gamma\left(R+D(P\|\m{N}(0,1))\right).
\end{align}
Similarly, for any $f_b:\RR^{n\times b}\to\left[2^{nbR} \right]$ we have
\begin{align}
\frac{1}{n}\sum_{i=1}^n\phi(\lambda_i(f_b))=\EE_{\lambda\sim K_{f_b}}\phi\left(\lambda\right)\geq \Gamma\left(R+D(P\|\m{N}(0,1))\right).
\end{align}
Thus, by~\eqref{eq:matrixSymmetricGeneralLB} in Theorem~\ref{thm:matrixerrorexpressions}, for any $n\geq 1$
\begin{align}
\DMM_{n,a,b}(R,P)&\geq \min\left\{\min_{f_a}\frac{1}{n}\sum_{i=1}^n\phi(\lambda_i(f_a)),\min_{f_b}\frac{1}{n}\sum_{i=1}^n\phi(\lambda_i(f_b)) \right\}\\
&\geq   \Gamma\left(R+D(P\|\m{N}(0,1))\right),  
\end{align}
as claimed.
\end{proof}

\begin{proof}[Proof of Theorem~\ref{thm:GaussMatrixLowerBound}]
Part 1 follows immediately from Theorem~\ref{thm:SLBbasedDistortionLBmatrix}. Part 2 follows from part 2 of Theorem~\ref{thm:matrixerrorexpressions}, and recalling that $D_{\m{N}(0,1)}(R)=2^{-2R}$.  
\end{proof}

\subsection{The Symmetric Gaussian case}

Combining Theorem~\ref{thm:matrixerrorexpressions} and Theorem~\ref{thm:SLBbasedDistortionLBmatrix}, we obtain a complete characterization for the Gaussian case.

\begin{theorem}
\label{thm:GaussDMM}
\begin{align}
\DMM_{a,b}(R,\m{N}(0,1))=\Gamma(R).
\end{align}
\end{theorem}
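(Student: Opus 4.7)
The plan is to obtain the theorem by combining the matching lower and upper bounds that have already been established in the preceding theorems, specialized to $P=\m{N}(0,1)$.

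For the lower bound, I would invoke Theorem~\ref{thm:SLBbasedDistortionLBmatrix} with $P=\m{N}(0,1)$. Since $D(\m{N}(0,1)\|\m{N}(0,1))=0$, that theorem immediately yields
\begin{align*}
\DMM_{a,b}(R,\m{N}(0,1)) \geq \Gamma(R).
\end{align*}
This half is essentially free given the work already done.

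For the upper bound, I would apply part~\ref{item:matroxTS} of Theorem~\ref{thm:matrixerrorexpressions}, which gives the time-sharing achievability
\begin{align*}
\DMM_{a,b}(R,P) \leq \min_{0 \leq \kappa \leq 1} (1-\kappa) + \kappa \cdot \phi\!\left(D_{P}\!\left(\tfrac{R}{\kappa}\right)\right).
\end{align*}
Specializing to $P=\m{N}(0,1)$ and using the classical Gaussian distortion-rate formula $D_{\m{N}(0,1)}(R)=2^{-2R}$, the right-hand side becomes
\begin{align*}
\min_{0 \leq \kappa \leq 1} (1-\kappa) + \kappa \cdot \phi\!\left(2^{-2R/\kappa}\right).
\end{align*}
The identification of this expression with $\Gamma(R)$ has already been verified in the proof of Theorem~\ref{thm:TSupperbound} (via the convex-envelope computation in Appendix~\ref{appendix:convexenvelop}), so no new calculation is needed here either.

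Putting the two halves together yields $\DMM_{a,b}(R,\m{N}(0,1))=\Gamma(R)$. There is no genuine obstacle: the proof is a one-line corollary of the previously established matching bounds. The only mild subtlety worth flagging is that the lower bound in Theorem~\ref{thm:SLBbasedDistortionLBmatrix} holds for every $n\geq 1$, while the upper bound in part~\ref{item:matroxTS} of Theorem~\ref{thm:matrixerrorexpressions} is asymptotic (attained as $n\to\infty$ via near-optimal Gaussian vector quantizers applied to $\kappa n$ coordinates of each column); thus the equality naturally lives at the level of $\DMM_{a,b}$ rather than $\DMM^*_{n,a,b}$, which matches the statement.
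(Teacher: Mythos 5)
Your proposal is correct and matches the paper's own proof of Theorem~\ref{thm:GaussDMM}: the lower bound comes from Theorem~\ref{thm:SLBbasedDistortionLBmatrix} with $D(\m{N}(0,1)\|\m{N}(0,1))=0$, and the upper bound from part~\ref{item:matroxTS} of Theorem~\ref{thm:matrixerrorexpressions} with $D_{\m{N}(0,1)}(R)=2^{-2R}$ (the paper simply instantiates $\kappa=\min\{R/R^*,1\}$ rather than citing the convex-envelope minimization, which is the same computation). No gaps.
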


\begin{proof}
The upper bound follows applying Part~\ref{item:matroxTS} of Theorem~\ref{thm:matrixerrorexpressions} with $\kappa=\min\{R/R^*,1\}$, and recalling that $D_{\m{N}(0,1)}(R)=2^{-2R}$. The lower bound follows directly from Theorem~\ref{thm:SLBbasedDistortionLBmatrix}.
\end{proof}

\section{Lattice Quantization Scheme for Matrix Multiplication of Arbitrary Matrices}
\label{sec:arbitraryMM}

Our theoretical analysis in Sections~\ref{sec:iidIPdefsandBounds}
-\ref{sec:iidMM} assumed the entries in the vectors/matrices to be multiplied are drawn iid from some known distribution. In this section, we drop this assumption, and, building on the observations from the analysis above, develop a robust scheme for compression for matrix multiplication. Our scheme is designed to attain the optimal distortion in the case where $A$ and $B$ have iid Gaussian entries, but the error it attains for arbitrary matrices can also be upper bounded.

We first develop encoders $f_1,f_2:\RR^n\to[2^{nR}]$ and a decoder $g:[2^{nR}]\times [2^{nR}]\to
\RR$ for estimating the inner product of $U,V\in \sqrt{n}\mathbb{S}^{n-1}$ where
$\mathbb{S}^{n-1}=\left\{x\in\RR^n~:~\|x\|=1 \right\}$ is the unit sphere. We then show how these
encoders and decoder can be leveraged for compression for matrix multiplication. Let
$\mathrm{O}_n(\RR)$ be the orthogonal group, consisting of all orthonormal matrices in
$\RR^{n\times n}$. It will be useful to analyze the performance of $f_1,f_2,g$ with respect to the following distribution on $U,V$. 
\begin{definition}[$\rho$-correlated spherically uniform random vectors]
Let $S=[S_1|S_2|\cdots|S_n]\sim\Unif(\mathrm{O}_n(\RR))$ be a random matrix uniformly distributed over the group of orthogonal matrices in $\RR^{n\times n}$ (that is, $S$ is drawn from the Haar measure on $\mathrm{O}_n(\RR)$). We say that the random vectors $U\in\RR^n$ and $V\in\RR^n$ are $\rho$-correlated spherically uniform random vectors if $U=\sqrt{n}S_1$, $Z=\sqrt{n} S_2$ and
\begin{align}
V=\rho U+\sqrt{1-\rho^2}Z.
\label{eq:VUdef_jointGauss}
\end{align}
\end{definition}

\begin{theorem}
For any $\eps>0$, $0<\kappa\leq 1$ and sufficiently large $n$,  there exist randomized encoders $f_1,f_2:\RR^n\to[2^{nR}]$ and decoders $g:[2^{nR}]\times [2^{nR}]\to \RR$, and $g_{1-\mathrm{sided}}:[2^{nR}]\times \RR^n \to \RR$, such that if $U,V$ are $\rho$-correlated spherically uniform
\begin{enumerate}
\item  for every $-1\leq \rho\leq 1$ and $0\leq \alpha\leq 1$
\begin{align}
\frac{1}{n}\EE(U^\top  V-\alpha g(f_1(U),f_2(V))^2< 
\rho^2 n\left(1-\kappa\alpha \right)^2+\kappa\alpha^2\left(\frac{1-\kappa+\kappa\phi\left(2^{-2\frac{R}{\kappa}}\right)}{1-\phi\left(2^{-2\frac{R}{\kappa}}\right)}\right)+\eps(1+\rho^2 n),  
\label{eq:nestedlatticespherical}
\end{align}
where $\phi(t)=2t-t^2$.   
\item for every $-1\leq \rho\leq 1$ and $0\leq \alpha\leq 1$
\begin{align}
\frac{1}{n}\EE(U^\top  V-\alpha g_{1-\mathrm{sided}}(f_1(U),V))^2< \rho^2 n\left(1-\kappa\alpha \right)^2+\kappa \alpha^2\left(1-\kappa+\frac{1}{2^{2\frac{R}{\kappa}}-1}\right)+\eps(1+\rho^2 n).
\label{eq:nestedlatticesphericalOneSided}
\end{align}
\end{enumerate}
\label{thm:universalLattice}
\end{theorem}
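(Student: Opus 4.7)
The scheme is the one depicted in Figures~\ref{fig:encoders}--\ref{fig:decoder}, specialized to inputs already lying on $\sqrt{n}\Sph^{n-1}$ (so the centering and norm-rescaling blocks are trivial). Both encoders apply the common Haar-random rotation $S\in\mathrm{O}_n(\RR)$ and keep the first $\kappa n$ coordinates, producing $X:=(SU)_{[\kappa n]}$ and $Y:=(SV)_{[\kappa n]}$. Each is then quantized with a common dithered nested-lattice pair $(\Lambda_c\subset\Lambda_f)$ in dimension $\kappa n$, using independent dithers $Z_1,Z_2\sim\Unif(\m{V}_f)$: the encoder transmits $\tilde X=Q_{\Lambda_f}(X+Z_1)\bmod\Lambda_c$ and analogously $\tilde Y$. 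The decoder forms $\hat X=(\tilde X-Z_1)\bmod\Lambda_c$ (and similarly $\hat Y$), outputting $\alpha\hat X^\top\hat Y$ in part~1 and $\alpha\hat X^\top Y$ in part~2 (using that $V$, hence $(SV)_{[\kappa n]}$, is available at the decoder in the one-sided case).

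The analysis rests on Theorem~\ref{thm:good_lattices}, which furnishes $(\Lambda_c,\Lambda_f)$ in $\RR^{\kappa n}$ with $\log|\Lambda_f/\Lambda_c|=nR$ satisfying (i)~the uniform measure on $\m{V}_f$ has isotropic covariance $\sigma_q^2 I_{\kappa n}$ with $\sigma_q^2=1/(2^{2R/\kappa}-1)$, and (ii)~for any sub-Gaussian input of per-coordinate second moment at most $1+\sigma_q^2$, the overload event $Q_{\Lambda_f}(X+Z_1)\notin\m{V}_c$ has probability $e^{-\Omega(n)}$. By the classical dithering (``crypto'') lemma, conditional on no overload one has $\hat X=X-Z'_1$ with $Z'_1\sim\Unif(\m{V}_f)$ independent of $X$, and likewise $\hat Y=Y-Z'_2$, with $Z'_1\indep Z'_2$.

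For part~1 decompose
\[
U^\top V-\alpha\hat X^\top\hat Y=(U^\top V-\alpha X^\top Y)+\alpha X^\top Z'_2+\alpha Z_1^{\prime\top}Y-\alpha Z_1^{\prime\top}Z'_2,
\]
whose four summands are pairwise uncorrelated (by independence and zero mean of the dithers). The three lattice terms sum to $\alpha^2(2\sigma_q^2+\sigma_q^4)\kappa n$, using $\EE[(X^\top Z'_2)^2]=\sigma_q^2\EE\|X\|^2=\sigma_q^2\kappa n$ and $\EE[(Z_1^{\prime\top}Z'_2)^2]=\trace(\sigma_q^2 I_{\kappa n}\cdot\sigma_q^2 I_{\kappa n})=\sigma_q^4\kappa n$. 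For the first summand, write $X^\top Y=U^\top M V$ with $M=S^\top\m{P}_{\kappa n}^\top\m{P}_{\kappa n}S$ a uniformly random rank-$\kappa n$ projector. Haar/Weingarten integration on $\mathrm{O}_n(\RR)$ gives $\EE[M]=\kappa I_n$ and, keeping the $1/n$ correction, $\EE[MVV^\top M]=(\kappa^2+\kappa(1-\kappa)/n)VV^\top+\kappa(1-\kappa)I_n+O(1/n)$, whence $\EE[X^\top Y]=\kappa\rho n$ and $\Var(X^\top Y)=\kappa(1-\kappa)(1+\rho^2)n+O(1)$. Thus $\EE[(U^\top V-\alpha X^\top Y)^2]=\rho^2 n^2(1-\alpha\kappa)^2+\alpha^2\kappa(1-\kappa)(1+\rho^2)n+o(n^2)$. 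Dividing by $n$ and setting $D:=2^{-2R/\kappa}$, so that $\sigma_q^2=D/(1-D)$ and $1-\phi(D)=(1-D)^2$, the identity $(1-\kappa)+2\sigma_q^2+\sigma_q^4=(1-\kappa+\kappa\phi(D))/(1-\phi(D))$ delivers the stated bound, with the residual $\alpha^2\kappa(1-\kappa)\rho^2$ term comfortably covered by the slack $\eps(1+\rho^2 n)$. Part~2 is identical except the $\alpha X^\top Z'_2$ and $\alpha Z_1^{\prime\top}Z'_2$ summands are absent, replacing $2\sigma_q^2+\sigma_q^4$ by $\sigma_q^2=1/(2^{2R/\kappa}-1)$ and producing the one-sided expression.

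The principal obstacle is producing a nested pair simultaneously satisfying (i)--(ii): one needs Rogers-type optimality of $\Lambda_c$ (to achieve Gaussian-like shaping at the precise scale $\sigma_q^2=1/(2^{2R/\kappa}-1)$) together with Poltyrev-type optimality of $\Lambda_f$ (to deliver the exponentially small overload under spherical, not just Gaussian, inputs). This is exactly the content of Theorem~\ref{thm:good_lattices}, which builds on the heavy lifting of~\cite{orw22}. A secondary technicality is that $(X,Y)$ are random projections of spherical rather than iid Gaussian vectors; the Haar calculus above handles this exactly, and the $O(1/n)$ remainder together with the overload contribution (crudely bounded by $\mathrm{diam}(\m{V}_c)^2\cdot e^{-\Omega(n)}=o(1)$) are absorbed into the final $\eps$.
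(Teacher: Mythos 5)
Your proposal is correct and follows essentially the same route as the paper's proof: project to the first $\kappa n$ coordinates, apply dithered nested-lattice quantization, expand the error into the same four mutually uncorrelated terms, compute the second moment of the projected inner product via Haar integration (matching the paper's Appendix on projections of spherical vectors), use the identity $(1-\kappa)+2\sigma_q^2+\sigma_q^4=(1-\kappa+\kappa\phi(D))/(1-\phi(D))$, and control the overload event by an exponential tail bound times a polynomial worst case. The one overstatement is attributing to Theorem~\ref{thm:good_lattices} an exactly isotropic dither covariance $R(\Lambda_f)=\sigma_q^2 I_{\kappa n}$: the theorem only guarantees the near-white Frobenius bound $\frac{1}{d}\|R(\Lambda_f)\|_F^2\le D^2\left(1+\frac{C_2\log^3 d}{d}\right)$ (exact isotropy simultaneously with the coding-goodness and rate properties is precisely what the paper says it cannot establish), but since all you actually use is $\EE[(Z_1'^{\top}Z_2')^2]=\trace\left(R(\Lambda_f)^2\right)\le d\sigma_q^4(1+o(1))$, the weaker guarantee suffices and your argument goes through.
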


The proof is based on dithered nested lattice quantization, and is brought in Section~\ref{sec:nestedlatticeIP}.

\begin{remark}
The randomization required by the encoders and decoders above is in the form of \emph{dithering}, as will become clear in the proof of Theorem~\ref{thm:universalLattice}. For the special case of $\kappa=\alpha=1$, the MSE does not involve $\rho$, and therefore, there must exist fixed (deterministic) values for the dither vectors which attain the same MSE as above, or smaller. We believe that there also exist fixed values for the dithers, for which Theorem~\ref{thm:universalLattice} holds, and that randomness is not required at all here. The technical challenge is that when $\kappa,\alpha\neq 1$ the MSE is bounded as a weighted sum of expectations, where the weights depend on $\rho$ and $\alpha$. Thus, showing the existence of ``good'' fixed dithers, requires showing that there are dither values for which all involved expectations are small. This requires establishing some (very weak form of) of concentration of involved random variables, and is left for future work. Note that the assumptions of Theorem~\ref{thm:universalLattice} are that $U,V$ are random vectors on the sphere. This assumption is enforced in the next Theorem by applying a random rotation on arbitrary matrices. While random dithering is likely to not be needed, we do believe that the random rotation is necessary for Theorem~\ref{thm:MostgeneralMatMul} to hold. 
\end{remark}

\begin{remark}
Our proof of Theorem~\ref{thm:universalLattice} is based on using dithered ``good'' nested lattice quantizers. With these quantizers, the quantization noise is statistically independent of the input and is approximately white, a property that is crucial for the analysis. Using unstructured random coding techniques for proving Theorem~\ref{thm:universalLattice} does not seem straightforward. The results from~\cite{kipnis2021gaussian} may be a good starting point for pursuing this direction.
\end{remark}

\medskip

Equipped with Theorem~\ref{thm:universalLattice}, we can now easily prove Theorem~\ref{thm:MostgeneralMatMul}. Recall that for a column vector $x\in\RR^n$ we denote by $\bar{x}=x-\left(\frac{1}{n}\bones^\top x\right)\bones$ its zero-centered version. For a matrix $A=[a_1|\cdots|a_a]\in\RR^{n\times a}$ we denote $\bar{A}=[\bar{a}_1|\ldots|\bar{a}_a]$.

\begin{theorem}
For any $\eps>0$, $0<\kappa\leq 1$ and sufficiently large $n$, there exist randomized encoders $f_1:\RR^{n\times a}\to[2^{na R}]$, $f_2:\RR^{n\times b}\to[2^{nbR}]$, and decoders $g:[0,1]\times [2^{na R}]\times [2^{nbR}]\to \RR^{a\times b}$ and $g_{1-\mathrm{sided}}:[0,1]\times[2^{na R}]\times \RR^{n\times b}\to \RR^{a\times b}$ such that for any $A\in\RR^{n\times a}$ and $B\in\RR^{n\times b}$ with bounded entries we have
\begin{enumerate}
\item Let $C=A^\top B$, $\tilde{C}=\bar{A}^\top \bar{B}$, and $\hat{C}= g(\alpha,f_1(A),f_2(B))$ for  $0<\alpha\leq 1$. Then, for any $i\in[a],j\in[b]$ we have
\begin{align}
\EE(C_{i,j}-\hat{C}_{i,j})^2\leq  \tilde{C}_{i,j}^2\cdot\left( (1-\kappa\alpha)^2+\eps\right)+\frac{\|\bar{a}_i\|^2 \|\bar{b}_j\|^2}{n}\left(\kappa\alpha^2\left(\frac{1-\kappa+\kappa\phi\left(2^{-2\frac{R}{\kappa}}\right)}{1-\phi\left(2^{-2\frac{R}{\kappa}}\right)}\right)+\eps\right)+n^{-8},   
\end{align}
and in particular
\begin{align}
\EE\|A^\top  B- g(\alpha,f_1(A),f_2(B))\|_F^2&< \|\bar{A}^\top \bar{B}\|_F^2\cdot\left( (1-\kappa\alpha)^2+\eps\right)\nonumber\\
&+\frac{\|\bar{A}\|^2_F \|\bar{B}\|_F^2}{n}\left(\kappa\alpha^2\left(\frac{1-\kappa+\kappa\phi\left(2^{-2\frac{R}{\kappa}}\right)}{1-\phi\left(2^{-2\frac{R}{\kappa}}\right)}\right)+\eps\right)+a\cdot b\cdot n^{-8}.
\end{align}
\item  Let $C=A^\top B$, $\tilde{C}=\bar{A}^\top \bar{B}$, and $\hat{C}= g_{1-\mathrm{sided}}(\alpha,f_1(A),B)$ for $0<\alpha\leq 1$. Then, for any $i\in[a],j\in[b]$ we have
\begin{align}
\EE(C_{i,j}-\hat{C}_{i,j})^2\leq  \tilde{C}_{i,j}^2\cdot \left( (1-\kappa\alpha)^2+\eps\right)+\frac{\|\bar{a}_i\|^2 \|\bar{b}_j\|^2}{n}\left(\kappa \alpha^2\left(1-\kappa+\frac{1}{2^{2\frac{R}{\kappa}}-1}\right)+\eps\right)+n^{-8}.   
\end{align}
and in particular
\begin{align}
\EE\|A^\top  B- g_{1-\mathrm{sided}}(\alpha,f_1(A),B)\|_F^2&< \|\bar{A}^\top \bar{B}\|_F^2\cdot \left( (1-\kappa\alpha)^2+\eps\right)\nonumber\\
&+\frac{\|\bar{A}\|^2_F \|\bar{B}\|_F^2}{n}\left(\kappa \alpha^2\left(1-\kappa+\frac{1}{2^{2\frac{R}{\kappa}}-1}\right)+\eps\right)+a\cdot b\cdot n^{-8}.
\end{align}
\end{enumerate}
\label{thm:MostgeneralMatMul}
\end{theorem}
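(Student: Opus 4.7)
The plan is to prove Theorem~\ref{thm:MostgeneralMatMul} by reducing each pair of columns $(a_i,b_j)$ to an instance of the spherically-uniform inner product problem solved by Theorem~\ref{thm:universalLattice}, and then assembling the resulting pointwise bounds into a Frobenius-norm inequality.

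\textbf{Step 1 (Preprocessing and reduction to the sphere).} First I would apply the encoder of Fig.~\ref{fig:encoders} column by column: each column $a_i$ is replaced by its centered version $\bar a_i=a_i-\mu_i\bones$ (with $\mu_i=\tfrac{1}{n}\bones^\top a_i$), rescaled to norm $\sqrt n$, and then multiplied by one common Haar-distributed rotation $S\in\mathrm{O}_n(\RR)$ that is reused across all columns of $A$ and $B$. Call the output $u_i=\tfrac{\sqrt n}{\|\bar a_i\|}S\bar a_i\in\sqrt n\,\mathbb S^{n-1}$, and define $v_j$ analogously for the columns of $B$. Because $S$ is Haar and is applied identically on both sides, $(u_i,v_j)$ is $\rho_{ij}$-correlated spherically uniform in the sense of the definition preceding Theorem~\ref{thm:universalLattice}, with
\[\rho_{ij}=\frac{\bar a_i^\top\bar b_j}{\|\bar a_i\|\cdot\|\bar b_j\|}=\frac{\tilde C_{ij}}{\|\bar a_i\|\cdot\|\bar b_j\|}.\]
In parallel the encoder transmits $O(\log n)$-bit descriptions $\widehat\mu_i$, $\widehat{\|\bar a_i\|}$ (and similarly for $B$). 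The $M$-bounded entry hypothesis with $M=n^{10}2^{2000}$ places these scalars in a range representable to a relative accuracy of order $n^{-c}$ for any prescribed $c$ using only $O(\log n)=o(n)$ bits, so the per-entry rate remains $R$.

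\textbf{Step 2 (Applying the inner-product quantizer and decoding).} For each pair $(i,j)$ I would then invoke Theorem~\ref{thm:universalLattice} on $(u_i,v_j)$ with the same sparsification parameter $\kappa$ and MMSE scaling $\alpha$ that appear in the outer statement, using shared dither $Z_1$ across columns of $A$ and independent shared dither $Z_2$ across columns of $B$ (so that each individual pair $(u_i,v_j)$ still satisfies the hypotheses of Theorem~\ref{thm:universalLattice}). The decoder of Fig.~\ref{fig:decoder} then forms
\[\hat C_{ij}=\frac{\widehat{\|\bar a_i\|}\cdot\widehat{\|\bar b_j\|}}{n}\cdot\alpha\, g\bigl(f_1(u_i),f_2(v_j)\bigr)+n\,\widehat\mu_i\,\widehat\mu_j.\]
Using the decomposition $C_{ij}=\tilde C_{ij}+n\mu_i\mu_j$ (which holds because $\bones^\top\bar a_i=0$) together with $\tilde C_{ij}=\tfrac{\|\bar a_i\|\cdot\|\bar b_j\|}{n}\,u_i^\top v_j$, the squared error $(C_{ij}-\hat C_{ij})^2$ splits additively into a quantization error on $\tilde C_{ij}$ plus a scalar-side-information error driven by $\widehat\mu_i-\mu_i$ and $\widehat{\|\bar a_i\|}-\|\bar a_i\|$, which by Step~1 is $\le n^{-8}$ in expectation.

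\textbf{Step 3 (Error bookkeeping and Frobenius summation).} The first bullet of Theorem~\ref{thm:universalLattice} gives
\[\EE\Bigl(\tilde C_{ij}-\tfrac{\|\bar a_i\|\,\|\bar b_j\|}{n}\alpha\, g\Bigr)^2\leq\frac{\|\bar a_i\|^2\|\bar b_j\|^2}{n}\Bigl(\rho_{ij}^2\,n(1-\kappa\alpha)^2+\kappa\alpha^2\tfrac{1-\kappa+\kappa\phi(2^{-2R/\kappa})}{1-\phi(2^{-2R/\kappa})}+\eps(1+\rho_{ij}^2 n)\Bigr),\]
and substituting the identity $\rho_{ij}^2\|\bar a_i\|^2\|\bar b_j\|^2=\tilde C_{ij}^2$ into the $(1-\kappa\alpha)^2$ and $\eps$ terms produces exactly the pointwise inequality of part~1; part~2 follows identically from the one-sided bullet of Theorem~\ref{thm:universalLattice}. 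Summing over $(i,j)\in[a]\times[b]$ and using $\sum_{i,j}\tilde C_{ij}^2=\|\bar A^\top\bar B\|_F^2$ and $\sum_{i,j}\|\bar a_i\|^2\|\bar b_j\|^2=\|\bar A\|_F^2\|\bar B\|_F^2$ yields the Frobenius bounds. The genuine difficulty is concentrated in Theorem~\ref{thm:universalLattice}, whose nested-lattice construction is the real technical heavy lifting; the remaining obstacles at this stage are the two auxiliary points, namely verifying that the Haar rotation gives the clean distributional reduction to the $\rho$-correlated spherically uniform setting (which uses only the rotation-invariance of the Haar measure and the identity $S\tilde a_i\in\sqrt n\,\mathbb S^{n-1}$ a.s.), and certifying that $O(\log n)$-bit quantization of the four scalars $\mu_i,\mu_j,\|\bar a_i\|,\|\bar b_j\|$ is fine enough—uniformly over matrices with $M$-bounded entries—to be absorbed into the $n^{-8}$ slack.
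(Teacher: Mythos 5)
Your proposal is correct and follows essentially the same route as the paper's proof: center and normalize each column, apply a common Haar rotation to reduce each pair $(a_i,b_j)$ to the $\rho_{ij}$-correlated spherically uniform setting of Theorem~\ref{thm:universalLattice}, invoke that theorem pairwise with the identity $\rho_{ij}^2\|\bar a_i\|^2\|\bar b_j\|^2=\tilde C_{ij}^2$, absorb the $O(\log n)$-bit scalar quantization errors (including the cross term with $e_{ij}$, which the paper bounds via $|U_i^\top V_j|\le n$ and $|\alpha g|\le n$) into the $n^{-8}$ slack, and sum over $(i,j)$ for the Frobenius bound. The only bookkeeping you elide is exactly that cross-term control, which is routine given the $M$-boundedness assumption.
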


\begin{proof}[Proof of Theorem~\ref{thm:MostgeneralMatMul}]
We only prove part 1. The proofs for part 2 is nearly identical, and we specify the required modifications in the end of the proof.

Recall that $M=n^{10} 2^{2000}$ and let $\delta=M^{-5}$.
Let $f_1,f_2,g$ be the encoders and decoder from  Theorem~\ref{thm:universalLattice}. Based on those $f_1,f_2,g$, we propose the following rate-$R$ quantization scheme for quantization of matrices $A\in\RR^{n\times a}$ and $B\in\RR^{n\times b}$ in order to estimate $C=A^\top  B$:
\begin{enumerate}
\item Let $\mu_{a_i}=\frac{1}{n}\bones^\top a_i$ for $i\in[a]$ (similarly, $\mu_{b_j}=\frac{1}{n}\bones^\top b_j$ for $j\in[b]$). Since the matrices $A$ and $B$ have bounded entries, we have that $\mu_{a_i}\in [-M,M]$, $\forall i\in[a]$, and similarly for $\mu_{b_j}$, $j\in[b]$.
For each $i\in[a]$ we quantize $\mu_{a_i}$ to the nearest point in $\left\{k\cdot 2\delta\right\}_{k=-M/(2\delta)}^{k=M/(2\delta)}$, such that the quantized value $\hat{\mu}_{a_i}$ satisfies $\hat{\mu}_{a_i}=\mu_{a_i}\pm\delta$. This requires a total of $a\log (M/\delta)$ bits. Similarly, we quantize $\mu_{b_j}$ to $\hat{\mu}_{b_j}$ for each $j\in[b]$, which requires a total of $b\log (M/\delta)$ bits.
\item Let $\bar{a}_i=a_i-\mu_{a_i}\bones$ for $i\in[a]$ (similarly, $\bar{b}_j=b_j-\mu_{b_j}\bones$ for $j\in[b]$). Since the matrices $A$ and $B$ have bounded entries, we have that $\|\bar{a}_i\|\leq \|a_i\|\leq \sqrt{n}M$, $\forall i\in[a]$, and similarly for $\|\bar{b}_j\|$, $j\in[b]$.
We quantize the each $\|\bar{a}_i\|$, $i\in[a]$, to the nearest point in the grid $\{0\}\cup \{M^{-4}(1+\delta)^k\}_{k=0}^{T}$, where $T=\frac{\lceil \log (\sqrt{n}M^5) \rceil}{\log(1+\delta)}$. This requires a total of $a\log (T+2)<a\left(\log(3+\log(\sqrt{n}M^5))-\log \log(1+\delta) \right)$ bits.
Note that if $\|\bar{a}_i\|\in \{0\}\cup [M^{-4},\sqrt{n}M]$ we have that $\widehat{\|\bar{a}_i\|}=\|\bar{a}_i\|(1\pm \delta)$, and if $0<\|\bar{a}_i\|<M^{-4}$, then $|\|\bar{a}_i\|-\widehat{\|\bar{a}_i\|}|<M^{-4}$. We quantize each $\|\bar{b}_j\|$, $j\in[b]$ to $\widehat{\|\bar{b}_j\|}$, in a similar manner, requiring a total of $b\log (T+2)<b\left(\log(3+\log(\sqrt{n}M^5))-\log \log(1+\delta) \right)$ bits.
\item Draw $S\sim\Unif(\mathrm{O}_n(\RR))$ at both encoders (using common randomness), and compute $\tilde{A}=[\tilde{a}_1|\cdots|\tilde{a}_a]=S \bar{A}$, and $\tilde{B}=[\tilde{b}_1|\cdots|\tilde{b}_b]=S \bar{B}$, where $\bar{A}=[\bar{a}_1|\cdots|\bar{a}_a]$ and $\bar{B}=[\bar{b}_1|\cdots|\bar{b}_b]$.
\item Let
\begin{align}
 U_i&=\sqrt{n}\frac{\tilde{a}_i}{\|\bar{a}_i\|}=\sqrt{n} S \frac{\bar{a}_i}{\|\bar{a}_i\|},~~ i=1,\ldots,a\label{eq:RandRotA}\\
 V_j&=\sqrt{n}\frac{\tilde{b}_j}{\|\bar{b}_i\|}=\sqrt{n} S \frac{\bar{b}_j}{\|\bar{b}_j\|},~~ j=1,\ldots,b\label{eq:RandRotB}.
\end{align}
Let
\begin{align}
\eps_0=\frac{1}{n}\left[\log(M/\delta)+ \log(3+\log(\sqrt{n}M^5))-\log \log(1+\delta)\right]
\end{align}
and note that $\eps_0$ can be made arbitrarily small for $n$ large enough. 
Apply $f_1:\RR^n\to [2^{n(R-\eps_0)}]$ on $U_i$, for $i=1,\ldots,a$, and $f_2:\RR^n\to [2^{n(R-\eps_0)}]$ on $V_j$, for $j=1,\ldots,b$.
\item Use $g:[2^{n(R-\eps_0)}]\times [2^{n(R-\eps_0)}]\to\RR$, to estimate each entry of $C=A^\top  B$ as
\begin{align}
\hat{C}_{ij}=\alpha\frac{\widehat{\|\bar{a}_i\|}~\widehat{\|\bar{b}_j\|}}{n}g(f_1(U_i),f_2(V_j))+n\hat{\mu}_{a_i}\hat{\mu}_{b_j},~~~ i=1,\ldots,a,~~j=1,\ldots,b.    
\end{align}
\end{enumerate}
To analyze the mean squared error $\EE(C_{ij}-\hat{C}_{i,j})^2$, first note that
\begin{align}
\bar{a}_i^\top \bar{b}_j=\left(a_i-\mu_{a_i}\bones \right)^\top\left(b_j-\mu_{b_j}\bones \right)=a_i^\top b_j-n\mu_{a_i}\mu_{b_j},
\end{align}
so that
\begin{align}
C_{ij}&=a_i^\top b_j=\bar{a}_i^\top  \bar{b}_j+n\mu_{a_i}\mu_{b_j}=\bar{a}^\top _i S^\top  S \bar{b}_j+n\mu_{a_i}\mu_{b_j}=\tilde{a}_i^\top \tilde{b}_j+n\mu_{a_i}\mu_{b_j}\nonumber\\
&=\frac{\|\bar{a}_i\|~\|\bar{b}_j\|}{n} U_i^\top  V_j+n\mu_{a_i}\mu_{b_j}.
\end{align}
We therefore have that
\begin{align}
C_{ij}-\hat{C}_{ij}&=\frac{\|\bar{a}_i\|~\|\bar{b}_j\|}{n} U_i^\top  V_j-\alpha\frac{\widehat{\|\bar{a}_i\|}~\widehat{\|\bar{b}_j\|}}{n}g(f_1(U_i),f_2(V_j))+ n\mu_{a_i}\mu_{b_j}-n\hat{\mu}_{a_i}\hat{\mu}_{b_j}\\
&=e_{ij}+\Delta,
\end{align}
where 
\begin{align}
e_{ij}= \frac{\|\bar{a}_i\|~\|\bar{b}_j\|}{n}\left(U_i^\top  V_j-\alpha g(f_1(U_i),f_2(V_j)) \right)   
\end{align}
and $\Delta=\Delta_1+\Delta_2$, where
\begin{align}
\Delta_1=n\mu_{a_i}\mu_{b_j}-n\hat{\mu}_{a_i}\hat{\mu}_{b_j}
\end{align}
and
\begin{align}
\Delta_2=\frac{\alpha}{n}g(f_1(U_i),f_2(V_j))\left(\widehat{\|\bar{a}_i\|}~\widehat{\|\bar{b}_j\|}-\|\bar{a}_i\|\|\bar{b_j}\| \right).    
\end{align}
We have that
\begin{align}
|\Delta_1|\leq n\delta(|\mu_a|+|\mu_b|)+n\delta^2\leq 3n M^{-4}.  
\end{align}
To upper bound $|\Delta_2|$, first note that without loss of generality we can assume $|\alpha g(f_1(U_i),f_2(V_j)|\leq n$ because the quantity $U_i^\top V_j$ it estimates is in $[-n,n]$. Furthermore, 
\begin{align}
\left|\widehat{\|\bar{a}_i\|}~\widehat{\|\bar{b}_j\|}-\|\bar{a}_i\|\|\bar{b_j}\| \right|\leq 
\begin{cases}
\|\bar{a}_i\|\cdot \|\bar{b}_j\|\cdot 3\delta    & \|\bar{a}_i\|,\|\bar{b}_j\|\in \{0\}\cup [M^{-4},\sqrt{n}M] \\
M^{-8} &  \|\bar{a}_i\|,\|\bar{b}_j\|<M^{-4}\\
2\sqrt{n}M^{-3} & \text{otherwise}
\end{cases}.
\end{align}
Thus (for $n\geq 4)$,
\begin{align}
|\Delta_2|\leq  3M^{-5}\|\bar{a}_i\|\cdot \|\bar{b}_j\|+2\sqrt{n}M^{-3}\leq 4n M^{-3}.   
\end{align}
We consequently have that 
\begin{align}
|\Delta|<\eps_1=7 n M^{-3},    
\end{align}
with probability $1$. We have therefore obtained
\begin{align}
\EE(C_{ij}-\hat{C}_{ij})^2\leq \EE(e_{ij}^2)+\eps_1^2+2\eps_1\EE|e_{ij}|\leq \EE(e_{ij}^2)+\eps_1^2+4\eps_1 \|\bar{a}_i\|\cdot \|\bar{b}_j\|,
\end{align}
where in the last inequality we have used the fact that both $|U_i^\top V_j|\leq n$ and $|\alpha g(f_1(U_i),f_2(V_j))|\leq n$.

We are therefore left with the task of upper bounding $\EE(e_{ij}^2)$. To that end, let $\rho_{ij}=\frac{\bar{a}_i^\top  \bar{b}_j}{\|\bar{a}_i\|~\|\bar{b}_j\|}$. We claim that $U_i,V_j$ are $\rho_{ij}$-correlated spherically uniform random vectors.
To see this, note that due to the random rotation matrix $S$, we may assume without loss of generality that 
\begin{align}
\frac{\bar{a}_i}{\|\bar{a}_i\|}&=[1|0|0|\cdots|0]^\top ,\\
\frac{\bar{b}_j}{\|\bar{b}_j\|}&=[\rho_{ij}|\sqrt{1-\rho_{ij}^2}|0|\cdots|0]^\top ,
\end{align}
and this assumption will have no affect on the joint distribution of $U_i,V_j$. Writing $S=[S_1|S_2|\cdots|S_n]$, we therefore have that $U_i=\sqrt{n}S_1$ and $V_j=\rho_{ij} U_i+\sqrt{1-\rho_{ij}^2}Z$, with $Z=\sqrt{n}S_2$. Thus, if $f_1,f_2,g$ are the encoders and decoder from Theorem~\ref{thm:universalLattice}, we therefore have that for any $\eps'>0$ and $n$ large enough
\begin{align}
\frac{1}{n}\EE(U_i^\top  V_j-\alpha g(f_1(U_i),f_2(V_j)))^2<\rho_{ij}^2 n\left(1-\kappa\alpha \right)^2+\kappa\alpha^2\left(\frac{1-\kappa+\kappa\phi\left(2^{-2\frac{(R-\eps_0)}{\kappa}}\right)}{1-\phi\left(2^{-2\frac{(R-\eps_0)}{\kappa}}\right)}\right)+\eps'(1+\rho_{ij}^2 n).
\end{align}
Consequently,
\begin{align}
\EE(e^2_{ij})&=\EE\left(\frac{\|\bar{a}_i\|~\|\bar{b}_j\|}{n}\left( U_i^\top  V_j-\alpha g(f_1(U_i),f_2(V_j))\right) \right)^2\nonumber\\
&<\frac{\|\bar{a}_i\|^2~\|\bar{b}_j\|^2}{n}\rho_{ij}^2 n\left(1-\kappa\alpha \right)^2+\kappa\alpha^2\frac{\|\bar{a}_i\|^2~\|\bar{b}_j\|^2}{n}\left(\frac{1-\kappa+\kappa\phi\left(2^{-2\frac{(R-\eps_0)}{\kappa}}\right)}{1-\phi\left(2^{-2\frac{(R-\eps_0)}{\kappa}}\right)}\right)+\frac{\|\bar{a}_i\|^2~\|\bar{b}_j\|^2}{n}\eps'(1+\rho_{ij}^2 n)\\
&=(\bar{a}_i^\top \bar{v}_j)^2 \left((1-\kappa\alpha)^2 +\eps'\right)+\frac{\|\bar{a}_i\|^2~\|\bar{b}_j\|^2}{n}\left(\kappa\alpha^2\left(\frac{1-\kappa+\kappa\phi\left(2^{-2\frac{(R-\eps_0)}{\kappa}}\right)}{1-\phi\left(2^{-2\frac{(R-\eps_0)}{\kappa}}\right)}\right)+\eps'\right).
\end{align}
Therefore,
\begin{align}
\EE(C_{ij}-\hat{C}_{ij})^2&\leq   (\bar{a}_i^\top \bar{v}_j)^2 \left((1-\kappa\alpha)^2 +\eps'\right)+\frac{\|\bar{a}_i\|^2~\|\bar{b}_j\|^2}{n}\left(\kappa\alpha^2\left(\frac{1-\kappa+\kappa\phi\left(2^{-2\frac{(R-\eps_0)}{\kappa}}\right)}{1-\phi\left(2^{-2\frac{(R-\eps_0)}{\kappa}}\right)}\right)+\eps'\right)\nonumber\\
&+4\|\bar{a}_i\|~ \|\bar{b}_j\|\eps_1 +\eps^2_1.
\end{align}
Thus, recalling that $\|
\bar{a}_i\|~\|\bar{b}_j\|\leq nM^2$, for any $\eps>0$ and $n$ large enough
\begin{align}
\EE(C_{ij}-\hat{C}_{ij})^2&\leq   (\bar{a}_i^\top \bar{v}_j)^2 \left((1-\kappa\alpha)^2 +\eps\right)+\frac{\|\bar{a}_i\|^2~\|\bar{b}_j\|^2}{n}\left(\kappa\alpha^2\left(\frac{1-\kappa+\kappa\phi\left(2^{-2\frac{R}{\kappa}}\right)}{1-\phi\left(2^{-2\frac{R}{\kappa}}\right)}\right)+\eps\right)+n^{-8}
\end{align}
The proof of part 1 is complete, by noting that $\tilde{C}_{ij}=\bar{a}_i^\top \bar{b}_j$ and that
\begin{align}
\frac{1}{n}\sum_{i,j}\|\bar{a}_i\|^2 \|\bar{b}_j\|^2&=\frac{1}{n}\sum_{i=1}^a \|\bar{a}_i\|^2 \sum_{j=1}^b \|\bar{b}_j\|^2=\frac{\|\bar{A}\|_F^2 ~ \|\bar{B}\|^2_F}{n},\nonumber\\
\sum_{i,j}\tilde{C}^2_{ij}&=\|\tilde{C}\|^2_{F}.
\end{align}    

The proof for part 2 follows identically from part 2 of Theorem~\ref{thm:universalLattice}.
\end{proof}

With Theorem~\ref{thm:MostgeneralMatMul} at hand, we easily obtain Theorem~\ref{thm:generalMatMul} and Theorem~\ref{thm:generalMatMulNoMMSE} as simple corollaries.

\begin{proof}[Proof of Theorem~\ref{thm:generalMatMul}]
For part 1, let $\alpha=\alpha_{\kappa}=\left(1-\phi\left(2^{-2\frac{R}{\kappa}}\right)\right)$. Applying part 1 of Theorem~\ref{thm:MostgeneralMatMul} gives
\begin{align}
\EE(C_{i,j}-\hat{C}_{i,j})^2\leq  \tilde{C}_{i,j}^2\cdot\left(G^2(R,\kappa)+\eps\right)+\frac{\|\bar{a}_i\|^2 \|\bar{b}_j\|^2}{n}\left((1-G(R,\kappa))G(R,\kappa)+\eps\right)+n^{-8},   
\end{align}
where 
\begin{align}
G(R,\kappa)=1-\kappa+\kappa \phi\left(2^{-2\frac{R}{\kappa}}\right).
\end{align}
Choosing $\kappa=\min\{R/R^*,1\}$, we get $G(R,\kappa)=\Gamma(R)$, establishing the claim.

For part 2, let $\alpha=\alpha_{\kappa}=1-2^{-2\frac{R}{\kappa}}=\frac{2^{2\frac{R}{\kappa}}-1}{2^{2\frac{R}{\kappa}}}$. Applying part 2 of Theorem~\ref{thm:MostgeneralMatMul} gives
\begin{align}
\EE(C_{i,j}-\hat{C}_{i,j})^2\leq  \tilde{C}_{i,j}^2\cdot\left(\tilde{G}^2(R,\kappa)+\eps\right)+\frac{\|\bar{a}_i\|^2 \|\bar{b}_j\|^2}{n}\left((1-\tilde{G}(R,\kappa))\tilde{G}(R,\kappa)+\eps\right)+n^{-8}, 
\end{align}
where 
\begin{align}
\tilde{G}(R,\kappa)=1-\kappa+\kappa 2^{-2\frac{R}{\kappa}}.
\end{align}
Choosing $\kappa=1$, we get $\tilde{G}(R,\kappa)=2^{-2R}$, establishing the claim.
\end{proof}

\begin{proof}[Proof of Theorem~\ref{thm:generalMatMulNoMMSE}]
Follows by applying part 1 and part 2 of Theorem~\ref{thm:MostgeneralMatMul} with $\alpha=\kappa=1$ (that is, no time-sharing and no MMSE scaling). Note that a straightforward application of Theorem~\ref{thm:MostgeneralMatMul} with $\alpha=\kappa=1$ leaves an $\|\bar{A}^\top \bar{B}\|_F^2\cdot\eps$ term. However, a careful inspection of the proof of Theorem~\ref{thm:universalLattice} shows that this term is not needed for the special case of $\alpha=\kappa=1$.
\end{proof}
 
\section{Nested Lattice Quantization for Inner Product Computation}
\label{sec:nestedlatticeIP}

\subsection{Lattices}
\label{subsec:latticebasics}

We review some basic lattice definitions. See~\cite{ramiBook} for a comprehensive treatment of lattices in information theory. For a lattice $L\subset\RR^d$ we define the nearest neighbor quantizer $Q_L:\RR^d\to L$ as
\begin{align}
Q_{L}(x)=\argmin_{\lambda\in L}\|x-\lambda\|, 
\end{align}
where ties are broken arbitrarily, but in systematic manner. The Voronoi region $\m{V}_L$ is defined as the set of all points in $\RR^n$ that are closer to $0$ than to any other lattice point
\begin{align}
\m{V}_L=\left\{x\in\RR^d~:~Q_L(x)=0\right\}.    
\end{align}
Any lattice $L\subset \RR^d$ has a (non-unique) generating matrix $G\in\RR^{d\times d}$ such that $L=G\ZZ^d$. The covolume of the lattice $L$, denoted $\mathrm{covol}(L)$, is the volume of its Voronoi region (or any other fundamental cell of $L$), which is also equal to $|\det G|$. Let $\m{B}=\{x\in\RR^d~~:~~\|x\|\leq 1\}$ be the unit $\ell_2$ ball in $\RR^d$, whose volume is 
\begin{align}
V_d=\frac{\pi^{d/2}}{\Gamma\left(1+\frac{d}{2} \right)},  
\end{align}
where here $\Gamma$ is Euler's Gamma function, not to be confused with $\Gamma(R)$ defined in~\eqref{eq:barphidef}.
We denote by $\reff(L)=(\mathrm{covol}(L)/V_d)^{\frac{1}{d}}$ the effective radius of $L$, that is, the radius of a $\ell_2$ ball in $\RR^d$ whose volume $V_d \reff^d(L)$ equals $\mathrm{covol}(L)$. The covering radius of $L$ is defined as
\begin{align}
 \rcov(L)=\min\left\{r>0~:~L+r\m{B}=\RR^d\right\}=\max\left\{\|x\|~:~x\in\m{V}_L\right\} .
\end{align}
Clearly, $\reff(L)\leq \rcov(L)$. Let $Z\sim\Unif(\m{V}_L)$ be a random vector uniformly distributed over the Voronoi region of $L$. We define the second moment of the lattice $L$ as
\begin{align}
\sigma^2(L)=\frac{1}{d}\EE\|Z\|^2,    
\end{align}
and the covariance matrix of $L$ as
\begin{align}
R(L)=\EE[Z Z^\top].    
\end{align}
The modulo operation with respect to the lattice $L$, is defined in this paper as
\begin{align}
[x]\bmod L=x-Q_{L}(x). 
\end{align}
Note that $[x]\bmod L\in\m{V}_L$.

The proof of Theorem~\ref{thm:universalLattice} uses a nested-lattice quantizer~\cite{ramiBook}, based on a pair of nested lattices $\Lambda_c\subset \Lambda_f$ in $\RR^d$. A quantizer is constructed from such a pair by first quantizing each point in $\RR^d$ to $Q_{\Lambda_f}(x)$, the nearest point in the lattice $\Lambda_f$. Since there is an infinite number of points in $\Lambda_f$, the encoder cannot describe $Q_{\Lambda_f}(x)$ using $dR$ bits. Instead, it describes the \emph{coset} of $\Lambda_c$ in which $Q_{\Lambda_f}(x)$ lies. There are $|\Lambda_f/\Lambda_c|=\mathrm{covol}(\Lambda_c)/\mathrm{covol}(\Lambda_f)$ such cosets, and therefore, if $|\Lambda_f/\Lambda_c|\leq 2^{dR}$ the encoder can indeed send that information with $dR$ bits. When the decoder gets this information, it knows that $Q_{\Lambda_f}(x)\in Q_{\Lambda_f}(x)+\Lambda_c$, but does not know which point within this coset is $Q_{\Lambda_f}(x)$. Typically, the decoder will output the most likely member from the coset. For the case where $X$ is an iid Gaussian vector in $\RR^d$, this (approximately) corresponds to selecting $\hat{x}$ as the member with the smallest energy in $Q_{\Lambda_f}(x)+\Lambda_c$. Under this paradigm, the reconstruction points are $\Lambda_f\cap \m{V}_{\Lambda_c}$.

Many works have established the existence of nested lattices that simultaneously posses many desired properties, namely, relatively large packing radius, small covering radius, small second moment, and resilience to noise~\cite{erez2004achieving,erez2005lattices,krithivasan2007proof,kudryashov2007random,ling2014achieving,ling2014semantically,ramiBook,ncng16,campello2016algebraic,oe17,HuangNarayanan17,di2017lda,orw22,ordentlich2023bounds,sadeghi2024simpler,liu2024quantization}. In the problem of quantization for inner product computation a new ingredient enters the picture, that was not previously needed. The inner product reconstruction error includes a term that consists of the inner product of the quantization errors of each one of the vectors. To control this term, we need the variance of the inner product $Z^\top \bar{Z}$ between two independent dither vectors $Z,\bar{Z}\sim\Unif(\m{V}_L)$ to be small. This in turn, requires the spectrum of the quantization error to be ``nearly-white'' in the Frobenius norm sense. Namely, a good lattice quantizer $L$ for the inner product problem needs to satisfy $\frac{1}{d}\|R(L)\|_F^2\approx (\sigma^2(L))^2$. While the optimal lattice quantizer in dimension $d$ always satisfies $R(L)=\sigma^2(L)\cdot I_d$~\cite{ramiBook,zamir1996lattice}, we do not know whether it also has the additional required properties, e.g., resilience to noise. We must therefore resort to analyzing a random ensemble of nested lattices, and show that in addition to all other required properties, they also typically have small $\frac{1}{d}\|R(L)\|_F^2$. Our proof that a random lattice has small $\frac{1}{d}\|R(L)\|_F^2$ relies on the fact that the covering density of a random lattice is only polynomial in the dimension, which was recently proved in~\cite{orw22}. We prove the following result in Appendix~\ref{appendix:latticeproofs}. Except for item~\ref{itemL1:FrobCov} which required new ideas, the proofs for all other items follow the techniques developed in the papers on lattice goodness that were cited above.

\begin{theorem}\label{thm:good_lattices}
There are universal constants $C_1,C_2$ such that for any distribution $P_U$ on $\RR^d$, any $r_U>0$, $D>0$, $\alpha,\beta>0$, $0<\eps\leq \frac{1}{\sqrt{2}}$, and 
\begin{align}
R\geq \frac{1}{2}\log\left(\beta^2+\alpha^2 \frac{r_U}{ D}\right)+C_1\left(\eps+\frac{\log d}{d}\right)   
\label{eq:Rconstraint}
\end{align}
there exists a pair of nested lattices $\Lambda_c\subset\Lambda_f$ in $\RR^d$ satisfying the following
\begin{enumerate}
\item $\frac{1}{2}2^{dR}\leq |\Lambda_f/\Lambda_c|\leq 2^{dR}$\label{itemL1:rate} 
\item $\rcov(\Lambda_f)\leq \sqrt{dD}$ and $\rcov(\Lambda_c)\leq 2^R\sqrt{dD}$; \label{itemL1:rcov}
\item $\sigma^2(\Lambda_f)\leq D$;\label{itemL1:D}
\item $\frac{1}{d}\|R(\Lambda_f)\|_F^2\leq D^2(1+\frac{C_2\log^3{d}}{d})$;\label{itemL1:FrobCov}
\item For $U\sim P_U$, $Z~\sim\Unif(\m{V}_{\Lambda_f})$, $Z\indep U$, we have
\begin{align}
\Pr(\alpha U+\beta Z \notin \m{V}_{\Lambda_c})\leq \Pr(\|U\|^2> d\cdot r_U)+6e^{-d \frac{\eps^2}{2}}.
\end{align}
\label{item:L1pe}
\end{enumerate}
\end{theorem}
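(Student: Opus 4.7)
The proof follows the well-trodden path of establishing existence of ``good'' nested lattices via random ensembles, augmented with a new analysis for item~\ref{itemL1:FrobCov}. The plan is to use a Construction-A-type random ensemble: draw the coarse lattice $\Lambda_c\subset\RR^d$ from a rotation-invariant ensemble of ``generic'' random lattices (e.g., Siegel's, or equivalently Loeliger's composed with a Haar-random rotation) scaled so that $\reff(\Lambda_c)\approx 2^R\sqrt{dD}$, and obtain the fine lattice $\Lambda_f\supset\Lambda_c$ by adjoining cosets indexed by a random linear code over $\mathbb{F}_q$ for a suitable prime $q$, with the number of cosets tuned so $|\Lambda_f/\Lambda_c|\leq 2^{dR}$. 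Item~\ref{itemL1:rate} is then immediate from counting, and marginally each of $\Lambda_c$, $\Lambda_f$ is a typical element of the Siegel ensemble at its own scale.

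Items~\ref{itemL1:rcov}, \ref{itemL1:D}, and~\ref{item:L1pe} will be handled by essentially standard techniques. Item~\ref{itemL1:rcov} follows from the ORW22 covering-density bound, which guarantees that a typical random lattice satisfies $\rcov(\Lambda)\leq \reff(\Lambda)(1+O(\log d/d))$; applying this separately to $\Lambda_f$ and $\Lambda_c$ yields the stated covering-radius bounds after the vanishing multiplicative error is absorbed into $C_1$ via~\eqref{eq:Rconstraint}. Item~\ref{itemL1:D} is immediate because $Z\in\m{V}_{\Lambda_f}\subset\rcov(\Lambda_f)\m{B}$ gives $\sigma^2(\Lambda_f)=\tfrac{1}{d}\EE\|Z\|^2\leq \rcov^2(\Lambda_f)/d\leq D$. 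Item~\ref{item:L1pe} is the Erez--Zamir--style overload bound: after conditioning on $\|U\|^2\leq dr_U$, the event $\alpha U+\beta Z\notin\m{V}_c$ forces $\|\alpha U+\beta Z\|>\rcov(\Lambda_c)$; a Chernoff-type concentration for $\|\alpha U+\beta Z\|^2$ (combined with a union bound over an $\eps$-net of the sphere) together with~\eqref{eq:Rconstraint} (which ensures $\EE\|\alpha U+\beta Z\|^2$ is strictly smaller than $\rcov^2(\Lambda_c)$) yields the $6e^{-d\eps^2/2}$ tail.

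Item~\ref{itemL1:FrobCov} is the novel ingredient. The starting identity is
\begin{align*}
\|R(\Lambda_f)\|_F^2=\trace(R(\Lambda_f)^2)=\EE[(Z^\top Z')^2],
\end{align*}
where $Z,Z'$ are iid uniform on $\m{V}_{\Lambda_f}$, so the target $\|R(\Lambda_f)\|_F^2\leq dD^2(1+O(\log^3 d/d))$ amounts to showing $\EE[(Z^\top Z')^2]-d\sigma^4=O(D^2\log^3 d)$. The naive Cauchy--Schwarz bound $(Z^\top Z')^2\leq \|Z\|^2\|Z'\|^2\leq \rcov^4(\Lambda_f)$ is too lossy by a factor of $d$. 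The plan is to split the expectation according to a typicality event $\mathcal{T}=\{\|Z\|^2,\|Z'\|^2\leq d\sigma^2(1+\eps)\}$ with $\eps$ polylogarithmic in $d$. On $\mathcal{T}^c$ we use the crude bound $(Z^\top Z')^2\leq \rcov^4$ combined with concentration of $\|Z\|^2$ about $d\sigma^2$ derived from the ORW22 near-ball shape of $\m{V}_{\Lambda_f}$, so the atypical contribution is at most $\rcov^4\cdot\exp(-\Omega(d\eps^2/\log^2 d))$ which is negligible compared to the target. On $\mathcal{T}$, the crucial step is to show that, conditional on the norms being typical, the directions $Z/\|Z\|$ and $Z'/\|Z'\|$ induce a nearly isotropic inner product: invoking rotation invariance of the ensemble and Siegel-type averaging, one obtains $\EE_\Lambda\EE[(Z^\top Z')^2\cdot\Ind_\mathcal{T}]\leq \tfrac{1}{d}\EE_\Lambda\EE[\|Z\|^2\|Z'\|^2\Ind_\mathcal{T}](1+o(1))\leq d\sigma^4(1+\eps)^2(1+o(1))$ in the ensemble-averaged sense. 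A Markov/union-bound argument then produces a single realization on which items~\ref{itemL1:rate}--\ref{item:L1pe} hold simultaneously.

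The main obstacle is precisely the conditional near-isotropy used in the typical case on item~\ref{itemL1:FrobCov}: upgrading the volumetric near-roundness of $\m{V}_{\Lambda_f}$ (a single number, the ORW22 covering density bound) to a second-moment isotropy statement (a quadratic functional of the full shape of the Voronoi cell) requires a careful Siegel-averaging/Fourier-analytic argument to bound the deviation of the covariance from $\sigma^2 I$. This step is the technical innovation that goes beyond the lattice-goodness toolkit of the cited papers; the other four items should drop out with at most cosmetic adjustments of existing arguments.
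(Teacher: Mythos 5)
Your overall architecture matches the paper's: a random Construction-A-type pair (coarse lattice from the Haar--Siegel ensemble, fine lattice obtained by adjoining cosets of a random code, with $p^{k/d}L(S)\sim\mu_d$ so both lattices are marginally typical), the ORW22 polynomial covering-density bound for items~\ref{itemL1:rcov} and~\ref{itemL1:D}, a Minkowski--Hlawka/Siegel first-moment argument for item~\ref{item:L1pe}, and a union bound to find one realization satisfying everything. Two remarks on item~\ref{item:L1pe}: the overload event does \emph{not} force $\|\alpha U+\beta Z\|>\rcov(\Lambda_c)$ (it only forces the norm past the packing radius, which for a random lattice is about $\reff/2$ and would cost a full bit in~\eqref{eq:Rconstraint}); the paper instead bounds $\Pr(x\notin\m{V}_{\Lambda_c})$ by $\Pr(x\notin r\m{B})$ plus the expected number of competing nonzero lattice points in $r\m{B}$, evaluated by Siegel's formula. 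One must also first dominate the Voronoi-uniform dither by a ball-uniform one (costing a harmless $d^{C}$ factor) so that the coarse-lattice average can be taken independently of $Z$.

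The genuine gap is item~\ref{itemL1:FrobCov}, which is exactly the part requiring a new idea, and your proposal does not supply one: the ``conditional near-isotropy'' of the Voronoi covariance at precision $1+O(\log^3 d/d)$ is asserted, attributed to an unspecified ``Siegel-averaging/Fourier-analytic argument,'' and then acknowledged as the main obstacle. Siegel/Rogers moment formulas control lattice-point counts, not quadratic functionals of the Voronoi-cell shape, so it is not clear this route closes; moreover your final Markov step only works if the ensemble bound is on the excess $\frac{1}{d}\|R\|_F^2-\sigma^4$ rather than on $\frac{1}{d}\|R\|_F^2$ itself, since a constant-factor loss is fatal. The paper's actual argument is deterministic, valid for \emph{any} lattice with polynomially bounded covering density, and entirely different: (i) the covering bound forces the normalized second moment to satisfy $2\pi e\,N(\Lambda_f)\le 1+O(\log d/d)$; (ii) the max-entropy inequality $\log\Vol(\m{V}_{\Lambda_f})=h(Z)\le\frac{1}{2}\log\det(2\pi e R(\Lambda_f))$ then forces the geometric mean of the eigenvalues $\lambda_i$ of $R(\Lambda_f)$ to be within $1+O(\log d/d)$ of their arithmetic mean $\sigma^2$; (iii) a second-order Taylor expansion of $\ln x$ around $\sigma^2$, with the second derivative bounded by $-1/\|R(\Lambda_f)\|_2^2$ on the relevant range, converts this into $\frac{1}{d}\sum_i\lambda_i^2\le\sigma^4+2\|R(\Lambda_f)\|_2^2\ln(1+\delta)$; and (iv) a spherical-cap argument (again using the covering bound to dominate the Voronoi-uniform density by a ball-uniform one) gives $\|R(\Lambda_f)\|_2=O(\log d)\,\rcov^2(\Lambda_f)/d$, yielding the stated $1+C_2\log^3 d/d$ factor. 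Without step (ii)--(iv) or a worked-out substitute for your isotropy claim, the proof of item~\ref{itemL1:FrobCov} is missing.
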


\begin{remark}
If we further require that $R=\log q$ for some integer $q\geq 2$, there exists a pair of self-similar nested lattices $\Lambda_c=q \Lambda \subset \Lambda=\Lambda_f$ satisfying the statements in Theorem~\ref{thm:good_lattices}. The proof is essentially the same.    
\end{remark}

\begin{remark}
While our proof for Theorem~\ref{thm:good_lattices} does not impose any particular structure on the lattices $\Lambda_c\subset\Lambda_f$, it is possible to prove the existence of Construction A lattices $\Lambda_c\subset\Lambda_f$ satisfying Theorem~\ref{thm:good_lattices}. This follows from~\cite[Corollary 1.5]{ordentlich2023bounds} that shows that the covering density of a typical Construction A lattice (with judiciously chosen parameters) is also polynomial in the dimension.
\end{remark}

\medskip

\subsection{Proof of Theorem~\ref{thm:universalLattice}}

\subsubsection{Dithered Nested Lattice Quantization for Inner Product}
\label{subsec:nestedlatticescheme}

Let $d=\lfloor \kappa n \rfloor$, and denote $U_{[d]}=(U_1,\ldots,U_d)^\top$ and similarly $V_{[d]}=\sqrt{\rho} U_{[d]}+\sqrt{1-\rho^2} Z_{[d]}$. Let $\tilde{R}=\frac{n}{d}R\geq \frac{R}{\kappa}$, and
let $\Lambda_c\subset\Lambda_f$ be a pair of nested lattices in $\RR^d$, with $|\Lambda_f/\Lambda_c|\leq 2^{d\tilde{R}}$. Let $\tilde{Z}_1,\tilde{Z}_2\sim\Unif(\m{V}_{\Lambda_f})$ be statistically independent dither vectors.
Our encoders $f_1,f_2:\RR^n\to[2^{nR}]$ compute
\begin{align}
\tilde{U}_{[d]}&=\left[Q_{\Lambda_f}\left( U_{[d]}+\tilde{Z}_1\right)\right]\bmod \Lambda_c\\
\tilde{V}_{[d]}&=\left[Q_{\Lambda_f}\left(V_{[d]}+\tilde{Z}_2\right)\right]\bmod \Lambda_c,
\end{align}
and each of them maps the result to $nR=d\tilde{R}$ bits (which is possible since $|\Lambda_f/\Lambda_c|\leq 2^{d\tilde{R}}$).

The decoder $g(f_1(U),f_2(V))$ computes
\begin{align}
\hat{U}_{[d]}&=\left[\tilde{U}_{[d]}-\tilde{Z}_1\right]\bmod \Lambda_c \label{eq:decU}\\
\hat{V}_{[d]}&=\left[\tilde{V}_{[d]}-\tilde{Z}_2\right]\bmod \Lambda_c \label{eq:decV},
\end{align}
and estimates the inner product as 
\begin{align}
g(f_1(U),f_2(V))=\hat{U}_{[d]}^\top \hat{V}_{[d]}.    
\end{align}

\medskip
\medskip

\subsubsection{Analysis}
\label{subsec:nestedlatticeanalysis}

We now analyze the performance of this scheme. First, note that
\begin{align}
\hat{U}_{[d]}=\left[\tilde{U}_{[d]}-\tilde{Z}_1\right]\bmod \Lambda_c&=\left[\left[Q_{\Lambda_f}\left( U_{[d]}+\tilde{Z}_1\right)\right]\bmod \Lambda_c-\tilde{Z}_1\right]\bmod \Lambda_c\nonumber\\
&=\left[Q_{\Lambda_f}\left(U_{[d]}+\tilde{Z}_1\right)-\tilde{Z}_1\right]\bmod \Lambda_c\nonumber\\
&=\left[U_{[d]}+\left(Q_{\Lambda_f}\left(U_{[d]}+\tilde{Z}_1\right)-(U_{[d]}+\tilde{Z}_1)\right)\right]\bmod \Lambda_c\nonumber\\
&=\left[ U_{[d]}+Z_1\right]\bmod \Lambda_c,
\label{eq:modLambdac1}
\end{align}
where
\begin{align}
Z_1=Q_{\Lambda_f}\left(U_{[d]}+\tilde{Z}_1\right)-\left(U_{[d]}+\tilde{Z}_1\right)
\end{align}
is uniform over $-\m{V}_{\Lambda_f}=\m{V}_{\Lambda_f}$ and statistically independent of $U$ (and everything else), by the Crypto Lemma~\cite{erez2004achieving,ramiBook}. Similarly, we obtain
\begin{align}
\hat{V}_{[d]}=\left[\tilde{V}_{[d]}-\tilde{Z}_2\right]\bmod \Lambda_c=\left[V_{[d]}+Z_2\right]\bmod \Lambda_c,  
\label{eq:modLambdac2}
\end{align}
where $Z_2\sim\Unif(\m{V}_f)$ is statistically independent of $V$ (and everything else). 

Let
\begin{align}
\Uidd&=U_{[d]}+Z_1\\
\Vidd&=V_{[d]}+Z_2\\
\end{align}
and define the overload events
\begin{align}
\OL_1=\{U_{[d]}+Z_1\notin\m{V}_{\Lambda_c}\},~~\OL_2=\{V_{[d]}+Z_2\notin\m{V}_{\Lambda_c}\},~~\OL=\OL_1\cup \OL_2.\label{eq:overloadDef}    
\end{align}
In particular, if $\mathrm{OL}$ did not occur, the $\bmod\Lambda_c$ operation in~\eqref{eq:modLambdac1} and in~\eqref{eq:modLambdac2} is inactive, and $\hat{U}_{[d]}=\Uidd$ and $\hat{V}_{[d]}=\Vidd$.  
Let
\begin{align}
e=\alpha g(f_1(U),f_2(V))-U^\top V=\alpha\hat{U}_{[d]}^\top \hat{V}_{[d]}-U^\top V,    
\end{align}
and
\begin{align}
\eid=\alpha\Uidd^\top \Vidd -U^\top V.
\end{align}
If $\Pr(\OL)$ is very small,  then intuitively $\EE[e^2]$ should be close to $\EE[\eid^2]$. Indeed, we prove the following in Appendix~\ref{appendix:fromidealtoreal}
\begin{proposition}
\begin{align}
\EE(e^2)\leq \EE[\eid^2]+75\Pr(\OL)\cdot M^4(\rcov(\Lambda_c)) 
\end{align}
where 
\begin{align}
M(\rcov(\Lambda_c))=\max\{\sqrt{n},\rcov(\Lambda_c)\}.
\end{align}
\label{prop:fromidealtoreal}
\end{proposition}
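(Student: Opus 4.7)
The argument is a simple ``bad event'' truncation. First, decompose
\begin{align*}
\EE[e^2] = \EE\!\left[e^2 \Ind_{\OL^c}\right] + \EE\!\left[e^2 \Ind_{\OL}\right].
\end{align*}
On the complement of the overload event~\eqref{eq:overloadDef}, the outer $\bmod\Lambda_c$ operations in~\eqref{eq:modLambdac1} and~\eqref{eq:modLambdac2} are inactive, so $\hat{U}_{[d]} = \Uidd$ and $\hat{V}_{[d]} = \Vidd$, whence $e = \eid$ pointwise on $\OL^c$. This immediately gives
\begin{align*}
\EE\!\left[e^2 \Ind_{\OL^c}\right] = \EE\!\left[\eid^2 \Ind_{\OL^c}\right] \leq \EE[\eid^2].
\end{align*}

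The next step is to bound $\EE[e^2 \Ind_{\OL}]$ via a worst-case deterministic estimate of $|e|$. By construction of the decoder the outputs satisfy $\hat{U}_{[d]},\hat{V}_{[d]} \in \m{V}_{\Lambda_c}$ always (not only on $\OL^c$), so $\|\hat{U}_{[d]}\|,\|\hat{V}_{[d]}\| \leq \rcov(\Lambda_c)$. Cauchy--Schwarz, together with $\alpha \in [0,1]$, then gives $|\alpha \hat{U}_{[d]}^\top \hat{V}_{[d]}| \leq \rcov^2(\Lambda_c) \leq M^2(\rcov(\Lambda_c))$; and since $U,V \in \sqrt{n}\,\mathbb{S}^{n-1}$, we also have $|U^\top V| \leq n \leq M^2(\rcov(\Lambda_c))$. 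Combining, $e^2 \leq 4M^4(\rcov(\Lambda_c))$ with probability one, and therefore
\begin{align*}
\EE\!\left[e^2 \Ind_{\OL}\right] \leq 4 \Pr(\OL)\cdot M^4(\rcov(\Lambda_c)),
\end{align*}
which is comfortably inside the stated constant $75$.

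There is no real obstacle here; the constant $75$ is generous and absorbs whatever intermediate slack the authors pick up elsewhere. The only subtlety to keep in mind is that $\hat{U}_{[d]},\hat{V}_{[d]}$ lie in $\m{V}_{\Lambda_c}$ by the very definition of the outer modulo reduction, so the Voronoi-ball bound applies uniformly, independent of whether overload occurred; moreover, no separate handling of the ``missing'' $n-d$ unquantized coordinates is required, since their contribution to $U^\top V$ is baked into $e$ and $\eid$ in exactly the same way on $\OL^c$ and is harmlessly absorbed into the deterministic bound on $\OL$.
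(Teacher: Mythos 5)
Your proof is correct, and it rests on the same two facts as the paper's: that $e=\eid$ off the overload event (because the outer $\bmod\Lambda_c$ is inactive there), and that on the overload event everything can be bounded deterministically via $\|\hat{U}_{[d]}\|,\|\hat{V}_{[d]}\|\leq\rcov(\Lambda_c)$ and $\|U\|=\|V\|=\sqrt{n}$. The bookkeeping, however, is genuinely different. The paper writes $e=\eid+\alpha e_{\OL}$ with $e_{\OL}=\hat{U}_{[d]}^\top\hat{V}_{[d]}-\Uidd^\top\Vidd$, shows $|e_{\OL}|,|\eid|\leq 5M^2(\rcov(\Lambda_c))$ with probability one (using $\|\Uidd\|\leq\sqrt{n}+\rcov(\Lambda_f)$), expands the square, and bounds $\alpha^2\EE[e_{\OL}^2]+2\alpha\EE[e_{\OL}\eid]$ using the fact that $e_{\OL}$ vanishes off $\OL$; this yields $(25+50)\Pr(\OL)M^4=75\Pr(\OL)M^4$. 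You instead split $\EE[e^2]=\EE[e^2\Ind_{\OL^c}]+\EE[e^2\Ind_{\OL}]$, which avoids the cross term entirely and only needs the cruder bound $|e|\leq 2M^2$ on $\OL$, giving constant $4$ in place of $75$. Your route is shorter, avoids bounding $|\eid|$ globally, and produces a strictly better constant; the paper's additive decomposition is the one that would generalize if one wanted to track the error $e-\eid$ as a random variable in its own right, but for this statement nothing is lost by your approach. One tiny point of care: the inequality $\EE[\eid^2\Ind_{\OL^c}]\leq\EE[\eid^2]$ you use is valid precisely because $\eid^2\geq 0$, and you correctly discard rather than try to recover the $\eid^2\Ind_{\OL}$ piece.
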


We therefore proceed to compute $\EE(\eid^2)$. Note that
\begin{align}
\eid=\alpha\left(U_{[d]}^\top V_{[d]}+U_{[d]}^\top Z_2+V_{[d]}^\top Z_1+Z_1^\top Z_2\right)-\rho n .
\end{align}
Since $P_{UV Z_1 Z_2}=P_{UV} P_{Z_1} P_{Z_2}$ and all random vectors $U,V,Z_1,Z_2$ have zero mean, we have
\begin{align}
\EE[U_{[d]}^\top V_{[d]}]=\rho d,~\EE[U_{[d]}^\top Z_2]=0,~\EE[V_{[d]}^\top Z_1]=0,~\EE[Z_1^\top Z_2]=0,    
\end{align}
and therefore
\begin{align}
\EE[\eid^2]&=\rho^2 n^2-2\alpha \rho^2 nd +\alpha^2\left(\EE[(U_{[d]}^\top V_{[d]})^2]+\EE[(U_{[d]}^\top Z_2)^2]+\EE[(V_{[d]}^\top Z_1)^2]  +\EE[(Z_1^\top Z_2)^2]\right).
\end{align}
In Appendix~\ref{appendix:projections} we show that 
\begin{align}
\EE[(U_{[d]}^\top V_{[d]})^2]\leq \rho^2 n\frac{d(d+1)}{n}+\frac{d(n-d)}{n},
\end{align}
Furthermore,
\begin{align}
\EE[(V_{[d]}^\top Z_1)^2]&=\EE[(U_{[d]}^\top Z_2)^2]=\trace \EE[U_{[d]} U_{[d]}^\top]\EE[Z_2 Z_2^\top]=\trace \EE[Z_2 Z_2^\top]=\EE\|Z_2\|^2=d\cdot\sigma^2(\Lambda_f),\label{eq:traceU}\\
\EE[(Z_1^\top Z_2)^2]&=\trace \EE[Z_1 Z_1^\top] \EE[Z_2 Z_2^\top]=\trace R^2(\Lambda_f)= \|R(\Lambda_f)\|_F^2,  
\end{align}
where~\eqref{eq:traceU} follows since $\EE[U_{[d]} U_{[d]}^\top]=I_d$. 
Thus,
\begin{align}
\EE[\eid^2]&\leq\rho^2 n^2-2\alpha \rho^2 nd +\alpha^2\left(\rho^2 n\frac{d(d+1)}{n}+\frac{d(n-d)}{n}+2d \sigma^2(\Lambda_f)+d\frac{1}{d}\|R(\Lambda_f)\|_F^2\right)\\
&=n\left[\rho^2 n\left(1-2\alpha \frac{d}{n}+\alpha^2 \frac{d(d+1)}{n^2} \right)+\alpha^2\left(\frac{d(n-d)}{n^2}+2\frac{d}{n}\sigma^2(\Lambda_f)+\frac{d}{n}\frac{1}{d}\|R(\Lambda_f)\|_F^2\right) \right]\label{eq:eidmono}\\
&=n \xi(\Lambda_f)
\end{align}
where
\begin{align}
\xi(\Lambda_f)=\rho^2 n\left(\left(1-\frac{d}{n}\alpha\right)^2 +\alpha^2 \frac{d}{n^2}\right)+\alpha^2\frac{d}{n}\left(1-\frac{d}{n}+\psi\left(\sigma^2(\Lambda_f)\right)\right)+\alpha^2\frac{d}{n}\left(\frac{1}{d}\|R(\Lambda_f)\|_F^2-\sigma^4(\Lambda_f)\right) 
\end{align}
and $\psi(t)=2t+t^2$. Note that $\xi(\Lambda_f)$ is monotonically increasing in both $\sigma^2(\Lambda_f)$ and $\frac{1}{d}\|R(\Lambda_f)\|_F^2$. We have therefore obtained that
\begin{align}
\frac{1}{n}\EE(e^2)\leq   \xi(\Lambda_f) +\frac{75
M^4(\rcov(\Lambda_c))}{n} \Pr(\OL).\label{eq:distortionGeneralLattices} 
\end{align}
The expression in~\eqref{eq:distortionGeneralLattices}, holds for any pair of nested lattices $\Lambda_c\subset\Lambda_f$. We now evaluate it for ``good'' nested lattices, whose existence is guaranteed by Theorem~\ref{thm:good_lattices}. Recall that $R>0$ is fixed. Applying this theorem with $P_{U_{[d]}}$ taken as the uniform (Haar) distribution over $\sqrt{n}\mathbb{S}^{n-1}$ projected to the first $d\leq n$ coordinates,  $\alpha=1$, $\beta=1$, $r_u=1+\eps_0$ and some $0< \eps_0\leq\frac{1}{\sqrt{2}}$, we have that for
\begin{align}
D=\frac{1}{2^{2\left(\frac{n}{d}R-\delta\right)}-1};~~~\delta=C_1\left(2\eps_0+\frac{\log d}{d} \right)    
\end{align}
we can find a pair of nested lattice $\Lambda_c\subset\Lambda_f$ satisfying Items~\ref{itemL1:rate}-\ref{item:L1pe}. In particular, for such lattices we have that
\begin{align}
\frac{1}{n}\xi(\Lambda_f)&\leq\rho^2 n\left(\left(1-\frac{d}{n}\alpha\right)^2 +\alpha^2 \frac{d}{n^2}\right)+\alpha^2\frac{d}{n}\left(1-\frac{d}{n}+\psi\left(D\right)\right)+\alpha^2\frac{d}{n}D^2\frac{C_2 \log^3 d}{d}
\\
\frac{1}{n}M^4(\Lambda_c)&\leq n \max\{1,D^2 2^{4\frac{n}{d}R}\}\\
\Pr(\mathrm{OL})&\leq 2 \Pr(\mathrm{OL}_1)=2\Pr\left(U_{[d]}+Z_1\notin\m{V}_{\Lambda_c}\right)=2\left(\Pr\left(\|U_{[d]}\|^2>(1+\eps_0) d \right)+6 e^{-d\frac{\eps_0^2}{2}}\right)\leq 16 e^{-d\frac{\eps_0^2}{96}},
\end{align}
where in the last inequality we have used Proposition~\ref{prop:projUtail}, proved in Appendix~\ref{appendix:projUtail}, which shows that
\begin{align}
\Pr\left(\|U_{[d]}\|^2>(1+\eps_0) d \right)<2 e^{-\frac{\eps_0^2}{96}d},    
\end{align}
for all $0<\eps_0<1$. Plugging these into~\eqref{eq:distortionGeneralLattices} we get
\begin{align}
\frac{1}{n}\EE(e^2)&\leq \rho^2 n\left(\left(1-\frac{d}{n}\alpha\right)^2 +\alpha^2 \frac{d}{n^2} \right)+\alpha^2\frac{d}{n}\left(1-\frac{d}{n}+\psi\left(D\right)\right)\nonumber\\
&+D^2\frac{C_2\log^3 d}{d}+ 1200 n  \max\{1,D^2 2^{4\frac{n}{d}R}\}e^{-d\frac{\eps_0^2}{96}}.  
\end{align}
It can be verified that
\begin{align}
\psi(D)=\frac{\phi\left(\frac{D}{D+1} \right)}{1-\phi\left(\frac{D}{D+1} \right)},  
\end{align}
which implies that
\begin{align}
\psi\left(\frac{1}{2^{{2t}}-1} \right)=\frac{\phi(2^{-2t})}{1-\phi(2^{-2t})}.    
\end{align}
Thus, for any $\kappa>0$ and $\eps>0$ we can take $\eps_0>0$ small enough and $n$ large enough, we have that
\begin{align}
\frac{1}{n}\EE\left(U^\top V-\alpha g(f_1(U),f_2(V)) \right)^2\leq \rho^2 n\left(1-\kappa\alpha \right)^2+\kappa\alpha^2\left(1-\kappa+\frac{\phi\left(2^{-2\frac{R}{\kappa}}\right)}{1-\phi\left(2^{-2\frac{R}{\kappa}}\right)}\right)+\eps(1+\rho^2 n),    
\end{align}
such that~\eqref{eq:nestedlatticespherical} holds. This establishes the first part of the theorem.

The second part of the theorem follows from the same nested lattice coding scheme for encoding $U$, setting $\hat{V_{[d]}}=V_{[d]}$, and applying the same decoder. The analysis is identical, but with $Z_2=0$.

\section{Practical Implementation of Nested Lattice Quantizers}
\label{subser:practicalLattice}

In the proof of Theorem~\ref{thm:universalLattice} we used a pair of nested lattices $\Lambda_c\subset\Lambda_f\subset\RR^d$, with $|\Lambda_f/\Lambda_c|=2^{dR}$. Given such a pair of lattices in $\RR^d$, in order to implement the coding scheme described above, we need to implement the following procedures:
\begin{enumerate}
    \item $Q_{\Lambda_f}(x)=\argmin_{\lambda_f\in\Lambda_f}\|x-\lambda_f\|$
    \item $Q_{\Lambda_c}(x)=\argmin_{\lambda_c\in\Lambda_c}\|x-\lambda_c\|$
    \item Mapping from $\Lambda_f/\Lambda_c$ to $dR$  bits
    \item Mapping from $dR$  bits to the coset representatives $\Lambda_f\cap\m{V}_c$ of $\Lambda_f/\Lambda_c$
    \item Generating a random dither $Z\sim \Unif(\m{V}_{\Lambda_f})$, where $\m{V}_{\Lambda_f}$ is the Voronoi cell of $\Lambda_f$
\end{enumerate}

\underline{\textbf{Self-similar nested lattice codebooks/Voronoi codes:}} Let $\Lambda\subset \RR^d$ be a lattice with generating matrix $G\in\RR^{d\times d}$, such that $\Lambda= G\ZZ^d$. Assume that we have access to a procedure that implements the lattice quantizer $Q_{\Lambda}(x)$ efficiently, and that there is some $\tau>0$ such that $\tau\ZZ^d\subset \Lambda$. The assumption that $\ZZ^d$ is nested in $\Lambda$ (up to scaling) is not very important, but also not restrictive, since the majority of lattices for which efficient lattice quantizers are known do satisfy it.

Using the lattice $\Lambda$, we can construct a pair of nested lattices $\Lambda_c\subset\Lambda_f\subset\RR^d$, with $|\Lambda_f/\Lambda_c|=2^{dR}$, that induce an efficiently implementable coding scheme. In particular, let $\beta>0$ and set
$\Lambda_f=\beta\Lambda$, $\Lambda_c=q\Lambda_f=\beta \cdot q\Lambda$, where $q=2^R$ is an integer. In~\cite{ConwaySloane83}, Conway and Sloane propose simple implementation of the encoders and decoder for the induced nested lattice quantizer, which they referred to as Voronoi codes.
Algorithm~\ref{alg:nestedlatenc} below provides the pseudo code for implementing $f_1,f_2$ from Subsection~\ref{subsec:nestedlatticescheme} for such a nested lattice codebook. Note that the output $\mathrm{OverloadError}$ of Algorithm~\ref{alg:nestedlatenc} specifies whether or not the overload event $\mathrm{OL}_i$, $i=1,2$, defined in~\eqref{eq:overloadDef} have occurred.
In order to implement the decoder $g$ from Subsection~\ref{subsec:nestedlatticescheme}, one implements~\eqref{eq:decU} by applying Algorithm~\ref{alg:nestedlatdec} on the output of $f_1$, implements~\eqref{eq:decV} by applying Algorithm~\ref{alg:nestedlatdec} on the output of $f_2$, and computes the inner product of the two vectors. In order to generate the random dithers $\tilde{Z}_1,\tilde{Z}_2$, one applies Algorithm~\ref{alg:dithergeneration}.

\begin{algorithm}[h!]
\caption{NestedLatticeEncoder}
\label{alg:nestedlatenc}
\begin{algorithmic}
\State \textbf{Inputs:} vector to be encoded $x\in\RR^{d'}$, lattice $\Lambda\subset\RR^{d'}$ with generating matrix $G\in\RR^{d'\times d'}$, nesting ratio $q\in\mathbb{N}$, dither vector $z\in\m{V}_{\Lambda}\subset\RR^{d'}$, scaling factor $\beta>0$
\State \textbf{Outputs:} $\mathrm{Enc}(x)\in [q]^{d'}$ (can be represented using $\lceil d'\log q \rceil$ bits), $\mathrm{OverloadError}$ that indicates if a modulo error occurred
\vspace{2mm}
\State $t\gets Q_\Lambda\left(\frac{x}{\beta}+z \right)$
\State $y\gets G^{-1}t$
\State $\mathrm{Enc}(x) \gets [y]\bmod q$ (elementwise modulo $q$ reduction)

\vspace{2mm}

\State \% check whether a modulo error occurred:

\vspace{2mm}

\State $\tilde{x}\gets t-z$
\State $\lambda_c=q\cdot Q_{\Lambda}\left(\frac{\tilde{x}}{q} \right)$
\State $\mathrm{OverloadError}=\Ind\left\{\lambda_c\neq 0 \right\}$
\end{algorithmic}
\end{algorithm}

\begin{algorithm}[h!]
\caption{NestedLatticeDecoder}
\label{alg:nestedlatdec}
\begin{algorithmic}
\State \textbf{Inputs:} The encoding $\mathrm{Enc}(x)\in[q]^{d'}$ of $x\in\RR^{d'}$, lattice $\Lambda\subset\RR^{d'}$ with generating matrix $G\in\RR^{d'\times d'}$, nesting ratio $q\in\mathbb{N}$, dither vector $z\in\m{V}_{\Lambda}\subset\RR^{d'}$, scaling factor $\beta>0$
\State \textbf{Outputs:} $\hat{x}\in \RR^{d'}$ 
\vspace{2mm}
\State $\tilde{y}\gets G\cdot \mathrm{Enc}(x) - z$
\State $\hat{x}\gets \beta\left(\tilde{y}-q\cdot Q_{\Lambda}\left(\frac{\tilde{y}}{q} \right) \right) $
\end{algorithmic}
\end{algorithm}

\begin{algorithm}[h!]
\caption{GenerateRandomDither}
\label{alg:dithergeneration}
\begin{algorithmic}
\State \textbf{Inputs:} Lattice $\Lambda\subset\RR^{d'}$ and a number $\tau>0$ such that $\tau\ZZ^{d'}\subset \Lambda$
\State \textbf{Outputs:} $Z\sim\Unif(\m{V}_{\Lambda})$
\vspace{2mm}
\State $U\gets \Unif\left([0,\tau)^{d'}\right)$
\State $Z \gets U-Q_{\Lambda}(U)$
\end{algorithmic}
\end{algorithm}

\medskip

\underline{\textbf{Choice of the parameter $\beta$:}} Using this scheme, we have that 
\begin{align}
D=\sigma^2(\Lambda_f)=\beta^2\sigma^2(\Lambda).    
\end{align}
Thus, since the base lattice $\Lambda$ is given, the parameter $\beta$ controls $D$. We also have that 
\begin{align}
\sigma^2(\Lambda_c)=q^2\sigma^2(\Lambda_f)=2^{2R}D.
\end{align}
The ``no-overload'' event is equivalent to $U_{[d]}+Z_1\in\m{V}_c$ (and similarly, $V_{[d]}+Z_2\in\m{V}_c$). If $\Lambda$ is a ``good'' high-dimensional ($d\gg 1$) lattice, that is $\Lambda$ is such that $\beta q\Lambda=\Lambda_c\subset\Lambda_f=\beta\Lambda$ satisfy all items in Theorem~\ref{thm:good_lattices}, the ``no-overload'' event happens with high probability provided that $1+D=\frac{1}{d}\EE\|U+Z_1\|^2<2^{2R}D$, which is equivalent to $D>D^*(R)=\frac{1}{2^{2R}-1}$. In practice, we will usually work with a base lattice $\Lambda$ whose second moment and coding goodness are sub-optimal. For this reason, we take $D=\gamma D^*(R)=\frac{\gamma}{2^{2R}-1}$, for some $\gamma>0$ (where $\gamma$ is not necessarily close to $1$), which is done by setting 
\begin{align}
 \beta=\left(\frac{\gamma}{2^{2R}-1}\cdot\frac{1}{\sigma^2(\Lambda)} \right)^{1/2}.\label{eq:BetaEq}
\end{align}

\medskip

\underline{\textbf{Overload avoidance mechanism:}} Recall that Algorithm~\ref{alg:nestedlatenc} also indicates, through the variable $\mathrm{OverloadError}$, whether or not a modulo error occurred, that is, whether or not $U_{[d]}+Z_1\in\m{V}_c$ (respectively, $V_{[d]}+Z_2\in\m{V}_c$). Whenever a modulo error does occur, one can increase the value of $\gamma$ further to a large enough value, such that a modulo error does not occur with the new value, and inform the decoder on what value of $\gamma$ was chosen. In practice, we may choose a bank of $M$ values sorted in increasing order $\gamma\in\{\gamma_1,\ldots,\gamma_{M}\}$. The encoder first uses $\gamma_1$. If $\mathrm{OverloadError}=1$ it tries again with $\gamma_2$, and keeps increasing $\gamma$ to the next value until $\mathrm{OverloadError}=0$. If $\gamma_1$ is chosen such that overload error is already not too common, and the values of $\gamma_i$ increase sufficiently fast with $i$, say $\gamma_i=i\cdot\gamma_1$, the entropy of the first value of $\gamma$ that returned $\mathrm{OverloadError}=0$ will be small. Since we only have to report this index to the decoder once for $d$ symbols, the effect on the quantization rate is not significant. We note that the newest GPUs released by Nvidia use a format called \emph{micro-block scaling} for low-precision data types such as FP4~\cite{NvidiaNVFP4blog}. For example, the NVFP4 data type assigns a different scale $\beta$, represented in FP8 format, to every 16 consecutive entries, where $\beta$ is chosen to ensure that all 16 entries are within the dynamic range of the FP4 format, or in other words, that overload does not occur. Thus, such formats can be seen as a special case of our overload avoidance mechanism.

Next, we develop a heuristic for choosing $\gamma_1$. Recall the definition of $r_{\mathrm{eff}}(\Lambda)=\left(\frac{\mathrm{covol}(\Lambda)}{V_d}\right)^{1/d}$  from Section~\ref{subsec:latticebasics}. The normalized second moment (NSM) of a lattice $\Lambda$ is defined as
\begin{align}
N(\Lambda)=\frac{\sigma^2(\Lambda)}{(\mathrm{covol}(\Lambda))^{2/d}}=\frac{\sigma^2(\Lambda)}{V_d^{2/d} r^2_{\mathrm{eff}}(\Lambda)}.    
\end{align}
If $U_{[d]}+Z_1$ were Gaussian, the probability that it stays within $\m{V}_{\Lambda_c}$ would have been upper bounded by the probability that it stays within a ball with the same volume, that is, within a ball with radius $r_{\mathrm{eff}}(\Lambda_c)$. Thus, we need $r^2_{\mathrm{eff}}(\Lambda_c)$ to be greater than $\EE\|U_{[d]}+Z_1\|^2$. This corresponds to
\begin{align}
1<\frac{\frac{1}{d}r^2_{\mathrm{eff}}(\Lambda_c)}{\frac{1}{d}\EE\|U+Z_1\|^2}=\frac{1}{d}\frac{r^2_{\mathrm{eff}}(\Lambda_c)}{\sigma^2(\Lambda_c)}\frac{\sigma^2(\Lambda_c)}{\frac{1}{d}\EE\|U+Z_1\|^2}=\frac{1}{d V_d^{2/d} N(\Lambda)}\frac{2^{2R}D}{1+D}=\frac{1}{d V_d^{2/d} N(\Lambda)}\frac{\gamma\cdot 2^{2R}}{2^{2R}+\gamma-1}\approx\frac{\gamma}{d V_d^{2/d} N(\Lambda)},
\end{align}
where the last approximation assumes that $2^{2R}+\gamma-1\approx 2^{2R}$.
Thus, we will take
\begin{align}
\gamma_1\gtrapprox d V_d^{2/d} N(\Lambda)=\frac{d\sigma^2(\Lambda)}{\reff^2(\Lambda)}.\label{eq:gamma1ruleofthumb}  
\end{align}
For a measurable set $\m{K}\subset\RR^d$ let $U_{\m{K}}\sim\Unif(\m{K})$ and $\sigma^2(\m{K})=\frac{1}{d}\EE\|U_{\m{K}}\|^2$. For all measurable sets $\m{K}$ with volume $V_d\reff^d(\Lambda)$, we have that $\sigma^2(\m{K})\geq\frac{\reff^2(\Lambda)}{d+2}$, and this is attained by $\m{K}=\reff(\Lambda)\m{B}$~\cite{ramiBook}. It therefore follows that the right hand side of~\eqref{eq:gamma1ruleofthumb} is at least $\frac{d}{d+2}$.


\medskip

\underline{\textbf{Product lattices/Product quantization:}} In order to use the self-similar nested lattice scheme described above, we need a base lattice $\Lambda$ with an efficient nearest-neighbor decoder/lattice quantizer $Q_{\Lambda}(x)$ and favorable quantization and coding properties. While it is easy to find (more accurately, to randomly draw) lattices in high-dimensions that are good for coding and quantization (see Section~\ref{subsec:latticebasics}), the task of finding such lattices that also admit an efficient nearest-neighbor decoder is notoriously difficult and is perhaps the holy grail of coding theory for the additive white Gaussian noise (AWGN) channel. A popular compromise between efficiency and ``goodness'', is to use a \emph{product lattice}, with a low-dimensional base lattice that is ``pretty-good'' for coding and quantization~\cite{forney1998modulation,ramiBook}.

Let $d'$ be an integer that divides $d$, and $\Lambda'\subset\RR^{d'}$ be a lattice in $\RR^{d'}$. We construct the lattice $\Lambda\in\RR^d$ as the product of $K=d/d'$ copies of $\Lambda'$. Namely,
\begin{align}
\Lambda=\underbrace{\Lambda'\times \cdots\times \Lambda'}_{K~\text{times}}=\Lambda'^{\otimes K}    
\end{align}
The resulting self-similar nested lattices are also the product of $K$ nested lattice pairs
\begin{align}
\Lambda_c\subset\Lambda_f=(\beta_1\cdot q\Lambda'\subset \beta_1 \Lambda')\times\cdots \times (\beta_K\cdot q\Lambda'\subset \beta_K \Lambda'),
\label{eq:productnestedlattice}
\end{align}
where we allow for different choices of $\beta$ for each product to accommodate for the overload avoidance mechanism described above. Algorithm~\ref{alg:nestedlatenc}, Algorithm~\ref{alg:nestedlatdec} and Algorithm~\ref{alg:dithergeneration} tensorize, and should be applied separately for each $k=1,\ldots,K$ using the base lattice $\Lambda'\subset\RR^{d'}$ with generating matrix $G'\in\RR^{d'\times d'}$. We also have that
\begin{align}
\sigma^2(\Lambda)=\sigma^2(\Lambda')\cdot \frac{1}{K}\sum_{\ell=1}^K \beta^2_k.
\end{align}

Some lattices in small dimensions have excellent quantization and coding properties, as well as efficient nearest neighbor decoding algorithms. In particular, $A_3\cong D_3$ has the highest packing density among all lattices in $\RR^3$~\cite{ConwaySloane}, $A^*_3$ has the smallest NSM among all lattices in $\RR^3$~\cite{ConwaySloane} (only slightly smaller than that of $A_3$), $D_4$ has the highest packing density among all lattices in $\RR^4$ and lowest known NSM among all lattices in $\RR^4$~\cite{ConwaySloane,AgrellAllen22}, and $E_8$ has the highest packing density (even among non-lattice packings)~\cite{viazovska2017sphere} and the smallest known NSM among all lattices in $\RR^8$~\cite{ConwaySloane,AgrellAllen22}. All four lattices listed above, as well as many others from the $A_n$, $D_n$ and $E_n$ families, admit a very fast lattice decoding algorithm~\cite{conway1982fast}. Similarly, among all lattices in $\RR^{24}$, the Leech lattice $\Lambda_{24}$, is the the best known quantizer~\cite{AgrellAllen22}, has the optimal packing density~\cite{cohn2009optimality} (this is true even among all non-lattice packings~\cite{cohn2017sphere}), and admits a pretty fast nearest neighbor decoding  (or approximate nearest neighbor decoding) algorithms~\cite{VardyBeery93,Vardy95,ViterboBoutros99}. In addition, the second moment of all these lattices (and others) is calculated in~\cite{conway1982voronoi} and reported also in~\cite[Table I]{AgrellAllen22}. See also~\cite{ordentlich2025voronoi} for  a study of the NSM of a random lattice. We also note that the optimal lattice quantizer in any dimension has $R(L)=\sigma^2(L)\cdot I_d$, so that $\frac{1}{d}\|R(L)\|_F^2=\sigma^4(L)$ for those lattices.
Any one of those lattices is a good candidate for the base lattice $\Lambda'$. Another important advantage of these lattices is that they are all subsets of $\ZZ^n$ up to scaling. Thus, when these lattices are used for quantization for matrix multiplication, and dithering is not applied, we can use integer multipliers (e.g., int8 tensor core in a GPU), rather than floating point multipliers, for multiplying the quantized matrices.
The lattices of higher dimensions, and in particular the Leech lattice, may yield better rate-distortion tradeoff than the lower-dimensional ones, but there are advantages to using lower-dimensional lattices in terms of efficiency. One of those is described next.

\medskip

\underline{\textbf{Lookup tables:}} Note that we decode $\hat{U}_{k}\in\RR^{d'}$ and $\hat{V}_{k}\in\RR^{d'}$ just to compute their inner product $\hat{U}^\top _{k}\hat{V}_{k}$. If we use the same dither vectors $\tilde{Z}_1,\tilde{Z}_2\in\m{V}_{\Lambda'}$ for all $k=1,\ldots,K$, and the same value of $\beta$, namely, $\beta^U_k=\beta^V_k=\beta$ for all $k=1,\ldots,K$, there are only $q^{d'}$ values of $\hat{U}^\top _{k}$ we can get, and only $q^{d'}$ values of $\hat{V}^\top _{k}$ we can get. Those do not depend on $k$. Thus, we can pre-compute all $q^{2d'}$ possible values of $\hat{U}^\top _{k}\hat{V}_{k}$ and store them in a lookup table (LUT). Then, instead of applying the decoder twice and computing the inner product, we simply fetch the result of the inner product from the LUT. If $\beta^U_k\neq \beta$ or $\beta^V_k\neq \beta$, we simply multiply the value fetched from the LUT by $\frac{\beta^U_k}{\beta}\cdot \frac{\beta^V_k}{\beta}$. On some processors, using LUTs significantly speed up the decoding process, as it completely bypasses all lattice decoding operations, as well as all inner products. For approximate matrix multiplication $A^\top  B$ of $A\in\RR^{n\times a}$ and $B\in\RR^{n\times b}$ using the product nested lattice quantization scheme above, we need to perform $a\cdot b\cdot (n/d')$ such operations, whereas the encoding only involves $a(n/d')+b(n/d')$ lattice encoding operations. Thus,  for $a,b\gg 1$, decoding is the computationally heavy procedure, and speeding it up will result in significant speedup of the total approximate matrix multiplication procedure. Using LUTs is therefore often highly advantageous. However, in order to have a very fast access time to the LUT, we would like it to ``fit'' in the highest levels of the cache, ideally in the L1 cache. This level has small capacity, which restricts the values of $q^{2d'}=2^{2Rd'}$. Thus, we must keep $Rd'$ small. Taking small $R$ will typically not yield satisfactory resolution, so if LUTs are used, we are limited to using lattices $\Lambda'$ of small dimensions. Fortunately, a compelling solution to this shortcoming of the LUT approach was recently developed in~\cite{ko25}. We note that for GPUs the LUT approach may not be attractive since the tensor core computes matrix multiplications extremely fast, while LUT probing is less efficient on this hardware. On CPUs on the other hand, the LUT approach can potentially yield significant speed-up.

\medskip

\underline{\textbf{Hadamard transform:}} Our encoders $f_1, f_2$ for the matrix multiplication problem, as described in the proof of Theorem~\ref{thm:MostgeneralMatMul} and in Figure~\ref{fig:encoders},  multiply each column vector in $A$ as well as each column vector in $B$ (more accurately, in their centered versions $\bar{A},\bar{B}$), by a random projection matrix $S$ drawn from the Haar distribution on $\mathrm{O}_n(\RR)$. In general, the matrix $S$ drawn from this distribution will have no structure, and calculating $SA$ (respectively $SB)$ will require $O(an^2)$ (respectively, $O(bn^2)$) real-valued multiplication and summation operations. To significantly reduce the computational burden of this step, it was proposed in~\cite{tseng2024quip} (see also~\cite{quarot2024}) to restrict $S$ to a certain class of orthogonal projection matrices: The randomized Hadamard transform. Here, we also follow this approach. In particular, we draw a vector $T\sim\Unif(\{-1,1\}^n)$, and set $K=\mathrm{diag}(T)$, that is, $K$ is a diagonal matrix with $K_{i,i}=T_i$. We then set 
\begin{align}
S=\frac{1}{\sqrt{n}}H K, \label{eq:randHad}   
\end{align}
where $H\in\{-1,1\}^{n\times n}$ is the Walsh-Hadamard matrix of dimension $n$. Here, we assumed that $n$ is a power of $2$, such that such a matrix exists. Otherwise, we can add rows of all zeros to both $A$ and $B$, resulting in larger matrices $A\in\RR^{n'\times a}$ and $B\in\RR^{n'\times b}$, with $n'=2^{\lceil \log_2(n)\rceil}$. Note that in~(\ref{eq:RandRotA}-\ref{eq:RandRotB}) we further scale the result by $\sqrt{n}$, so this cancels out the scaling by $\frac{1}{\sqrt{n}}$ in~\eqref{eq:randHad}. The gain for using the randomized Hadamard transform~\eqref{eq:randHad}, is that its special fast-Fourier transform (FFT) structure allows to compute $SA$ (respectively, $SB$) using only $O(an\log n)$ (respectively, $O(bn\log n)$) additions and multiplications. Despite its simple implementation, the result of applying the randomized Hadamard transform on $A$ (or $B$) is quite similar to that of applying a ``pure'' random rotation on $A$ (or $B$) from various statistical perspectives~\cite{ailon2009fast,liberty2009accelerated,tropp2011improved}.

\medskip

\underline{\textbf{Representative numeric example:}} To better illustrate how the building blocks above connect, we provide a numerical example. We have implemented a product nested lattice codebook, with $\Lambda'=D_3$ (such that $d'=3$) as the base lattice. The lattice $D_3$ consists of all vectors in $\ZZ^3$ whose entries sum up to an even integer. In particular, $2\ZZ^3\subset D_3$. The simple structure of $D_3$ also gives rise to a very simple algorithm for computing $Q_{D_3}(x)$~\cite[Algorithm 2]{conway1982fast}. The lattice $D_3$ has the highest packing density among all lattices in $\RR^3$ and its packing radius satisfies~\cite{ConwaySloane} $r_{\mathrm{pack}}(D_3)/r_{\mathrm{eff}}(D_3)\approx (0.74)^{1/3}\approx 0.9045$, such that its Voronoi region is quite close to a ball. We also have that $\sigma^2(D_3)=\frac{3}{24}$, so that $N(D_3)\approx 0.0787$ (since $\mathrm{covol}(D_3)=2$). This NSM is only slightly greater than the smallest NSM attained by any lattice in $\RR^3$, which is $N(A^*_3)\approx 0.0785$.

We have used this base lattice with $q=6$ to construct a product nested lattice code as in~\eqref{eq:productnestedlattice}. We used the same dither vectors $\tilde{Z}_1,\tilde{Z}_2\in\m{V}_{D_3}$ for all $k=1,\ldots,K$ (these vectors were drawn once at the beginning of the experiment).
For this choice of $d'=3$ and $q=6$, we can implement the decoder using a lookup table of size $(q^3)^2=2^{6\log_2 q}<2^{15.6}$. For constructing the LUT, we used the value $\beta=1$. While for this choice of $\beta$ all inner products between vectors in $D_3$ are integer valued, because of the use of dithers, the entries in our LUT are not integer-valued in general. We nevertheless rounded each of them to the nearest integer, and their range allows representing each entry in the LUT using an INT8 variable. Consequently, the total size of the LUT is less than $64$Kbyte, and it can be fully stored in the L1 cache of a modern processing unit. 

For the lattice $D_3$, we have that the right-hand side of~\eqref{eq:gamma1ruleofthumb} evaluates to $\approx 0.6139$. For the experiment, we chose $\gamma_1=0.7$, and set our bank of possible values of $\gamma$ as $\{i\cdot\gamma_1\}_{i=1}^9$. The corresponding value of $\beta$ is given by~\eqref{eq:BetaEq}.

We drew two random matrices $A\in\RR^{n\times n}$, $B\in\RR^{n\times n}$, with all entries iid
$\m{N}(0,1)$, where $n=3\cdot 2^{11}$. We used the product nested lattice codebook
from~\eqref{eq:productnestedlattice} with $K=2^{11}$ for encoding each column of $A$ (using the dither vector $\tilde{Z}_1$) and for encoding each column of $B$ (using the dither vector $\tilde{Z}_2$). Since the iid Gaussian distribution is already rotation invariant, we have not implemented a random rotation. Since the distribution we have used is zero mean, we also did not implement the ``centering'' mechanism. We also used $\alpha=\kappa=1$ (no MMSE scaling and no time-sharing). For each column, we further report the $K$ values of $\beta_i$ (equivalently $\gamma_i$) used for each column. The (empirical) entropy of this random variable (that takes values in $\beta_1\cdot\{i\}_{i=1}^9$) for the choice $\gamma_1=0.7$ was found to be around $\approx 1.3$bits. Since this value is only reported once for every $d'=3$ symbols (using entropy coding), its contribution to the coding rate is about $0.43$ bits per symbol, such that the total rate of the coding scheme is $R_{\text{eff}}\approx \log_2(6)+0.43\approx 3.015$ bits/symbol.

This approximate matrix multiplication algorithm attained $\frac{1}{n^3}\|\widehat{A^\top  B}-A^\top  B\|_F^2\approx 0.0593$. Let $e=\widehat{A^\top  B}-A^\top  B$. The empirical distribution of the normalized approximation error $e/\sqrt{n}$ (among the $n^2$ entries) is plotted in Figure~\ref{fig:practicalschem}. Note that for $R_{\text{eff}}=3.015$, Theorem~\ref{thm:GaussMatrixLowerBound} states that no scheme can attain distortion smaller than of $\Gamma(R_{\text{eff}})=0.0304$ for $A$ and $B$ drawn as above, and Theorem~\ref{thm:generalMatMul} shows that this can be attained using high-dimensional lattices. Thus, our low-complexity implementation is not far of the optimal performance attained using optimal lattice codes. For comparison, we also evaluated the approximation error for a simple $3$-bit scalar quantization scheme where each column $a_i$ is normalized by $\|a_i\|_{\infty}$ such that all its entries are in $[-1,1]$, then each entry $\tilde{a}_{i,t}=\frac{a_{i,t}}{\|a_i\|_{\infty}}$ is quantized to $\frac{1}{4}\mathrm{round}(4\tilde{a}_{i,t})$, and in the end the quantized entries are rescaled again by $\|a_i\|_{\infty}$. The empirical error attained by the $3$-bit scalar quantizer is $\frac{1}{n^3}\|\widehat{A^\top  B}-A^\top  B\|_F^2\approx 0.1668$, about $3$ times greater than the error attained using the $D_3$-based scheme with the same rate. The performance gap between the two scheme grows with $n$, as the random variable $\|a_i\|_{\infty}$ concentrates around $\sqrt{2\ln n}$ for large $n$. Thus, the dynamic range for the scalar quantizer increases with $n$, which results in greater expected squared error.

\begin{figure}
		\centering
		\includegraphics[width=\textwidth]{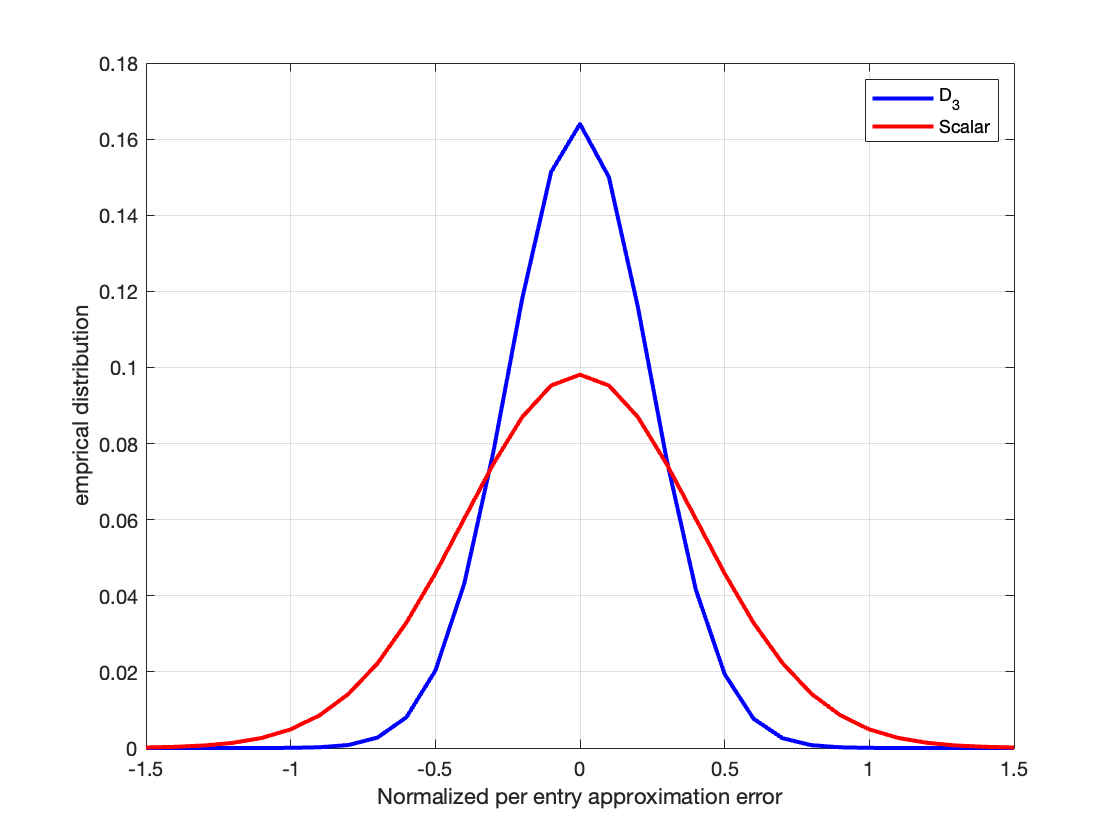}\vspace{-0.1cm}
	\caption{The approximation error of the $D_3$-based product nested lattice coding scheme with $q=6$, for random iid Gaussian matrices $A,B\in \RR^{n\times n}$, $n=3\cdot 2^{11}$. We plot the histogram of the entries of $\frac{1}{\sqrt{n}}(\widehat{A^\top  B}-A^\top  B)$ in blue. For comparison, we also plot the histogram of the entries of $\frac{1}{\sqrt{n}}(\widehat{A^\top  B}-A^\top  B)$ for a $3$-bit scalar quantizer in red.}\vspace{-0.3cm}
	\label{fig:practicalschem}
\end{figure}

\section{Open problems}\label{sec:open}

 One can interpret our Lemma~\ref{lem:SLB} as follows: Let $P=\m{N}(0,1)$ and $U^n\sim P^{\otimes n}$. Then for any random variable $Y$ we have that
\begin{align}
\sum_{i=1}^n R_{P}(\lambda_i)\leq I(U^n;Y),\label{eq:spectralconj} 
\end{align}
where $R_P(D)$ is the quadratic rate-distortion function for a source with distribution $P$ and $(\lambda_1,\ldots,\lambda_n)$ are the eigenvalues of $\Cov(U^n|Y)$. While Lemma~\ref{lem:SLB} establishes~\eqref{eq:spectralconj} for the Gaussian distribution, we were not able to prove~\eqref{eq:spectralconj} for a general distribution, and we could neither find a counterexample. If~\eqref{eq:spectralconj} turns out to hold for any $P$, the proof of Theorem~\ref{thm:SLBbasedDistortionLB} could be easily extended to show that  
\begin{align}
\DIP(R,P)= \text{convex envelope of } (\phi(D_P(R)),
\end{align}
where $D_P(R)$ is the quadratic distortion-rate function for a source with distribution $P$. Thus, proving or disproving that~\eqref{eq:spectralconj} holds for all $P$ is an interesting problem for future research.

In Theorem~\ref{thm:generalMatMul} we have shown the existence of encoders and decoder for quantization for matrix multiplication whose expected approximation error depends only on $\|\bar{A}\|^2_F \cdot\|\bar{B}\|^2_F$ and $\|\bar{A}^\top \bar{B}\|^2_F$, and is optimal for $A$ and $B$ whose entries are iid Gaussian. For iid Gaussian matrices we have that $\frac{\EE[\|A\|_F^2\|B\|_F^2]/n}{\EE[\|A^\top  B\|_F^2]}=1$ so that the two error terms in~\eqref{eq:generalMatMulUB} are well-balanced, and Theorem~\ref{thm:generalMatMul} essentially gives an upper bound of $\|A^\top B\|_F^2\cdot  \Gamma(R)$ on the MSE. Is there a scheme that attains MSE at most $\|A^\top B\|_F^2 \Gamma(R)$ universally (for all matrices $A$, $B$, not just iid ones)?



Another important question is \textit{shared randomness}. Our construction crucially depends on
encoders and decoder sharing randomness (which practically is not a big issue, since the
random seed used by the encoder can be stored along the compressed matrix representation). In the
single-terminal lossy compression shared randomness is not necessary. Indeed, suppose we have some 
compact metric space $E$ with distance $d$ and we proved that there exist a (shared randomness)
encoder-decoder pair $(f,g)$ compressing to $L$ bits and achieving simultaneously for all $x\in E$
guarantee:
$$ \EE[d(x, g(f(x)))] \le \Delta\,.$$
Also suppose that there exists an $\epsilon$-net of size $M_1$ in $E$. Fix $\delta>0$ and average
the previous inequality over all $M_1$ elements of the net. Then there must exist a
choice $\omega_1$ of shared randomness so that at most $M_2 = {M_1\over 1+\delta}$ elements of the
$\epsilon$-net have distortion exceeding $(1+\delta)\Delta$. Now repeat the argument for the
subset $M_2$ to find choice $\omega_2$, etc. After $k\le{\log M_1 \over \log (1+\delta)}$ steps we
get $M_{k+1} = 0$. This shows that there must exist a $k 2^{L}$-sized $(1+\delta)\Delta$-net that
approximates each of $M_1$ elements. Thus, the space $E$ can be covered to within distortion
$(1+\delta)\Delta+\epsilon$ without any shared randomness by compressing down to $L + {\log {\log
M_1\over \log (1+\delta)}}$ bits. By choosing $\delta \to 0$ and $\epsilon \ll \Delta$, one can
thus get rid of shared randomness.

However, this method fails in the case of distributed compression of $(A,B)$.
Indeed, the previous argument breaks down because the choice $\omega$ of shared randomness
affects both quantization grids of $A$ and $B$ simultaneously. Thus, it is not possible for the
compressor who only knows $A$ to decide which of the $k$ choices of $\omega$ to use for quantizing
$A$. It remains an open question to understand fundamental limits of deterministic quantizers.
%

As another extension, we may consider the question of quantization for product of $k$ matrices
$\prod_{t=1}^k A_t$. This paper solves the case of $k=2$, but our methods do not seem to be
immediately extendable to the $k>2$ case. One remark we want to make, however, is regarding the
critical rate. For $k=2$ as we saw quantization below $R<0.906$ bit/entry required additional
dimensionality-reduction (or Johnson-Lindenstrauss) step. This critical point was found by
convexifying the function $\Gamma_1(R) = 1-(1-2^{-2R})^2$. Similarly, if one simply asks the question of
optimal quantization for a product of $k$ diagonal Gaussian matrices, one would need to convexify
the function $\Gamma_1(R) = 1-(1-2^{-2R})^k$. The associated  critical rate grows with $k$ from 
$R\approx 0.906$ for $k=2$ matrices to $R\approx 4$ for $k=46$ matrices etc. This suggests that
quantization for deep LLMs at low rates may benefit from dimensionality reduction steps. 



\section*{Acknowledgements}

The authors thank Omri Weinstein (HUJI) for helping them navigate through the  literature on approximate
matrix multiplication and Yoon Kim (MIT) for explaining hardware and performance limitations of
modern quantization algorithms in LLMs. They also thank the anonymous reviewers for their constructive feedback.

\begin{appendices}

\section{Convex envelope of $\Gamma_1(R)$}
\label{appendix:convexenvelop}

Recall that $\phi(t)=2t-t^2$ and
\begin{align}
\Gamma_1(R)=\phi(2^{-2R}).    
\end{align}
We show that the convex lower envelope of $\Gamma_1(R)$ is $\Gamma(R)$. It is easy to verify that $R\mapsto \Gamma_1(R)$ is decreasing, concave on $[0,1/2)$ and convex on $(1/2,\infty)$. Therefore, its convex envelope consists of a linear segment between $(0,\Gamma_1(0)=1)$ and $(R^*,\Gamma_1(R^*))$ and agrees with $\Gamma_1(R)$ for $R>R^*$. The point $R^*\geq 1/2$ is chosen such that the derivative of $\Gamma(R)$ is smooth and non-decreasing. Thus, the convex envelope of $\Gamma_1(R)$ is given by
\begin{align}
\Gamma(R)=\begin{cases}
\Gamma_1(R^*)+\Gamma'_1(R^*)(R-R^*) & R< R^*\\
\Gamma_1(R) & R\geq R^*
\end{cases}    
\end{align}
where $R^*$ is chosen by requiring that $\Gamma(0)=\Gamma_1(0)=1$, or in other words, that
\begin{align}
\Gamma_1(R^*)-R^*\cdot\Gamma'_1(R^*)=1.
\label{eq:RsDef}
\end{align}
Since $\Gamma'_1(R^*)=-4\ln 2\cdot 2^{-2R^*}(1-2^{-2R^*})$ and we can express $\Gamma_1(R^*)$ as $\Gamma_1(R^*)=2\cdot 2^{-2R^*}(1-2^{-2R^*})+2^{-4R^*}$, we have that~\eqref{eq:RsDef} corresponds to
\begin{align}
&2^{-4R^*}+2\cdot 2^{-2R^*}(1-2^{-2R^*})(1+2\ln 2 R^*)=1\nonumber\\
\Longleftrightarrow&2\cdot 2^{-2R^*}(1-2^{-2R^*})(1+2\ln 2 R^*)=(1-2^{-2R^*})(1+2^{-2R^*})\nonumber\\
\Longleftrightarrow& 2\cdot 2^{-2R^*}(1+2\ln 2 R^*)=1+2^{-2R^*},\\
\Longleftrightarrow& 1+4\ln 2 R^*=2^{2R^*}.
\end{align}

\section{Good Nested Lattices}
\label{appendix:latticeproofs}

The proof of Theorem~\ref{thm:good_lattices} will easily follow from Lemma~\ref{lem:ORW}, Lemma~\ref{lem:coveringdensityimplications} and Lemma~\ref{lem:goodforcoding}, below. We first state these Lemmas, and give the proof of Theorem~\ref{thm:good_lattices}, which uses them. The proofs of Lemma~\ref{lem:coveringdensityimplications} and the proof of Lemma~\ref{lem:goodforcoding} are brought afterwards.

Following the notation from~\cite{orw22}, we denote by $\m{L}_d$ the space of lattices of unit covolume in $\RR^d$, and by $\mu_d$ the natural probability distribution on $\m{L}_d$, which we refer to as th Haar-Siegel measure. Let $\reff(1)=V_d^{-\frac{1}{d}}$ be such that $V_d \reff^d(1)=1$ and therefore $\reff(L)=\reff(1)$ for all $L\in\m{L}_d$. For $c_0,c_1>0$ define the set of lattices
\begin{align}
 E_{c_0,c_1}=\left\{L\in\m{L}_d~:~V_d \rcov^d(L)<c_0 d^{c_1} \right\}.
\end{align}

\begin{lemma}[Corollary of Theorem 1.2 from~\cite{orw22}]
For any $c_1>2$ there exists a universal constant $c_0>0$ such that
\begin{align}
\mu_d(\{L\notin E_{c_0,c_1}\})<\frac{1}{3}.
\label{eq:c0c1Pe}
\end{align}
\label{lem:ORW}
\end{lemma}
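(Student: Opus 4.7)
The plan is to derive this statement as a direct consequence of the main result of~\cite{orw22}. That paper establishes (as its Theorem 1.2) that for a random unit-covolume lattice $L$ drawn from the Haar--Siegel measure $\mu_d$, the expected value of the normalized covering density $V_d\rcov^d(L)$ grows only polynomially in $d$; specifically, there are universal constants $C, c_1'>0$ (with $c_1'$ close to $2$, possibly up to logarithmic factors) such that $\EE_{L\sim\mu_d}[V_d\rcov^d(L)] \le C d^{c_1'}$.

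Given this polynomial-in-$d$ bound on the expectation, the conclusion follows by a one-line Markov inequality. Fix any $c_1>2$ (so that in particular $c_1>c_1'$ once logs are absorbed), set $c_0 := 3C$, and write
$$\mu_d(\{L : V_d\rcov^d(L) \ge c_0 d^{c_1}\}) \le \frac{\EE_{L\sim\mu_d}[V_d\rcov^d(L)]}{c_0 d^{c_1}} \le \frac{C d^{c_1'}}{3C\,d^{c_1}} = \frac{1}{3}\cdot d^{c_1'-c_1} \le \frac{1}{3}.$$
Any logarithmic factors in the \cite{orw22} bound can be absorbed into $c_0$ (uniformly in $d$, since $\log^k d \ll d^\eta$ for any $\eta>0$). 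For the finitely many small dimensions where the asymptotic form of the bound might not yet apply, $c_0$ can be enlarged by a dimension-independent factor to cover them as well.

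The only real work is contained in~\cite{orw22}; once its Theorem 1.2 is invoked, deriving Lemma~\ref{lem:ORW} is essentially bookkeeping. The main step I would need to be careful about is matching the precise form in which \cite{orw22} states its result: if it is phrased as a high-probability tail bound on $V_d\rcov^d(L)$ directly, one uses it verbatim with a possibly sharper constant; if it is phrased as an expectation bound, one applies Markov as above; and if it is phrased in terms of the number of lattice translates of a ball needed to cover space, one first converts to a statement about $\rcov(L)$ via the inequality $L+\rcov(L)\m{B}=\RR^d$. In each case the deduction is routine, with no obstacle that is not already overcome in~\cite{orw22}.
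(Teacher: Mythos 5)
The paper offers no proof of this lemma at all: it is stated as a direct restatement of~\cite[Theorem~1.2]{orw22} (which is a high-probability bound on the covering density $V_d\rcov^d(L)$ of a Haar--Siegel random lattice), and your proposal correctly locates all the substance there with only routine bookkeeping on top, so in that sense you match the paper. One caution: your \emph{leading} derivation assumes a bound on $\EE_{L\sim\mu_d}[V_d\rcov^d(L)]$ and applies Markov, but no such expectation bound is what~\cite{orw22} provides, and it is not clear that this expectation is even polynomially bounded (the Haar--Siegel measure puts nonvanishing mass on highly skewed lattices whose covering radius is huge, and raising $\rcov$ to the $d$-th power amplifies that tail), so only the branch of your hedge in which the cited result is used verbatim as a tail/probability bound actually goes through.
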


\begin{lemma}
For any $c_0,c_1>0$, there are universal constants $C,C'>0$ such that for any $L\in E_{c_0,c_1}$, any $\kappa>0$, $\alpha>0$, and $\beta>0$ the following hold
\begin{enumerate}
\item $\rcov(\kappa L)\leq \left(1+C\frac{\log{d}}{d}\right)\kappa\reff(L)$;\label{itemL:rcov}
\item $\sigma^2(\kappa L)\leq \frac{1}{d}\left( \left(1+C\frac{\log{d}}{d}\right)\kappa\reff(L)\right)^2$;\label{itemL:D}
\item Let $Z~\sim\Unif(\m{V}_{\kappa L})$, and let $U$ be some random variable statistically independent of $Z$. Then for any $\m{A}\subset \RR^d$ we have
\begin{align}
\Pr(\alpha U+\beta Z \notin \m{A})\leq d^{C} \Pr(\alpha U+\beta \tilde{Z} \notin \m{A})
\end{align}
where $\tilde{Z}\sim\Unif\left(\left(1+C\frac{\log{d}}{d}\right)\kappa\reff(L)\m{B}\right)$ is statistically independent of $U$.
\label{item:Lpe}
\item $\frac{1}{d}\|R(\kappa L)\|_F^2\leq \left( \frac{1}{d}\left( \left(1+C\frac{\log{d}}{d}\right)\kappa\reff(L)\right)^2\right)^2\left(1+C'\frac{\log^3 d}{d}\right)$;\label{itemL:FrobCov}
\end{enumerate}
\label{lem:coveringdensityimplications}
\end{lemma}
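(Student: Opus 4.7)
The plan is to derive items 1--3 as routine consequences of the covering-density hypothesis $V_d\rcov^d(L)<c_0 d^{c_1}$, and then to attack item 4 (which the excerpt itself flags as requiring ``new ideas'') via a direction-averaging identity on the Voronoi cell. Item 4 will be the main obstacle.

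For item 1, since $\mathrm{covol}(L)=1$ gives $V_d\reff^d(L)=1$, the ratio $\rcov(L)/\reff(L)=(V_d\rcov^d(L))^{1/d}\leq (c_0 d^{c_1})^{1/d}=\exp((\log c_0+c_1\log d)/d)\leq 1+C\log d/d$ for a suitable $C=C(c_0,c_1)$ and $d$ large enough, and this ratio is invariant under $L\mapsto\kappa L$. Item 2 follows from $\|Z\|\leq\rcov(\kappa L)$ a.s.\ for $Z\sim\Unif(\m{V}_{\kappa L})$, giving $\sigma^2(\kappa L)\leq\rcov^2(\kappa L)/d$. For item 3 I would use $\m{V}_{\kappa L}\subseteq r\m{B}$ with $r=(1+C\log d/d)\kappa\reff(L)$: enlarging the domain of integration from $\m{V}_{\kappa L}$ to $r\m{B}$ changes the uniform density by the factor $V_d r^d/|\m{V}_{\kappa L}|=(1+C\log d/d)^d\leq d^{C'}$, so that $\Pr(\alpha U+\beta Z\notin\m{A})\leq d^{C'}\Pr(\alpha U+\beta\tilde Z\notin\m{A})$ for any event $\m{A}$.

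The target of item 4, $\frac{1}{d}\|R(\kappa L)\|_F^2\leq (\rcov^2(\kappa L)/d)^2(1+O(\log^3 d/d))$, really says $R(\kappa L)$ is $O(\log^3 d/d)$-close to $\sigma^2(\kappa L)I$ in Frobenius norm. The plan is to use the sphere-averaging identity
$$\|R\|_F^2-d\sigma^4=\tfrac{d(d+2)}{2}\Var_{v\sim\Unif(\Sph^{d-1})}(v^\top R v),$$
which reduces item 4 to showing that $v^\top R(\kappa L)v=\EE[(v^\top Z)^2]$ concentrates around $\sigma^2(\kappa L)$ over the unit sphere. I would then exploit three features of Voronoi cells of lattices in $E_{c_0,c_1}$ simultaneously: central symmetry of $\m{V}_{\kappa L}$ (killing odd moments of $v^\top Z$ along any $v$); Brunn--Minkowski $(d-1)$-concavity of the slice profile $t\mapsto\mathcal{H}^{d-1}(\m{V}_{\kappa L}\cap\{v^\top z=t\})$, which forces the marginal of $v^\top Z$ to be close to the ball marginal; and the equal-volume identity $|\m{V}_{\kappa L}|=V_d\reff^d(\kappa L)$, which lets the excess mass $\m{V}_{\kappa L}\setminus\reff(\kappa L)\m{B}$ be paired against the equal-volume deficit $\reff(\kappa L)\m{B}\setminus\m{V}_{\kappa L}$, so that the two corrections to the ball baseline $\reff^2(\kappa L)/(d+2)$ almost cancel in $v^\top R v$.

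The technical crux---and the place where the naive item-3 argument loses a full polynomial $d^{C}$ factor rather than the desired polylogarithmic one---is in showing that this cancellation is \emph{direction-uniform}: no unit vector $v$ captures a disproportionate share of the annular excess. My plan is a bootstrap, using the crude $\lambda_{\max}(R)\lesssim d^C r^2/(d+2)$ as a seed and then tightening via the log-concave slice profile and the central-symmetry constraints, with the $\log^3 d$ rather than $\log d$ in the final slack reflecting the cost of converting $L^1$ control of the annular excess into $L^2$ (Frobenius) control of $R-\sigma^2 I$. I expect this last step to be the principal difficulty and the place where the paper's ``new ideas'' enter.
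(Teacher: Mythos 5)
Your treatment of items 1--3 is correct and is exactly the paper's argument: the covering-density hypothesis gives $\rcov(L)/\reff(L)\le (c_0d^{c_1})^{1/d}\le 1+C\log d/d$, item 2 follows from $\sigma^2(\kappa L)\le \rcov^2(\kappa L)/d$, and item 3 is the density-domination bound $f_Z\le \bigl(\Vol(r\m{B})/\Vol(\m{V}_{\kappa L})\bigr)f_{\tilde Z}\le d^{C}f_{\tilde Z}$. The problem is item 4, where what you have written is a research plan rather than a proof. Your sphere-averaging identity $\|R\|_F^2-d\sigma^4=\tfrac{d(d+2)}{2}\Var_{v}(v^\top Rv)$ is correct, but it reduces the claim to showing that the relative fluctuation of $v^\top R(\kappa L)v$ around $\sigma^2(\kappa L)$ over $v\in\Sph^{d-1}$ is of order $\log^{3/2}d/d$ --- an extremely strong, direction-uniform concentration statement. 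None of the three ingredients you list (central symmetry, Brunn--Minkowski concavity of the slice profile, equal-volume pairing of the annular excess against the deficit) is developed into an actual estimate, and you yourself flag the ``bootstrap'' that would make the cancellation direction-uniform as the principal open difficulty. As it stands, item 4 is unproven.

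For comparison, the paper's mechanism for item 4 avoids direction-wise concentration entirely and is worth internalizing. Since $\Vol(\m{V}_{\kappa L})=V_d\reff^d(\kappa L)$, the max-entropy bound $h(Z)\le\frac12\log\det(2\pi e\,R(\kappa L))$ gives $\frac1d\sum_i\log\lambda_i\ge\log\sigma^2(\kappa L)-\log\bigl(2\pi e\,N(\kappa L)\bigr)$, i.e.\ near-equality in AM--GM for the eigenvalues, because the covering hypothesis forces $2\pi e\,N(\kappa L)\le 1+O(\log d/d)$ (via $\sigma^2\le\rcov^2/d$ and item 1). A second-order Taylor expansion of $\ln x$ about $\sigma^2(\kappa L)$, with the second derivative bounded by $-1/\|R(\kappa L)\|_2^2$ on the relevant range, converts this into
\begin{align}
\frac1d\sum_i\lambda_i^2\le \sigma^4(\kappa L)+2\,\delta(\kappa L)\,\|R(\kappa L)\|_2^2,\qquad \delta(\kappa L)=2\pi e\,N(\kappa L)-1.
\end{align}
Finally the operator norm is bounded by $\|R(\kappa L)\|_2\le c_3\log d\cdot\rcov^2(\kappa L)/d$ using your item 3 to compare $Z$ with the uniform distribution on the circumscribed ball and a spherical-cap volume estimate; the product $\delta\cdot\|R\|_2^2=O(\log d/d)\cdot O(\log^2 d)\cdot(\rcov^2/d)^2$ is precisely the source of the $\log^3 d/d$ slack. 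If you want to salvage your route, you would need to prove the $\Var_v(v^\top Rv)$ bound rigorously; otherwise you should replace your item-4 sketch with an argument of the above type.
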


\begin{lemma}\label{lem:goodforcoding}
Let $U\sim P_U$ be a random variable in $\RR^d$ that satisfies $\|U\|\leq \sqrt{d \cdot r_U}$ with probability $1$, and let $\tilde{Z}\sim\Unif(\sqrt{d\cdot r_b}\m{B})$ be statistically independent of $U$. For $\alpha,\beta,\kappa,\eps>0$ let
\begin{align}
E^{P_U,r_U,r_b}_{\alpha,\beta,\kappa,\eps}=\left\{L\in\m{L}_d~:~\Pr(\alpha U+\beta\tilde{Z}\notin \m{V}_{\kappa L})<6e^{-d\frac{\eps^2}{2}} \right\}.    
\end{align}
Then, for any $0<\eps<\frac{1}{\sqrt{2}}$ and
\begin{align}
\kappa>e^{\frac{\eps^2}{2}}\sqrt{1+\eps}\frac{\sqrt{d(\alpha^2 r_U +\beta^2 r_b)}}{\reff(1)}.\label{eq:kappamin}    
\end{align} 
we have that 
\begin{align}
\mu_d\left(\left\{L\notin E^{P_U,r_u,r_b}_{\alpha,\beta,\kappa,\eps}\right\}\right)<\frac{1}{3}.    
\end{align}
\end{lemma}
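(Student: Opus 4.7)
The plan is a first-moment average over $\m{L}_d$ combined with Markov's inequality. Writing $X = \alpha U + \beta \tilde Z$ and $r^2 = d(\alpha^2 r_U + \beta^2 r_b)$, it would suffice to show
\[
\EE_{L \sim \mu_d}\Pr\bigl(X \notin \m{V}_{\kappa L}\bigr) \leq 2 e^{-d\eps^2/2},
\]
since then Markov applied to the nonnegative quantity $L \mapsto \Pr_X(X \notin \m{V}_{\kappa L})$ gives $\mu_d\{L : \Pr > 6 e^{-d\eps^2/2}\} \leq 1/3$ as required.

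The averaging step rests on Siegel's mean-value theorem. The pointwise bound
\[
\I(X \notin \m{V}_{\kappa L}) \leq \sum_{\lambda \in \kappa L \setminus\{0\}} \I\bigl(\lambda \in \overline{B}(X,\|X\|)\bigr)
\]
holds because $X \notin \m{V}_{\kappa L}$ forces some nonzero $\lambda$ to lie no farther from $X$ than from $0$. Siegel's theorem, applied to $\kappa L$ (covolume $\kappa^d$), gives $\EE_L|\kappa L \cap A \setminus \{0\}|=\Vol(A)/\kappa^d$ for any measurable $A$, so taking $A = \overline{B}(X,\|X\|)$ yields
\[
\EE_L \Pr(X \notin \m{V}_{\kappa L}) \leq \EE_X \min\!\Bigl(1,\, V_d \|X\|^d/\kappa^d\Bigr).
\]
I would then split this expectation at the threshold $T = r\sqrt{1+\eps}$: the hypothesis on $\kappa$ rearranges to $V_d T^d/\kappa^d \leq e^{-d\eps^2/2}$, which controls the bulk and reduces the proof to showing $\Pr(\|X\|>T) \leq e^{-d\eps^2/2}$.

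For the tail, expanding $\|X\|^2=\alpha^2\|U\|^2 + 2\alpha\beta\langle U,\tilde Z\rangle + \beta^2\|\tilde Z\|^2$ and using $\|U\|^2\leq dr_U$ and $\|\tilde Z\|^2\leq dr_b$ almost surely gives $\{\|X\|>T\}\subset\{2\alpha\beta\langle U,\tilde Z\rangle > \eps r^2\}$. Conditionally on $U$, the rotational invariance of $\tilde Z$ lets me replace $\langle U,\tilde Z\rangle$ by $\|U\|\sqrt{dr_b}\,W_1$, where $W_1$ is the first coordinate of a uniform vector on the unit ball of $\RR^d$. The AM-GM inequality $\alpha^2 r_U + \beta^2 r_b \geq 2\alpha\beta\sqrt{r_U r_b}$ combined with $\|U\|\leq\sqrt{dr_U}$ makes the induced threshold on $W_1$ at least $\eps$, so $\Pr(\|X\|>T)\leq \Pr(W_1>\eps)$. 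The main obstacle is then verifying $\Pr(W_1>\eps)\leq e^{-d\eps^2/2}$ for all $d\geq 1$ and $\eps\in(0, 1/\sqrt{2}]$: this is trivial when $d\eps^2\leq 2\ln 2$ (since $\Pr(W_1>\eps)\leq 1/2\leq e^{-d\eps^2/2}$), while in the complementary range one combines $(1-t^2)^{(d-1)/2}\leq e^{-(d-1)t^2/2}$, a Mills-ratio bound on the resulting Gaussian integral, and the identity $V_{d-1}/V_d=\Theta(\sqrt{d})$, with the restriction $\eps\leq 1/\sqrt{2}$ (so $e^{\eps^2/2}\leq e^{1/4}$) absorbing the $\sqrt d$ prefactor into a universal constant at most $1$. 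Adding bulk and tail yields $\EE_L\Pr \leq 2 e^{-d\eps^2/2}$, and Markov concludes. The Siegel averaging is standard; all the delicate bookkeeping lives in the one-dimensional cap tail.
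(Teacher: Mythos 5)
The proposal is correct and takes essentially the same approach as the paper's proof: a Siegel mean-value / first-moment average over $\m{L}_d$ followed by Markov's inequality, with the tail of $\|\alpha U+\beta\tilde Z\|$ controlled via the almost-sure norm bounds, AM--GM on the cross term, and a spherical-cap estimate. The only differences are cosmetic — you count nonzero lattice points in the random ball of radius $\|X\|$ about $X$ rather than in a fixed ball $r\m{B}$, and you prove the one-dimensional cap tail bound $\Pr(W_1>\eps)\leq e^{-d\eps^2/2}$ directly instead of citing it.
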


\begin{proof}[Proof of Theorem~\ref{thm:good_lattices}]
Let $p$ be a prime number and $k$ be a positive integer, such that $p^k\in [1/2,1]2^{dR}$. Such numbers must exist.
Denote by $\mathrm{Gr}_{d,k}(\mathbb{F}_p)$ the collection of subspaces of dimension $k$ in $\mathbb{F}_p^d$. Let $L$ be some lattice in $\m{L}_d$, $S$ be some subspace in $\mathrm{Gr}_{d,k}(\mathbb{F}_p)$, and let the lattice $L(S)$ be as defined in~\cite[eq. 13]{ordentlich2023bounds}. We have that $L\subset L(S)$ and $|L(S)/L|=p^k$. Fix some $c_1>2$ and $c_0>0$ for which~\eqref{eq:c0c1Pe} holds. Let $C,C'$ be the universal constants from Lemma~\ref{lem:coveringdensityimplications} and let
\begin{align}
\kappa=\frac{\sqrt{dD}}{(1+C\frac{\log{d}}{d})p^{-\frac{k}{d}}\reff(1)}. 
\label{eq:kappadef}
\end{align}
We define
\begin{align}
\Lambda_f=\kappa L(S),~~~\Lambda_c=\kappa L.    
\end{align}
Thus, $|\Lambda_f/\Lambda_c|=|L(S)/L|=p^k$, and Item~\ref{itemL1:rate} holds with probability $1$. Let $P_{\tilde{U}}= P_{U|\|U\|^2\leq r_U}$ and define $\eps'=\sqrt{\eps^2+2C\frac{\log d}{d}}$ such that $\frac{\eps'^2}{2}=\frac{\eps^2}{2}+C\frac{\log d}{d}$. Define the events 
\begin{align}
E^{\text{quant}}_f=\{p^{\frac{k}{d}}L(S)\in E_{c_0,c_1}\}~~,~~E^{\text{quant}}_c=\{L\in E_{c_0,c_1}\},~~E^{\text{code}}_c=\{L\in E^{P_{\tilde{U}},r_U,D}_{\alpha,\beta,\kappa,\eps'}\}. 
\label{eq:all3events}
\end{align}
and assume they all occur (later we will show that if $L\sim\mu_d$ and $S\sim\Unif(\mathrm{Gr}_{d,k}(\mathbb{F}_p)$ are statistically independent, the three events indeed occur simultaneously with positive probability). 

From Lemma~\ref{lem:coveringdensityimplications} we have that
\begin{align}
\rcov(\Lambda_f)&=\rcov\left(\kappa p^{-\frac{k}{d}}p^{\frac{k}{d}}L(S) \right)\leq \left(1+C\frac{\log{d}}{d}\right)\kappa p^{-\frac{k}{d}}\reff(1)=\sqrt{dD},\\
\rcov(\Lambda_c)&=\rcov(\kappa L)\leq \left(1+C\frac{\log{d}}{d}\right)\kappa \reff(1)=p^{k/d}\sqrt{dD}\leq 2^R\sqrt{dD},\\
\sigma^2(\Lambda_f)&=\sigma^2\left(\kappa p^{-\frac{k}{d}}p^{\frac{k}{d}}L(S) \right)\leq\frac{1}{d}\left(\left(1+C\frac{\log{d}}{d}\right)\kappa p^{-\frac{k}{d}}\reff(1) \right)^2=D\\
\frac{1}{d}\|R(\Lambda_f)\|_F^2&=\frac{1}{d}\|R(\kappa p^{-\frac{k}{d}}p^{\frac{k}{d}}L(S))\|_F^2\leq \left( \frac{1}{d}\left( \left(1+C\frac{\log{d}}{d}\right)\kappa p^{-\frac{k}{d}} \reff(1)\right)^2\right)^2\left(1+C'\frac{\log^3 d}{d}\right)\nonumber\\
&=D^2\left(1+C'\frac{\log^3 d}{d}\right)
\end{align}
Thus, $\Lambda_f$ and $\Lambda_c$ satisfy Items~\ref{itemL1:rcov}-\ref{itemL1:FrobCov}, with $C_2=C'$. To show that Item~\ref{item:L1pe} holds, let $\tilde{Z}\sim\Unif(\sqrt{dD}\m{B})$ be statistically independent of $U$ and write
\begin{align}
\Pr(\alpha U+\beta Z\notin \m{V}_{\Lambda_c})&=\Pr(\alpha U+\beta Z\notin \m{V}_{\kappa L})\leq \Pr(\|U\|^2> d r_u)+\Pr(\alpha U+\beta Z\notin \m{V}_{\kappa L}|~\|U\|^2\leq d r_u)\\
&\leq \Pr(\|U\|^2> d r_u)+d^C \Pr(\alpha U+\beta \tilde{Z}\notin \m{V}_{\kappa L}|~\|U\|^2\leq d r_u)\label{eq:boundbyunifonball}\\
&\leq \Pr(\|U\|^2> d r_u)+6e^{-d\frac{\eps^2}{2}},\label{eq:coarsegoodforcoding}
\end{align}
where~\eqref{eq:boundbyunifonball} follows from Item~\ref{item:Lpe} in Lemma~\ref{lem:coveringdensityimplications} and the definition of $\kappa$ in~\eqref{eq:kappadef}, and~\eqref{eq:coarsegoodforcoding} follows since $L\in E^{P_{\tilde{U}},r_U,D}_{\alpha,\beta,\kappa,\eps'}$ and since $d^C e^{-d\frac{\eps'^2}{2}}=e^{-d\frac{\eps^2}{2}}$. 

\medskip

It therefore remains to show that there exist $L\in\m{L}_d$ and $S\in\mathrm{Gr}_{d,k}(\mathbb{F}_p)$ for which $L$ is in $E_{c_0,c_1}$ and $E^{P_{\tilde{U}},r_U,D}_{\alpha,\beta,\kappa,\eps'}$ and $p^{\frac{k}{d}}L(S)$ is in $E_{c_0,c_1}$. To that end, let $L\sim\mu_d$ and let $S\sim\Unif(\mathrm{Gr}_{d,k}(\mathbb{F}_p))$ be statistically independent of $L$. By~\cite[Proposition 2.2]{orw22} we have that $p^{\frac{k}{d}}L(S)\sim\mu_d$. Thus, 
\begin{align}
\Pr&\left(L\in E_{c_0,c_1}, p^{\frac{k}{d}}L(S)\in E_{c_0,c_1}, L\in E^{P_{\tilde{U}},r_U,D}_{\alpha,\beta,\kappa,\eps'}  \right)\nonumber\\
&\geq 1- \Pr\left (L\notin E_{c_0,c_1}\right)-\Pr\left (p^{\frac{k}{d}}L(S)\notin E_{c_0,c_1}\right)- \Pr\left (L\notin E^{P_{\tilde{U}},r_U,D}_{\alpha,\beta,\kappa,\eps'}\right) \\
&=1-2\mu_d(\{L\notin E_{c_0,c_1}\})-\mu_d(\{L\notin E^{P_{\tilde{U}},r_U,D}_{\alpha,\beta,\kappa,\eps'}\})\\
&>\frac{1}{3}-\mu_d(\{L\notin E^{P_{\tilde{U}},r_U,D}_{\alpha,\beta,\kappa,\eps'}\}),
\end{align}
where we have used the union bound in the first inequality and Lemma~\ref{lem:ORW} in the last inequality. We will be able to use Lemma~\ref{lem:goodforcoding} with $r_b=D$ to deduce that $\mu_d(\{L\notin E^{P_{\tilde{U}},r_U,D}_{\alpha,\beta,\kappa,\eps'}\})<\frac{1}{3}$ and complete the proof, once we show that $\kappa$ in~\eqref{eq:kappadef} is greater than the right hand side of~\eqref{eq:kappamin}. To that end, we write 
\begin{align}
\frac{e^{\frac{\eps'^2}{2}}\sqrt{1+\eps'}\frac{\sqrt{d(\alpha^2 r_U +\beta^2 D)}}{\reff(1)}}{\frac{\sqrt{dD}}{(1+C\frac{\log{d}}{d})p^{-\frac{k}{d}}\reff(1)}}&=p^{-\frac{k}{d}}\sqrt{\alpha^2 \frac{r_U}{D}+\beta^2}e^{\frac{\eps'^2}{2}}\sqrt{1+\eps'}  (1+C\frac{\log{d}}{d})\\
&\leq 2^{-\left(R-\frac{1}{2}\log(\beta^2+\alpha^2\frac{r_U^2}{D}\right)}\cdot 2^{\frac{1}{d}}e^{\frac{\eps'^2}{2}}\sqrt{1+\eps'}  (1+C\frac{\log{d}}{d})\\
&\leq 2^{-C_1\left(\eps+\frac{\log d}{d}\right)}\cdot 2^{\frac{1}{d}}e^{\frac{\eps'^2}{2}}\sqrt{1+\eps'}  (1+C\frac{\log{d}}{d})\\
&<1
\end{align}
where the last inequality holds for some universal $C_1$ large enough.
\end{proof}

\begin{proof}[Proof of Lemma~\ref{lem:coveringdensityimplications}]
Let $C_0>0$ be a universal constant satisfying 
\begin{align}
e^{\frac{1}{d}\ln (c_0d^{c_1})}< 1+C_0\frac{\log{d}}{d},~~~\forall d.   
\end{align}
Thus, for $L\in E_{c_0,c_1}$ we have
\begin{align}
\frac{V_d \rcov(L)^d}{V_d \reff(L)^d}< c_0 d^{c_1}\Longleftrightarrow \frac{\rcov(L)}{\reff(L)}< e^{\frac{1}{d}\ln (c_0d^{c_1})}\Longrightarrow \frac{\rcov(L)}{\reff(L)}< 1+C_0\frac{\log{d}}{d}
\end{align}
Since $\rcov(\kappa L)=\kappa\rcov(L)$ and  $\reff(\kappa L)=\kappa\reff(L)$, Item~\ref{itemL:rcov} holds for any $C\geq C_0$. Item~\ref{itemL:D} follows since $\sigma^2(\kappa L)\leq \frac{1}{d}\rcov^2(\kappa L)$ for any lattice $L\subset \RR^d$.

\medskip

To prove Item~\ref{item:Lpe}, let $f_Z$ and $f_{\tilde{Z}}$ be the densities of the random variables $Z$ and $\tilde{Z}$, respectively. By Item~\ref{itemL:rcov} we have that for any $L\in E_{c_0,c_1}$ the support $\m{V}_{\kappa L}$ of $Z$ is contained in the support $\left(1+C_0\frac{\log{d}}{d}\right)\kappa\reff(L)\m{B}$ of $\tilde{Z}$. Thus, for any $z\in \left(1+C_0\frac{\log{d}}{d}\right)\kappa\reff(L)\m{B}$ we have that
\begin{align}
f_Z(z)\leq \frac{\Vol\left(\left(1+C_0\frac{\log{d}}{d}\right)\kappa \reff(L)\m{B} \right)}{\Vol\left(\m{V}_{\kappa L } \right)}f_{\tilde{Z}}(z)\leq  \left(1+C_0\frac{\log d}{d} \right)^{d}  f_{\tilde{Z}}(z)\leq d^C f_{\tilde{Z}}(z).
\end{align}
It therefore follows that
\begin{align}
f_{\alpha U+\beta Z}(x)\leq  d^C f_{\alpha U+\beta\tilde{Z}},~~~\forall x\in\RR^d,
\end{align}
and therefore for any $\m{A}\subset\RR^d$
\begin{align}
\Pr(\alpha U+\beta Z\notin \m{A})\leq    d^C \Pr(\alpha U+\beta \tilde{Z}\notin \m{A}).
\end{align}

\medskip

We move on to proving Item~\ref{itemL:FrobCov}. Let $\eig(R(\kappa L))=(\lambda_1,\ldots,\lambda_d)$ be the eigenvalues of $R(\kappa L)$, such that $\sum_{i=1}^d \lambda_i=d\sigma^2(\kappa L)$, and $\|R(\kappa L)\|_F^2=\sum_{i=1}^d \lambda_i^2$. Let $Z~\sim\Unif(\m{V}_{\kappa L})$. Since $\Vol(\m{V}_{\kappa L})=V_d \reff^d(\kappa L)$, we have that
\begin{align}
\log V_d \reff^d(\kappa L)=h(Z)\leq \frac{1}{2}\log\det \left((2\pi e)R(\kappa L)\right)=\sum_{i=1}^d\frac{1}{2}\log(2\pi e \lambda_i),    
\end{align}
where the upper bound follows since the Gaussian distribution maximizes entropy under covariance constraints. Thus,
\begin{align}
\sum_{i=1}^d\log(\lambda_i)&\geq 2\log V_d \reff^d(\kappa L)-d\log(2\pi e)\\
&=2\log \left(V_d^{2/d} \reff^2(\kappa L)\right)^{d/2}-d\log(2\pi e)\\
&=d\log\left(\frac{V_d^{2/d} \reff^2(\kappa L)}{2\pi e} \right)\\
&=d\log\left(\frac{1}{2\pi e}\frac{V_d^{2/d}  \reff^2(\kappa L)}{\sigma^2(\kappa L)} \sigma^2(\kappa L)\right)\\
&=d\log \sigma^2(\kappa L)-d\log 2\pi e \cdot N(\kappa L),
\end{align}
where $N(\kappa L)=\frac{\sigma^2(\kappa L)}{V^{2/d}_d \reff^2(\kappa L)}$ is the normalized second moment (NSM) of a lattice $\kappa L$ in $\RR^d$. 

Denote $\delta(\kappa L)=2\pi e \cdot G(\kappa L)-1$, and $\|R(\kappa L)\|_2=\max_{i=1,\ldots,n}\lambda_i$. We now show that
\begin{align}\label{eq:grx1}
\frac{1}{d}\sum_{i=1}^d \lambda^2_i\leq 
(\sigma^2(\kappa L))^2+ 2  \delta(\kappa L) \|R(\kappa L)\|^2_2
\end{align}
Indeed, in the range $0 < x \le \|R(\kappa L)\|_2$ the second derivative of $x \mapsto \ln x$ is
upper-bounded by $-{1\over \|R(\kappa L)\|_2^2}$. Thus, from Taylor's expansion around $x=\sigma^2(\kappa L)$ we have 
$$ \ln \lambda_i \le \ln \sigma^2(\kappa L) + {\lambda_i - \sigma^2(\kappa L)\over
\sigma^2(\kappa L)} - {1\over 2\|R(\kappa L)\|_2^2}(\lambda_i - \sigma^2(\kappa L))^2\,.$$
Summing over $i$ and using the facts that a) ${1\over d}\sum_i \lambda_i = \sigma^2(\kappa L)$ and
b)  ${1\over d}\sum_{i=1}^d \ln \lambda_i \ge \ln
\sigma^2(\kappa L) - \ln(1+\delta(\kappa L))$ we get after rearranging terms
$$ 2 \|R(\kappa L)\|_2^2 \ln(1+\delta(\kappa L)) \ge {1\over d} \sum_i(\lambda_i - \sigma^2(\kappa L))^2 = {1\over
d} \sum_i\lambda_i^2 - (\sigma^2(\kappa L))^2\,,$$
completing the proof of~\eqref{eq:grx1}.

To complete our statement, it remains to show that $\delta(\kappa L)\leq c_2\frac{\log d}{d}$ and $\|R(\kappa L)\|_2\leq c_3 \log d\frac{\rcov^2(\kappa L)}{d}$, for some universal constants $c_2,c_3>0$. This will imply, by~\eqref{eq:grx1}, that
\begin{align}
\frac{1}{d}\|R(\kappa L)\|_F^2&\leq (\sigma^2(\kappa L))^2+2c_2 c^2_3 \frac{\log d}{d}\cdot \log^2(d)\cdot \left(\frac{\rcov^2(\kappa L)}{d} \right)^2\\
&\leq \left( \frac{1}{d}\left( \left(1+C\frac{\log{d}}{d}\right)\kappa\reff(L)\right)^2\right)^2\cdot \left(1+2 c_2 c^2_3 \frac{\log^3 d}{d} \right),   
\end{align}
where we have used Items~\ref{itemL:rcov} and~\ref{itemL:D} in the last inequality.

For bounding $\delta(\kappa L)$, we use $\sigma^2(\kappa L)\leq \frac{1}{d}\rcov^2(\kappa L)$ and write
\begin{align}
 2\pi e \cdot N(\kappa L)\leq \frac{2\pi e}{d V_d^{2/d}}\frac{\rcov^2(\kappa L)}{\reff^2(\kappa L)} \leq \frac{2\pi e}{d V_d^{2/d}} \left(c_0 d^{c_1}\right)^{2/d}\leq 1+c_2\frac{\log d}{d},
\end{align}
where in the first inequality follows since $L\in E_{c_0,c_1}$, and in the second inequality we have used the fact that $\frac{2\pi e}{d V_d^{2/d}}=1+O(\frac{\log d}{d})$.

We now upper bound the operator norm $\|R(\kappa L)\|_2=\max_{v\in \mathbb{S}^{n-1}}\EE(v^\top Z)^2$, for $Z\sim\Unif(\m{V}_{\kappa L})$. To that end, 
let $\delta_d=\sqrt{\frac{2(C+1)\log d}{d}}$. For any $v\in\mathbb{S}^{n-1}$ we have that 
\begin{align}
\EE(v^\top Z)^2& =\EE\left[\|v\|^2 \|{\|Z\|^2}\cos^2(\angle(v,Z))\right]\\
&\leq \rcov^2(\kappa L)  \EE\left[\cos^2(\angle(v,Z))\right]\\
&\leq \rcov^2(\kappa L) \left(\delta^2_d+\Pr(|\cos(\angle(v,Z)|>\delta_d)\right)
\end{align}
Let $\m{A}=\{z\in\RR^d~:~|v^\top z|\leq \delta_d\}$, and $\tilde{Z}\sim\Unif\left(\left((1+C\frac{\log{d}}{d}\right)\kappa\reff(L)\m{B}\right)$. From Item~\ref{item:Lpe}, applied with $\alpha=0$ and $\beta=1$, we have
\begin{align}
\Pr(|\cos(\angle(v,Z)|>\delta_d)=\Pr(Z\notin \m{A})&\leq d^C \Pr(\tilde{Z}\notin \m{A})=d^C  \Pr(|\cos(\angle(v,\tilde{Z})|>\delta_d)\\
&=d^C  \Pr(|\cos(\angle(v,Z_B)|>\delta_d),  
\end{align}
where $Z_B\sim\Unif(\m{B})$. Thus,
\begin{align}
\EE(v^\top Z)^2&\leq \rcov^2(\kappa L) \left(\delta^2_d+d^C\Pr(|\cos(\angle(v,Z_B)|>\delta_d)\right)\\
&\leq \rcov^2(\kappa L)\left(\frac{2(C+1)\log d}{d}+d^C \Pr\left(|\cos(\angle(v,Z_B)|)>\sqrt{\frac{2(C+1)\log d}{d}}\right)\right)\\
&\leq c_3 \log d\cdot \frac{\rcov^2(\kappa L)}{d},\label{eq:Pcone}
\end{align}
where~\eqref{eq:Pcone} follows since $\Pr\left(|\cos(\angle(v,Z_B)|\right)>\sqrt{\frac{2(C+1)\log d}{d}})\leq e^{-\frac{2(C+1)\log d}{2}}=d^{-(C+1)}$, which follows from the fact that a spherical cap of height $1-\eps$ has volume (w.r.t. to $\mathbb{S}^{n-1})$ at most $e^{-d\eps^2/2}$ for $0<\eps<\frac{1}{\sqrt{2}}$~\cite[Section 7.2]{boucheron2003concentration}.     
\end{proof}

\begin{proof}[Proof of Lemma~\ref{lem:goodforcoding}]
For $r>0, \kappa>0$ and $x\in\RR^d$, let
\begin{align}
N^*(\kappa L,r\m{B},x)=\left|((\kappa L\setminus \{0\})+x)\cap r\m{B})\right|.    
\end{align}
Note that for any $r>0$  we have the inclusion of events
\begin{align}
\left\{x\in r\m{B},N^*(\kappa L,r\m{B},-x)=0 \right\}\subset\{x\in\m{V}_{\kappa L}\},
\end{align}
which implies that
\begin{align}
\{x\notin\m{V}_{\kappa L}\}\subset\left\{x\notin r\m{B}\right\}\cup\left\{N^*(\Lambda_{\kappa L},r\m{B},-x)>0 \right\}.
\end{align}
Let $X=\alpha U+\beta \tilde{Z}$. For any given lattice $\kappa L$, we therefore have that
\begin{align}
P_e(L)=\Pr(X\notin\m{V}_{\kappa L})&\leq \Pr(X\notin r\m{B})+\Pr\left(N^*(\kappa L,r\m{B},-X)>0\right)\\
&\leq \Pr(X\notin r\m{B})+\EE_X [N^*(\kappa L,r\m{B},-X)],
\end{align}
where the last inequality follows since $\Pr(N>0)\leq \EE[N]$ for a random variable $N$ supported on the non-negative integers. Taking the expectation with respect to $L\sim \mu_n$ gives
\begin{align}
\EE[P_e(L)]\leq \Pr(X\notin r\m{B})+\EE_L\EE_X [N^*(\kappa L,r\m{B},-X)],
\end{align}
Applying Siegel's summation formula, we have
\begin{align}
\EE_{\kappa L}\EE_X [N^*(\kappa L,r\m{B},-X)]   =\EE_X[\EE_{\kappa L} [N^*(\kappa L,r\m{B},-x)|X=x]]=\frac{\Vol(r\m{B})}{\mathrm{covol}(\kappa L)}=\left(\frac{r}{\kappa \reff(1)}\right)^d, 
\end{align}
so that
\begin{align}
\EE[P_e(L)]\leq \Pr(X\notin r\m{B})+\left(\frac{r}{\kappa \reff(1)}\right)^d.
\end{align}
Let $r^2=d(\alpha^2 r_U +\beta^2 r_b)(1+\eps)$, for $0<\eps<\frac{1}{\sqrt{2}}$ and let us upper bound the first term. Recalling that $X=\alpha U+\beta \tilde{Z}$, we have
\begin{align}
\Pr(X\notin r\m{B})&=\Pr((\alpha U+\beta \tilde{Z})^2>r^2)\\
&=\Pr(\alpha^2 \|U\|^2+\beta^2 \|\tilde{Z}\|^2+2\alpha \beta U^\top Z>r^2)\\
&\leq \Pr\left(d(\alpha^2 r_U+\beta^2 r_b)+2\alpha \beta U^\top \tilde{Z}>d(\alpha^2 r_U +\beta^2 r_b)(1+\eps)\right)\\
&=\Pr\left(2\alpha \beta U^\top \tilde{Z}>d(\alpha^2 r_U +\beta^2 r_b)\eps\right)\\
&\leq \Pr\left(2\alpha \beta d \sqrt{r_U r_b} \cos(\angle(U, \tilde{Z}))>d(\alpha^2 r_U +\beta^2 r_b)\eps\right)\\
&=\Pr\left( \cos(\angle(U, \tilde{Z}))>\frac{d(\alpha^2 r_U +\beta^2 r_b)}{2d \alpha \beta \sqrt{r_U r_b}}\eps\right)\\
&\leq\Pr\left( \cos(\angle(U, \tilde{Z}))>\eps\right)\label{eq:amgm}\\
&\leq e^{-d\eps^2/2}
\end{align}
where~\eqref{eq:amgm} follows from $(\alpha\sqrt{r_u}-\beta\sqrt{r_b})^2\geq 0$, and the last inequality follows from the fact that a spherical cap of height $1-\eps$ has volume (w.r.t. to $\mathbb{S}^{n-1})$ at most $e^{-d\eps^2/2}$ for $0<\eps<\frac{1}{\sqrt{2}}$~\cite[Section 7.2]{boucheron2003concentration}.
We have therefore obtained that
\begin{align}
\EE[P_e(L)]\leq e^{-d\eps^2/2}+  e^{-d\log\frac{\kappa \reff(1)}{r}}.
\end{align}
Taking 
\begin{align}
\kappa>e^{\frac{\eps^2}{2}}\frac{r}{\reff(1)}=(1+\eps)e^{\frac{\eps^2}{2}}\frac{\sqrt{d(\alpha^2 r_U +\beta^2 r_b)}}{\reff(1)},
\end{align} 
gives
\begin{align}
\EE[P_e(L)]\leq 2e^{-d\eps^2/2}.
\end{align}
Thus, using Markov's inequality, we obtain the claimed result.
\end{proof}

\section{Bounding the Effect of Overload Events}
\label{appendix:fromidealtoreal}

\begin{proof}[Proof of Proposition~\ref{prop:fromidealtoreal}]
Let
\begin{align}
e_{\OL}=\hat{U}_{[d]}^\top \hat{V}_{[d]}-\Uidd^\top \Vidd,    
\end{align}
such that
\begin{align}
e=\eid+\alpha e_{\OL}.
\end{align}
With probability $1$, we have that 
\begin{align}
|e_{\OL}|&<|\hat{U}_{[d]}^\top\hat{V}_{[d]}|+|\Uidd^\top \Vidd|\leq \|\hat{U}_{[d]}\|\cdot \|\hat{V}_{[d]}\|+\|\Uidd\|\cdot\|\Vidd\|\\
&\leq \rcov^2(\Lambda_c)+(\sqrt{n}+\rcov(\Lambda_f))^2,
\end{align}
where the last inequality follows since $\hat{U}_{[d]},\hat{V}_{[d]}\in\m{V}_{\Lambda_c}$ by definition, and since
\begin{align}
\|\Uidd\|=\|U_{[d]}+Z_1\|\leq \|U_{[d]}\|+\|Z_1\|\leq \|U\|+\|Z_1\|\leq \sqrt{n}+\rcov(\Lambda_f),\label{eq:uiddbound}    
\end{align}
and $\|\Vidd\|$ is bounded similarly. With probability $1$, we also have that
\begin{align}
|\eid|&=|\alpha\Uidd^\top \Vidd -U^\top V|\leq \alpha  |\Uidd^\top \Vidd|+|U^\top V|\leq \alpha \|\Uidd\|\cdot \|\Vidd\|+|\rho| n\\
& \leq \alpha (\sqrt{n}+\rcov(\Lambda_f))^2+|\rho| n\label{eq:uiddnormbound1}\\               
&\leq (\sqrt{n}+\rcov(\Lambda_f))^2+ n,\label{eq:alpharhobound}    
\end{align}
where~\eqref{eq:uiddnormbound1} follows from~\eqref{eq:uiddbound}, and~\eqref{eq:alpharhobound} from $0\leq \alpha,|\rho|\leq 1$.

Note that $\max\{\sqrt{n},\rcov(\Lambda_f)\}\leq M(\Lambda_c)$, since $\Lambda_c\subset\Lambda_f$, and therefore $\m{V}_{\Lambda_f}\subset \m{V}_{\Lambda_c}$. It therefore follows that with probability $1$ we have
\begin{align}
|e_{\OL}|\leq 5 M^2(\Lambda_c),~~|\eid|\leq 5 M^2(\Lambda_c).    
\end{align}
Consequently, 
\begin{align}
\EE(e^2)&=\EE(\eid^2)+\alpha^2\EE(e^2_{\OL})+2\alpha\EE[e_{OL}\eid]\\
&\leq \EE(\eid^2)+75\Pr(\OL)M^4(\Lambda_c),
\end{align}
as claimed, where we have used $0\leq \alpha\leq 1$ again.
\end{proof}

\section{Projections of Random Uniform Orthogonal Vectors}
\label{appendix:projections}

Recall that $S\sim\Unif(\mathrm{O}_n(\RR))$ and we denote $U=\sqrt{n}S_1$, $Z=\sqrt{n}S_2$ and $V=\rho U+\sqrt{1-\rho^2}Z$. Furthermore, $d=\lfloor \kappa n \rfloor$, and we denote $U_{[d]}=(U_1,\ldots,U_d)^\top$ and similarly $V_{[d]}=\sqrt{\rho} U_{[d]}+\sqrt{1-\rho^2} Z_{[d]}$.
We have
\begin{align}
\EE[(U_{[d]}^\top V_{[d]})^2]&=\EE[(\rho\|U_{[d]}\|^2+\sqrt{1-\rho^2} U_{[d]}^\top Z_{[d]})^2]=\rho^2\EE\|U_{[d]}\|^4+(1-\rho^2)\EE[(U_{[d]}^\top Z_{[d]})^2]\label{eq:sphereprojderivation}
\end{align}
where the last equation follows since $\EE[\|U_{[d]}\|^2 U_{[d]}^\top Z_{[d]}]=0$ from symmetry. For $d=n$ we trivially have $\EE\|U\|^4=n^2$ and then $\EE[(U^\top Z)^2]=0$. We proceed to compute $\EE\|U_{[d]}\|^4$ and then $\EE[(U_{[d]}^\top Z_{[d]})^2]$ for general $d\leq n$.

It holds that (using the fact that $\EE(U^2_i)=1$) 
\begin{align}
\EE\|U_{[d]}\|^2&=d.
\end{align}
By symmetry, we also have
\begin{align}
\EE(U_{[d]}^\top  Z_{[d]})&=0.
\end{align}
We will further use the fact that $\EE(U^4_i)=\frac{3n}{n+2}$~\cite{stam1982limit}.
To compute $\EE\|U_{[d]}\|^4$, we first note that, by symmetry
\begin{align}
 n^2=\EE\|U\|^4=\EE\left(\sum_{i=1}^n U_i^2\right)^2=n \EE(U_1^4)+n(n-1)\EE(U_1^2 U_2^2),    
\end{align}
which implies
\begin{align}
\EE(U_1^2 U_2^2)=\frac{n-\EE(U_1^4)}{n-1} =\frac{n}{n+2}.
\end{align}
With this, we can write
\begin{align}
\EE\|U_{[d]}\|^4= \EE\left(\sum_{i=1}^d U_i^2\right)^2=d \EE(U_1^4)+d(d-1)\EE(U_1^2 U_2^2)=\frac{n}{n+2}d(d+2).   \label{eq:Ud4Exp}
\end{align}

We move on to calculate $\EE(U_{[d]}^\top  Z_{[d]})^2$. We have that
\begin{align}
\EE(U_{[d]}^\top  Z_{[d]})^2=\EE\left( \sum_{i=1}^d U_i Z_i \right)^2=\sum_{i=1}^d \EE(U_i^2 Z_i ^2)+\sum_{j\neq i}\EE(U_i U_j Z_i Z_j)=d\xi +d(d-1)\nu,
\label{eq:partialsquaredcorrelation1}
\end{align}
where
\begin{align}
\xi=\EE(U_1^2 Z_1^2),~~~\nu=\EE(U_1 U_2 Z_1 Z_2),   
\end{align}
and the last equality in~\eqref{eq:partialsquaredcorrelation1} follows by symmetry. Taking $d=n$, we get that $U_{[n]}^\top  Z_{[n]}=U^\top  Z=0$ w.p. $1$. Invoking~\eqref{eq:partialsquaredcorrelation1} therefore gives 
\begin{align}
0=n\xi+n(n-1)\nu\Longrightarrow \nu=-\frac{\xi}{n-1}.  
\end{align}
Substituting this into~\eqref{eq:partialsquaredcorrelation1}, we obtain
\begin{align}
\EE(U_{[d]}^\top  Z_{[d]})^2=d\left(1-\frac{d-1}{n-1} \right)\xi=\frac{d(n-d)}{n-1}\xi.  
\label{eq:partialsquaredcorrelation2}
\end{align}
In order to compute $\xi$, define $e=U-Z$. Note that the symmetry and orthogonality of $U$ and $Z$ implies that $e\sim\Unif(\sqrt{2n}\mathbb{S}^{n-1})$, where $\mathbb{S}^{n-1}$ is the unit sphere in $\RR^n$. It therefore follows that
\begin{align}
 \EE(e_1^4)=4\EE(U_1^4).
\end{align}
On the other hand
\begin{align}
\EE(e_1^4)=\EE(U_1-Z_1)^4=\sum_{i=0}^4 {{4}\choose{i}}\EE(U_1^i Z_1^{4-i})=\EE(U_1^4)+\EE(Z_1^4)+6\EE(U_1^2 Z_1^2)+4\EE(U_1 Z_1^3)+4\EE(U_1^3 Z_1).    
\end{align}
By symmetry, we clearly have that $\EE(Z_1^4)=\EE(U_1^4)$. We claim that $\EE(U_1 Z_1^3)=0$. To see this, note that given $Z$, the distribution of $U$ is invariant to negation (in other words $p_{U|Z=z}(u)=p_{U|Z=z}(-u)$). By symmetry, this also implies that $\EE(U_1^3 Z_1)=0$.
We therefore have that
\begin{align}
&4\EE(U_1^4)=\EE(e_1^4)=2\EE(U_1^4)+6\xi\\
\Longrightarrow &\xi=\frac{\EE(U_1^4)}{3}=\frac{n}{n+2}.
\end{align}
Substituting this into~\eqref{eq:partialsquaredcorrelation2}, we obtain
\begin{align}
\EE(U_{[d]}^\top  Z_{[d]})^2=d\cdot\frac{n(n-d)}{(n+2)(n-1)}. \label{eq:UdZd2Exp}   
\end{align}

Consequently, by~\eqref{eq:sphereprojderivation},~\eqref{eq:Ud4Exp} and~\eqref{eq:UdZd2Exp}, we have
\begin{align}
\EE[(U_{[d]}^\top V_{[d]})^2]&=\frac{nd}{n+2}\left(\rho^2(d+2)+(1-\rho^2)\frac{n-d}{n-1}\right)\\
&=\frac{nd}{n+2}\left(\rho^2\frac{n(d+1)-2}{n-1}+\frac{n-d}{n-1}\right)\\
&\leq \rho^2 n\frac{d(d+1)}{n}+\frac{d(n-d)}{n},
\end{align}
where in the last inequality we have used the fact that $\frac{n}{(n-1)(n+2)}\leq \frac{1}{n}$ for all $n\geq 2$. This establishes our claim.


\section{Tail Probability of $U_{[d]}$}
\label{appendix:projUtail}

\begin{proposition}
Let $U$ be uniformly distributed on $\sqrt{n}\mathbb{S}^{n-1}$, and let $U_{[d]}=(U_1,\ldots,U_d)^\top$ be its projection on the first $1\leq d\leq n$ coordinates. Then, for any $0<\eps<1$
\begin{align}
\Pr(\|U_{[d]}\|^2>(1+\eps) d)\leq 2e^{-\left(\frac{\eps}{1+\eps}\right)^2\frac{1}{24}d}\leq 2e^{-\frac{\eps^2}{96}d} 
\end{align}
\label{prop:projUtail}
\end{proposition}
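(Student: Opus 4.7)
The plan is to realize $U$ as a normalized Gaussian vector. Let $G\sim\mathcal{N}(0,I_n)$; then $U$ has the same law as $\sqrt{n}\,G/\|G\|$, so that
\[
\|U_{[d]}\|^2 \;=\; n\,\frac{X}{X+Y}, \qquad X\eqdef\|G_{[d]}\|^2\sim\chi^2_d,\quad Y\eqdef\|G_{[d+1:n]}\|^2\sim\chi^2_{n-d},
\]
with $X\indep Y$. Since $\|U_{[d]}\|^2\le\|U\|^2=n$ always, the event $\{\|U_{[d]}\|^2>(1+\eps)d\}$ is empty whenever $(1+\eps)d>n$, so I may assume $(1+\eps)d\le n$. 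Under that assumption, the event is equivalent to
\[
\bigl(n-(1+\eps)d\bigr)X \;>\; (1+\eps)d\cdot Y.
\]

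Next I would apply a Chernoff bound with a single parameter $\lambda>0$, using independence and the chi-squared MGF $\EE e^{\lambda\chi^2_k}=(1-2\lambda)^{-k/2}$, to obtain
\[
\Pr\bigl(\|U_{[d]}\|^2>(1+\eps)d\bigr)\;\le\;\bigl(1-2\lambda(n-(1+\eps)d)\bigr)^{-d/2}\bigl(1+2\lambda(1+\eps)d\bigr)^{-(n-d)/2}.
\]
The optimal $\lambda$ (by differentiating the log) yields the clean identities $1-2\lambda(n-(1+\eps)d)=1/(1+\eps)$ and $1+2\lambda(1+\eps)d=(n-d)/(n-(1+\eps)d)$. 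Plugging back,
\[
\ln(\text{bound}) \;=\; \tfrac{d}{2}\ln(1+\eps)+\tfrac{n-d}{2}\ln\!\left(1-\tfrac{\eps d}{n-d}\right).
\]

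To simplify, I would use the two elementary inequalities $\ln(1+\eps)\le \eps-\eps^2/(2(1+\eps))$ and $\ln(1-x)\le -x$, giving
\[
\ln(\text{bound}) \;\le\; \tfrac{d}{2}\eps-\tfrac{d\eps^2}{4(1+\eps)}-\tfrac{d\eps}{2}\;=\;-\tfrac{d\eps^2}{4(1+\eps)}.
\]
Since $\tfrac{1+\eps}{4}\ge\tfrac{1}{24(1+\eps)}$ for all $\eps\ge 0$ (equivalently $6(1+\eps)^2\ge 1$), this implies the stated bound $e^{-d(\eps/(1+\eps))^2/24}$, with room to spare, so the factor of $2$ in the statement is simply a convenient slack absorbing the trivial case. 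The second inequality $(\eps/(1+\eps))^2/24 \ge \eps^2/96$ is immediate from $\eps<1$.

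I do not anticipate a serious obstacle: the crux is just choosing the tightening of $\ln(1+\eps)$ that captures the $\eps^2/(1+\eps)$ behaviour; the lazier bound $\ln(1+\eps)\le\eps$ would make the two leading terms cancel to zero and lose the exponent entirely. An alternative route would split into $\{X>(1+\delta)d\}\cup\{Y<(1-\delta)(n-d)\}$ with $\delta=\eps/(2+\eps)$ and apply Laurent--Massart chi-squared tail bounds separately, which would naturally produce the factor of $2$; however the direct one-parameter Chernoff above is cleaner and avoids a case split on whether $n-d$ is large relative to $d$.
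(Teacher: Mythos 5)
Your proof is correct, and while it opens exactly as the paper's does -- representing $U$ as $\sqrt{n}\,G/\|G\|$ and reducing the event to a comparison $(n-(1+\eps)d)X>(1+\eps)d\,Y$ of two independent chi-squared variables -- the concentration step is genuinely different. The paper splits the event by a union bound into an upper-tail event for $X/d$ and a lower-tail event for $Y/(n-d)$, applies a separate Chernoff bound to each, and then has to choose the threshold parameter $\delta$ to equalize the two exponents, which is where the factor of $2$ and the constant $1/24$ come from. You instead apply a single exponential tilt to the linear combination $aX-bY$ and optimize the one parameter $\lambda$ in closed form (your identities $1-2\lambda a=1/(1+\eps)$ and $1+2\lambda b=(n-d)/(n-(1+\eps)d)$ check out, and the resulting $\lambda$ lies in the domain of the MGF since $2\lambda a=\eps/(1+\eps)<1$); the relaxations $\ln(1+\eps)\le\eps-\eps^2/(2(1+\eps))$ and $\ln(1-x)\le-x$ then give the exponent $-d\eps^2/(4(1+\eps))$, which dominates the stated $-d(\eps/(1+\eps))^2/24$ since this reduces to $\tfrac{1+\eps}{4}\ge\tfrac{1}{24}$ (your written form of this trivial comparison is slightly off but immaterial). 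What your route buys is the absence of any union bound, case split, or exponent balancing, no factor of $2$, and a strictly better constant (larger by a factor of at least $6(1+\eps)$); what the paper's route buys is modularity, since it only invokes off-the-shelf one-sided chi-squared tail bounds. As you correctly observe, the one subtle point is that the lazy bound $\ln(1+\eps)\le\eps$ would annihilate the exponent, so the second-order term in the logarithm is essential.
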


\begin{proof}
Let $Z\sim\m{N}(0,I_n)$ and note that $U$ has the same distribution as $\sqrt{n}\frac{Z}{\|Z\|}$. Let 
\begin{align}
X_1=\sum_{i=1}^dZ_i^2,\\
X_2=\sum_{i=d+1}^n Z_i^2,
\end{align}
and note that $X_1$ and $X_2$ are independent chi-squared random variables with $d$ and $n-d$ degrees of freedom, respectively. We therefore have
\begin{align}
\Pr\left(\|U_{[d]}\|^2>(1+\eps)d\right)&=\Pr\left(n\frac{X_1}{X_1+X_2}\geq (1+\eps)d \right)=\Pr\left((n-(1+\eps)d)X_1\geq (1+\eps)d X_2 \right)    \\
&=\Pr\left(\frac{1}{n-d}X_2\leq \left(\frac{1}{1+\eps}-\frac{d}{n-d}\frac{\eps}{1+\eps} \right)\frac{1}{d}X_1  \right)\\
&=\Pr\left(\frac{1}{n-d}X_2\leq \left(1-\frac{\eps}{1+\eps}\left(1+\frac{d}{n-d}\right) \right)\frac{1}{d}X_1  \right)\\
&=\Pr\left(\frac{1}{n-d}X_2\leq \left(1-\frac{\eps}{1+\eps}\frac{n}{n-d} \right)\frac{1}{d}X_1  \right).\label{eq:UdNormExpression}
\end{align}
Note that for $\frac{\eps}{1+\eps}\cdot \frac{n}{n-d}\geq 1$ the probability above is zero. Thus, for the remainder of the proof, we assume $\frac{\eps}{1+\eps}\cdot \frac{n}{n-d}< 1$.
For any $t>0$ we have that
\begin{align}
\Pr\left(\|U_{[d]}\|^2>(1+\eps)d\right)\leq \Pr\left(\frac{1}{d}X_1>t\right)+\Pr\left(\frac{1}{n-d}X_2\leq\left(1-\frac{\eps}{1+\eps}\frac{n}{n-d} \right)t\right).    
\end{align}
Let $t=1+\delta$ for some $0<\delta<1$, and let
\begin{align}
\delta'=1-\left(1-\frac{\eps}{1+\eps}\frac{n}{n-d} \right)(1+\delta).    
\end{align}
By standard Chenroff bounds on the tail of the chi-squared distribution
\begin{align}
\Pr\left(\frac{1}{d}X_1>t\right)=\Pr\left(\frac{1}{d}X_1>1+\delta\right)\leq \exp\left\{\frac{d}{2}\left(\ln(1+\delta)-\delta \right) \right\}\leq \exp\left\{-\frac{d\delta^2}{16} \right\},    
\end{align}
where we have used $\ln(1+\delta)-\delta<\frac{\delta^2}{8}$ for all $0<\delta<1$. Similarly, if $0<\delta'<1$ we have
\begin{align}
\Pr\left(\frac{1}{n-d}X_2<\left(1-\frac{\eps}{1+\eps}\frac{n}{n-d} \right)t\right)&=\Pr\left(\frac{1}{n-d}X_2<1-\delta'\right)\leq \exp\left\{\frac{n-d}{2}\left(\ln(1-\delta')+\delta' \right) \right\}\\
&\leq \exp\left\{-\frac{(n-d)\delta'^2}{16} \right\}.    
\end{align}
We will choose
\begin{align}
\delta=\eps'\frac{1+\eta^2}{1+\eta-\eps'(1+\eta^2)},~~\eps'=\frac{\eps}{1+\eps},~~\eta=\sqrt{\frac{d}{n-d}},    
\end{align}
such that
\begin{align}
 \frac{d\delta^2}{16}=\frac{(n-d)\delta'^2}{16}   
\end{align}
and $\delta,\delta'>0$. Note further, that
\begin{align}
\delta\geq \eps' \frac{1+\eta^2}{1+\eta}\geq \eps' 2(\sqrt{2}-1),
\end{align}
and we therefore have that
\begin{align}
 \frac{(n-d)\delta'^2}{16}=\frac{d\delta^2}{16}\geq d \eps'^2 \frac{(\sqrt{2}-1)^2}{4}\geq \frac{d \eps'^2}{24},    
\end{align}
yielding the claimed result.
\end{proof}

\end{appendices}

\bibliographystyle{IEEEtran}
\bibliography{IPRD_Bib}

\end{document}